\documentclass[11pt]{article}
 \usepackage[utf8]{inputenc} \usepackage{textcomp}
\usepackage{mathptmx} 
\usepackage[round]{natbib}
\usepackage{amssymb}
 \usepackage{booktabs}
 \usepackage{graphicx}
\usepackage{siunitx,booktabs,xcolor,colortbl}
\usepackage{graphicx}
\usepackage[flushleft]{threeparttable}
\usepackage{amsmath}
\usepackage{amsthm} 
\usepackage{array,multirow,makecell}
\usepackage{float}
\usepackage{arydshln}
\usepackage{tabularx}
\usepackage{placeins}
\DeclareGraphicsExtensions{.pdf,.png,.jpg}
\usepackage[colorlinks=true,linkcolor=blue,citecolor=blue,urlcolor=black]{hyperref}
\usepackage{color}
\usepackage{bbm}
\usepackage{stackengine}
\definecolor{Gray}{gray}{0.9}
\usepackage{hhline}
\usepackage{dcolumn}
\usepackage{mathtools}
\usepackage{pdflscape}
\usepackage{geometry}
\usepackage{cleveref}
\usepackage{array}
\newcolumntype{P}[1]{>{\centering\arraybackslash}p{#1}}
\usepackage[english]{babel}
\theoremstyle{plain}
\newtheorem{theorem}{Theorem}[section]
\newtheorem{proposition}{Proposition}[section]
\newtheorem{corollary}{Corollary}[section]
\newtheorem{lemma}[theorem]{Lemma}
\theoremstyle{definition}
\newtheorem{definition}{Definition}[section]

\theoremstyle{remark}
\newtheorem{remark}{Remark}[section]

\renewcommand{\baselinestretch}{1.2}

\usepackage{parskip} 
\usepackage{titlesec}
\titleformat{\section}{\normalfont\bfseries\scshape}{\thesection.}{0.6em}{}
\titleformat{\subsection}{\normalfont\bfseries}{\thesubsection.}{0.5em}{}
\titleformat{\subsubsection}{\normalfont\itshape}{\thesubsubsection.}{0.5em}{}
\titlespacing*{\section}{0pt}{2.4ex plus 1ex minus .2ex}{1.2ex plus .2ex}
\titlespacing*{\subsection}{0pt}{1.9ex plus .9ex minus .2ex}{0.8ex plus .2ex}
\titlespacing*{\subsubsection}{0pt}{1.4ex plus .7ex minus .2ex}{0.6ex plus .2ex}
\usepackage{fancyhdr}
\fancypagestyle{plain}{\fancyhf{}\fancyhead[R]{\footnotesize\thepage}}
\usepackage{microtype}
\usepackage{caption}
\numberwithin{figure}{section}
\numberwithin{table}{section}
\renewcommand{\arraystretch}{1.12}

\usepackage{booktabs}

\usepackage{dcolumn}
\newcolumntype{d}[1]{D..{#1}}
\usepackage{refcount}

\newcommand{\myfootnotetext}[1]{\footnotetext{#1\label{fn2:text}%
 \edef\fnmark{\getpagerefnumber{fn:mark}}%
 \edef\fntext{\getpagerefnumber{fn:text}}%
 \ifx\fnmark\fntext\else\ClassWarning{}{footnote mark and text on different pages!}\fi}}

\begin{document}
\title{\bfseries\boldmath Automation and Aging in General Equilibrium:\\
AI Capital, Fertility, and the Return to Capital}

\author{James Wabenga Yango\\
D\'epartement d'\'economique, Universit\'e Laval\\
\texttt{jaway@ulaval.ca}}

\date{}
\maketitle \thispagestyle{empty}
\vspace{-12pt}

\begin{abstract}
\renewcommand{\baselinestretch}{1.0}\selectfont
\setlength{\parskip}{0pt}
This paper develops a general equilibrium overlapping-generations model with
endogenous fertility, in which firms accumulate both physical and artificial
intelligence (AI) capital, and uses it to study the macroeconomic
transmission of two structural disturbances: an AI technology shock and a
longevity shock. The AI shock acts as a capital-demand disturbance: it
raises all rates of return, most sharply the return to AI capital,
reallocates investment from physical to AI capital, and produces a
front-loaded output expansion that decays monotonically. The longevity
shock acts as a saving-supply disturbance: it deepens the aggregate capital
stock, compresses returns and the real interest rate, and generates
hump-shaped, persistent dynamics. The two shocks move fertility in opposite
directions: AI raises it modestly through an income effect, while longevity
lowers it by strengthening the life-cycle saving motive and the cost of
child-rearing. A forecast-error variance decomposition attributes most
aggregate volatility to the longevity shock, while the AI shock dominates
the variance of the return to AI capital. Fertility is strongly
countercyclical and almost perfectly negatively correlated with hours
worked, placing household time allocation at the center of the mechanism. Robustness
checks across the capital share, the shock persistence, and the utility
specification show that only an empirically implausible labor--AI elasticity
reverses the wage and fertility signs. A welfare analysis finds the AI shock
welfare-improving under complementarity, whereas longevity produces a short-run
welfare loss that recedes as capital deepening raises wages, since households
initially compress consumption and fertility to finance a longer retirement.

\vspace{4pt}
\noindent\textbf{Keywords:} Artificial intelligence; endogenous fertility;
longevity; general equilibrium; life-cycle model; capital accumulation;
demographic transition.

\vspace{2pt}
\noindent\textbf{JEL classification:} O33, J13, J11, E22, E32, O41, J26.
\end{abstract}
\newpage
\section{Introduction}

The United States exemplifies two persistent structural transformations now
reshaping advanced economies: population aging and rapid automation driven by
artificial intelligence (AI) and robotics. The share of Americans aged 65 and over
rose to about $17\%$ in 2022 and is projected to reach roughly $23\%$ by 2050, with the
older population expanding from about 58 to 82 million \citep{census2023older}.
Over the same period the U.S.\ total fertility rate has fallen well below the
replacement level of $2.1$, from about $3.6$ children per woman at the 1960
baby-boom peak to a record-low $1.62$ in 2023 \citep{nchs2024fertility}. Automation
has diffused in parallel, and at unprecedented speed. Industrial-robot adoption has
measurably reshaped U.S.\ local labor markets \citep{acemoglu2020robots}; U.S.\
private investment in AI reached about \$286 billion in 2025, and generative AI
attained majority adoption within three years of release, diffusing faster than
personal computers or the internet \citep{aiindex2026}. A rapidly rising share of
U.S.\ firms now report using AI in production \citep{census2025ai}, and field studies
document sizable worker-level productivity gains \citep{brynjolfsson2025genai}. These trends coincide
with slowing labor-force growth, rising old-age dependency, a secular decline in the
labor share \citep{karabarbounis2014global}, and sustained capital deepening.

A growing literature links demographic structure to automation. At the
local and sectoral level, robot adoption has sizable effects on
employment, wages, and the allocation of tasks
\citep{acemoglu2020robots, graetz2018robots}. Aging economies, in turn,
face stronger incentives to automate, as labor scarcity and rising
dependency ratios raise the relative cost of labor
\citep{acemoglu2022demographics, prettner2020automation,
basso2021robocalypse}; consistent with this mechanism, the countries with
the highest old-age dependency ratios also exhibit the highest robot
densities per worker \citep{acemoglu2022demographics}. Together, these
findings indicate that demographic change is a first-order determinant of
technology adoption and capital--labor substitution. Yet the two forces
are still typically studied in isolation.

A parallel literature, organized around the task content of production,
distinguishes the channels through which automation acts on labor. In the
task-based framework of \citet{autor2003skill}, \citet{acemoglu2018race},
and \citet{acemoglu2019automation}, automation exerts a \emph{displacement}
effect, substituting capital for labor in newly automated tasks, alongside a
countervailing \emph{productivity} effect that raises labor demand as
automation lowers unit costs and expands output; whether the wage rises or
falls depends on which effect dominates \citep{autor2015jobs,
acemoglu2018race}. Recent micro-evidence on generative AI documents sizable
productivity gains for workers, concentrated among the less experienced
\citep{brynjolfsson2025genai}, reinforcing the view that AI is not uniformly
labor-replacing. Section~\ref{sec:labormarket} embeds precisely this
productivity--displacement distinction in general equilibrium and shows that
the two effects map one-for-one onto the two terms of the model's
closed-form wage elasticity, with their balance governed by the same
labor--AI elasticity that separates the complementarity and substitution
regimes.

This paper develops a life-cycle general equilibrium model that integrates
endogenous fertility, longevity, and AI-intensive production within a
unified framework. Households choose consumption, saving, and fertility
under intertemporal optimization with uncertain survival into retirement.
Firms operate a production technology that treats AI capital as a factor
distinct from physical capital, building on the automation-capital
literature \citep{prettner2019note, lankisch2019can} and the task-based
framework of \citet{acemoglu2019automation}. Whereas those models specify
automation capital as a near-perfect substitute for labor, we adopt a
nested constant-elasticity-of-substitution technology in which AI capital
substitutes for labor in automatable tasks while complementing it in the
aggregate, so that an improvement in AI productivity raises the marginal
product of labor and the real wage. The two transformations enter the
model as distinct structural disturbances, an AI technology shock and a
longevity shock, whose transmission we study analytically and
quantitatively through impulse response functions, a second-moment
analysis, and a welfare decomposition.

Three results constitute the paper's main contribution, and we state them at the outset. First,
artificial intelligence and population aging, although unrelated in origin, act as
mirror-image disturbances to a single equilibrium price: the AI shock is a
capital-demand disturbance that raises the return to capital, whereas the longevity
shock is a saving-supply disturbance that depresses it, so the two forces leave
exactly opposite imprints on returns, the real interest rate, and fertility.
Second, and most important for assessing whether these findings are robust
economic mechanisms or artifacts of a particular specification, this asymmetry is a
structural property of the capital market rather than of the calibration: we prove
that the opposite signs of the two return responses are invariant to the preference
specification, so the headline mechanism does not rest on a knife-edge assumption;
only the wage and fertility responses depend on the labor--AI elasticity of
substitution, and only beyond a threshold that lies well above standard empirical
estimates. Third, the same demand--supply logic yields a transparent welfare
ranking through a consumption-equivalent decomposition and connects the two shocks
to the dynamic efficiency of the economy. To our knowledge, this is the first
general equilibrium framework to derive these results jointly from endogenous
fertility, longevity, and AI capital.

The two modeling choices at the heart of the framework, treating AI as a
capital stock distinct from traditional capital and allowing it to complement
labor in the aggregate, are disciplined by the empirical record rather than
imposed for tractability. National accounts increasingly measure software,
databases, and information-processing equipment as a separate and
rapidly growing component of the capital stock, and the worldwide stock of
industrial robots has expanded several-fold over the past decade
\citep{ifr2023stock}. Direct measurement of the AI sector reinforces this
view: \citet{korinek2026aieconomy} estimate that, in the United States,
nominal AI compute spending grew by more than $140\%$ per year and
quality-adjusted AI output by more than $2{,}000\%$ per year in 2024 and 2025,
an accumulation of AI capital with no parallel among conventional factors and
one that motivates treating it as a distinct productive input. At the same time, the evidence that automation is
uniformly labor-replacing is mixed: robot and automation adoption has raised
productivity and, in many settings, has been associated with stable or rising
labor demand and wages in non-automated tasks
\citep{graetz2018robots, acemoglu2019automation}, consistent with an aggregate
elasticity of substitution between labor and automation capital that is finite
rather than near-infinite. Our nested-CES technology nests both possibilities:
it collapses to the labor-replacing case when labor and AI capital are highly
substitutable, and to the labor-complementing case when their elasticity of
substitution falls below a transparent threshold, which the model derives
analytically. Rather than presuming a single regime, the model identifies the
elasticity that separates them, places the empirically relevant configuration in
the complementarity region, and fully characterizes the labor-replacing regime as
a limiting case (Corollary~\ref{prop6}). The dependence of the wage and fertility
responses on this elasticity is therefore not a fragility of the analysis but
its central comparative-static object, reported transparently throughout.

The central mechanism is that both shocks transmit through the returns to
capital, but from opposite sides of the market. The AI shock operates as a
capital-demand disturbance: it raises all returns, most sharply the return
to AI capital, reallocates investment from physical to AI capital, and
lifts output on impact before it decays monotonically. The longevity shock
operates as a saving-supply disturbance: by strengthening the life-cycle
saving motive, it deepens the capital stock and depresses returns and the
real interest rate, generating persistent, hump-shaped responses. This
demand-versus-supply asymmetry is what leaves the two shocks' opposite
imprints on returns and the real interest rate.

The same asymmetry shapes fertility, which the two shocks move in opposite
directions: the AI shock mildly raises it through an income effect, while
the longevity shock lowers it as households reallocate time and saving
away from children. A variance decomposition confirms the division of
labor between the two forces: longevity accounts for the bulk of
macroeconomic volatility, whereas the AI shock matters chiefly for capital
pricing. Throughout, fertility is strongly countercyclical and almost
perfectly negatively correlated with hours worked, placing the household
time-allocation margin at the center of the transmission mechanism. A
systematic robustness analysis confirms that these conclusions reflect
structural properties of the model rather than fine-tuning: variation in
the capital share and in the persistence of the AI shock leaves the signs
of the wage, fertility, output, and consumption responses unchanged, and
only the labor--AI elasticity of substitution can reverse them, beyond a
threshold that lies well above standard empirical estimates.

The paper's central contribution is to show that automation and population
aging, usually studied in isolation, are most naturally understood as two
opposite-signed disturbances to a single equilibrium price, the return to
capital, and to trace the consequences of that asymmetry for fertility,
aggregate volatility, and welfare. Three features distinguish the framework
from existing work: AI capital enters as a factor distinct from physical
capital in a life-cycle general equilibrium; fertility is endogenous and
competes with market work for household time; and the two structural
disturbances are disciplined analytically, through comparative statics and a
welfare decomposition, as well as quantitatively. This places the paper at the
intersection of four literatures. First, it extends macroeconomic models of
demographics and growth
\citep{diamond1965national, yaari1965uncertain, blanchard1985debt} to
incorporate endogenous fertility and automation jointly. Second, it speaks to
the automation literature
\citep{acemoglu2018artificial, berg2018robots, korinek2018artificial}, which
typically abstracts from demographic structure. Third, it connects to the
emerging work on aging and automation
\citep{acemoglu2022demographics, prettner2020automation, basso2021robocalypse},
embedding demographically induced technology adoption in a general equilibrium
model with endogenous fertility. Fourth, the paper contributes a welfare
characterization, deriving an exact consumption-equivalent decomposition of each
disturbance and linking the two forces to the dynamic efficiency of the economy
\citep{diamond1965national}. Throughout, we deliberately restrict attention to a
closed economy: although demographic structure also shapes saving and
international capital flows, this focus isolates the interaction among AI,
fertility, and longevity without the confounding influence of cross-border
adjustment.

The remainder of the paper proceeds as follows. Section~2 develops the
model economy and derives the comparative-statics results. Section~3
reports the quantitative experiments. Section~4 isolates the labor-market
effects of artificial intelligence. Section~5 develops the welfare
analysis. Section~6 presents the robustness exercises. Section~7 discusses
the findings and their limitations. Section~8 concludes. A graphical
overview of the model and the main results is provided in Supplementary
Figure~S1.

\section{The model economy}
This section presents the life-cycle general equilibrium model, which features
endogenous fertility, stochastic longevity, and artificial intelligence (AI)
capital. Two overlapping generations, workers and retirees, populate the economy.
The working-age population grows at the gross fertility rate $n_t$, while a
retiree survives from one period to the next with probability
$\gamma_t \in (0,1)$, which indexes longevity. When young, households supply
labor, consume, save, and choose fertility subject to a parental time cost; when
old, retirees consume out of saving and the return on capital. Higher fertility
thus depresses saving through the time cost of child-rearing, whereas greater
longevity strengthens the retirement-saving motive and lowers fertility.

On the production side, competitive final-good firms aggregate monopolistically competitive varieties through a constant-elasticity-of-substitution technology, yielding a constant markup. Intermediate-good firms combine
physical capital, labor, and AI capital in a nested constant elasticity of substitution (CES) structure governing the
substitutability between labor and automation. AI capital substitutes for labor
in automatable tasks but, under our baseline calibration, complements it in the
aggregate, so that technological progress raises wages, consumption, and growth.
Equilibrium is characterized by household optimization, firm profit
maximization, market clearing, and the demographic transition.

\subsection{Demographic structure}
Let $N_t^{w}>0$ denote the mass of young working parents and $N_t^{r}>0$ the mass
of retirees. The economy has an overlapping-generations structure with endogenous
fertility and stochastic survival. Each young household of period $t-1$ gives rise
to $n_{t-1}>0$ young households in period $t$, so the working-age cohort obeys
\begin{equation}
  N_{t}^{w} = n_{t-1}\,N_{t-1}^{w},
\end{equation}
where $n_{t-1}$ is the gross fertility rate of the cohort that was young in period
$t-1$. A young household survives into retirement with probability
$\gamma_{t}\in(0,1)$, so the mass of retirees is
\begin{equation}
  N_{t}^{r} = \gamma_{t}\,N_{t-1}^{w}.
\end{equation}
The old-age dependency ratio is therefore
\begin{equation}
  \psi_{t} \;\coloneqq\; \frac{N_t^{r}}{N_t^{w}} = \frac{\gamma_{t}}{n_{t-1}}.
\end{equation}

\subsection{Annuity markets and accidental bequests}
The model allows for an imperfect annuity market \citep{hansen2008consumption,
ludwig2010mortality}, with the degree of annuitization indexed by
$\mu\in[0,1]$: $\mu=1$ corresponds to complete (actuarially fair)
annuitization, and $\mu=0$ to the absence of annuity markets.

Let $s_t^w>0$ denote the savings of a young household in period $t$, which
survives into retirement in $t+1$ with probability $\gamma_{t+1}\in(0,1)$.
Through mortality pooling, each survivor receives, in addition to its own
principal, the annuitized savings of non-survivors, so that retirement resources
per survivor satisfy
\begin{equation}
s_t^w+\mu\,\frac{1-\gamma_{t+1}}{\gamma_{t+1}}\,s_t^w
=\frac{\varphi_{t+1}}{\gamma_{t+1}}\,s_t^w,
\end{equation}
where 
\begin{equation}
    \varphi_{t+1}\coloneqq\gamma_{t+1}+\mu\,(1-\gamma_{t+1})\in[\gamma_{t+1},1]
\end{equation}
is the annuity factor. The effective gross return on
savings, $R_{t+1}\varphi_{t+1}/\gamma_{t+1}$, is increasing in $\mu$ and
collapses to the actuarially fair return $R_{t+1}/\gamma_{t+1}$ when $\mu=1$.

When $\mu<1$, the non-annuitized share $(1-\mu)$ of the savings of agents
who do not survive is left as accidental bequests and redistributed equally among
surviving retirees. Using $N_t^r=\gamma_t N_{t-1}^w$, the bequest received by each
retiree is
\begin{equation}
b_t\coloneqq\frac{(1-\mu)(1-\gamma_t)\,R_t\,s_{t-1}^w}{\gamma_t}
\end{equation}
where $R_t=1+r_t>0$ is the gross return on savings. Bequests vanish under full
annuitization ($\mu=1$) and are largest when annuity markets are absent ($\mu=0$).

\subsection{Households}
The representative household is young in period $t$ and, if it survives, retired
in period $t+1$. It chooses worker consumption $c_t^w$, the number of children
$n_t$, savings $s_t^w$, and retirement consumption $c_{t+1}^r$ to maximize
expected lifetime utility
\begin{align}
\label{valeur1}
\mathcal{U}_t = U\!\left(c_t^w, n_t\right) + \gamma_{t+1}\,\beta\,V\!\left(c_{t+1}^r\right),
\end{align}
where $U:\mathbb{R}_{++}^2\to\mathbb{R}$ is the felicity over worker consumption
and the number of children and $V:\mathbb{R}_{++}\to\mathbb{R}$ the felicity over
retirement consumption, both strictly increasing and strictly concave
($U_c,U_n>0$, $V'>0$; $U$ jointly concave, $V''<0$). The parameter
$\beta\in(0,1)$ is the subjective discount factor and $\gamma_{t+1}\in(0,1)$ is
the probability of surviving into retirement, which discounts retirement felicity
because $c_{t+1}^r$ is enjoyed only in the surviving state. Firms are owned by the
working-age generation, which receives dividends $d_t$.

When young, the household supplies labor, devoting a fraction $\kappa\, n_t$ of its
time endowment to child-rearing, so that effective hours are $h_t=1-\kappa\, n_t$
and effective labor income is $w_t(1-\kappa\, n_t)$. When old, surviving households
finance consumption out of the annuitized return on their savings and the
accidental bequests $b_{t+1}$ left by the non-survivors of their own cohort. The
budget constraints are
\begin{align}
c_t^w + s_t^w &\le w_t\,(1-\kappa\, n_t) + d_t, \\
c_{t+1}^r &\le \frac{R_{t+1}\,\varphi_{t+1}}{\gamma_{t+1}}\,s_t^w + b_{t+1}, \\
\varphi_{t+1} &= \gamma_{t+1} + \mu\,(1-\gamma_{t+1}),
\end{align}
where $R_{t+1}=1+r_{t+1}>0$ is the gross return on savings, $w_t>0$ is the
competitive wage, and $\varphi_{t+1}$ is the annuity factor. In line with the literature on the cost of children \citep{donni2015measuring},
$\kappa$ denotes the fraction of household time allocated to child-rearing per
child, generating a fertility-induced labor-supply wedge.

Because the non-annuitized savings of non-survivors are redistributed to the
surviving members of the same cohort, $b_{t+1}=(1-\mu)(1-\gamma_{t+1})R_{t+1}
s_t^w/\gamma_{t+1}$, and the two channels recombine into the actuarially fair
return $c_{t+1}^r\le (R_{t+1}/\gamma_{t+1})\,s_t^w$. Combining the two-period
budget constraints then yields the intertemporal lifetime constraint
\begin{equation}
\label{budgetfinal}
c_t^w + \gamma_{t+1}\,\frac{c_{t+1}^r}{R_{t+1}} \leq w_t(1-\kappa\, n_t)+d_t.
\end{equation}

Let $\Gamma_t=(c_t^w,n_t,c_{t+1}^r,\lambda_t)\in\mathbb{R}^4$,
where $\lambda_t$ denotes the Lagrange multiplier on the household's consolidated
intertemporal budget constraint. Substituting the retirement constraint
$c_{t+1}^r=(R_{t+1}/\gamma_{t+1})\,s_t^w$ into the working-period
constraint yields the present-value budget, in which the price of retirement
consumption is $\gamma_{t+1}/R_{t+1}$. Define the mapping
$F:\mathbb{R}^4\to\mathbb{R}^4$ by the first-order conditions of the household
problem:
\begin{align}
F(\Gamma_t)=
\begin{pmatrix}
U_c(c_t^w,n_t)-\lambda_t\\[4pt]
U_n(c_t^w,n_t)-\kappa\,\, w_t\,\lambda_t\\[4pt]
\beta\,\gamma_{t+1}\,V_c(c_{t+1}^r)
   -\dfrac{\gamma_{t+1}}{R_{t+1}}\,\lambda_t\\[8pt]
w_t(1-\kappa\, n_t)+d_t-c_t^w
   -\dfrac{\gamma_{t+1}}{R_{t+1}}\,c_{t+1}^r
\end{pmatrix},
\end{align}
where $U_c\equiv\partial U/\partial c_t^w$, $U_n\equiv\partial U/\partial n_t$,
and $V_c\equiv\partial V/\partial c_{t+1}^r$. The household optimum is the value
$\Gamma_t$ such that $F(\Gamma_t)=0$. Under logarithmic preferences,
$U_c=1/c_t^w$, $U_n=1/n_t$, and $V_c=1/c_{t+1}^r$.

An equilibrium is a vector $\Gamma_t^*=(c_t^w,n_t,c_{t+1}^r,\lambda_t)\in\mathbb{R}^4$
that solves the first-order system $F(\Gamma_t^*)=\mathbf{0}$, i.e.\ $\Gamma_t^*\in
F^{-1}(\mathbf{0})$, where $\lambda_t$ is the Lagrange multiplier associated with
the household's intertemporal budget constraint. Assume $F\in
C^1(\mathbb{R}^4,\mathbb{R}^4)$ and that the Jacobian matrix
$D_{\Gamma}F(\Gamma_t^*)$ is invertible. Then, by the Implicit Function Theorem,
$\Gamma_t^*$ is an isolated regular zero of $F$ and the equilibrium is locally
unique. Indeed, any equilibrium-preserving perturbation $d\Gamma\in\mathbb{R}^4$
must satisfy, to first order,
$D_{\Gamma}F(\Gamma_t^*)\,d\Gamma=\mathbf{0}$, that is,
$d\Gamma\in\ker\!\big(D_{\Gamma}F(\Gamma_t^*)\big)$; since the Jacobian is
invertible, $\ker\!\big(D_{\Gamma}F(\Gamma_t^*)\big)=\{\mathbf{0}\}$, which
confirms that the equilibrium is isolated.

The implicit function theorem characterizes the household block from the single
$C^1$ system $F(\Gamma_t^{*})=0$ with $D_{\Gamma}F(\Gamma_t^{*})$ non-singular; the
formal comparative statics (Proposition~\ref{prop1}, Corollary~\ref{prop1b}, and
Proposition~\ref{proplonge}) are collected in Appendix~\ref{app:household}.
Fertility acts as a time wedge on labor income, $w_t(1-\kappa\, n_t)$:
a higher $n_t$ contracts resources and lowers saving,
$\partial s_t^{w}/\partial n_t<0$, whereas greater longevity strengthens the
life-cycle saving motive, $\partial s_t^{w}/\partial\gamma_{t+1}>0$. Both shocks
depress consumption in both phases of life ($dc_t^{w},dc_{t+1}^{r}<0$), and
fertility falls with longevity, $dn_t/d\gamma_{t+1}<0$, as higher life expectancy
raises the opportunity cost of child-rearing. Together, these results map
demography and longevity into saving, consumption, and fertility through one
coherent equilibrium mechanism.

\subsection{Firms}
The economy comprises perfectly competitive final-good producers and a continuum
of monopolistically competitive intermediate-good firms indexed by
$i\in[0,M_t]$, where $M_t>0$ denotes the endogenous mass of active firms.
In the spirit of macroeconomic models that tie productive activity to
demographic structure \citep{bielecki2018demographics}, we link the extensive
margin of production to demographics by letting the mass of firms scale with
total population $N_t=N_t^{w}+N_t^{r}$. Because the level of $M_t$ is not separately
identified from the scale of the economy, we normalize the firm-to-population
ratio to one, so that $M_t=N_t$ and the number of active firms co-moves
one-for-one with population through the entry--exit mechanism described below.
Product variety is thus endogenous to demographics: population dynamics shape not
only factor supplies but also the number of operating firms, and hence the range
of varieties available in the economy.

\subsubsection{Final-good producers}

A representative final-good producer operates under perfect competition and
assembles the continuum of intermediate varieties $j\in[0,N_t]$ into the final
good through the CES aggregator
\begin{equation}
y_t=
\left[
\frac{1}{N_t}
\int_{0}^{N_t}
y_t(j)^{\frac{\xi-1}{\xi}}\,dj
\right]^{\frac{\xi}{\xi-1}},
\qquad \xi>1,
\end{equation}
where $y_t>0$ is aggregate final output, $y_t(j)>0$ the quantity of variety $j$,
and $\xi$ the elasticity of substitution across varieties. The $1/N_t$
normalization neutralizes the mechanical love-of-variety effect, so that variety
influences the economy through market structure rather than through a direct
productivity term.

Taking the variety prices $\{P_t(j)\}_j$ and the aggregate price index $P_t>0$ as
given, the producer chooses inputs $\{y_t(j)\}_j$ to minimize the cost of
delivering $y_t$. Cost minimization yields the demand system
\begin{equation}
y_t(j)=\left(\frac{P_t(j)}{P_t}\right)^{-\xi} y_t,
\end{equation}
and the associated aggregate price index
\begin{equation}
P_t=
\left[
\frac{1}{N_t}
\int_{0}^{N_t}
P_t(j)^{\,1-\xi}\,dj
\right]^{\frac{1}{1-\xi}},
\end{equation}
where $P_t(j)/P_t>0$ is the relative price of variety $j$.

\subsubsection{Intermediate-good producers}
There is a continuum of intermediate-good firms indexed by $j\in[0,N_t]$. In each period $t$, firm $j$ produces output $y_t(j)>0$ using physical capital $k_t(j)>0$, labor $h_t(j)>0$, and AI capital $a_t(j)>0$, according to the nested CES-Cobb-Douglas technology
\citep{prettner2019note,acemoglu2019automation,lankisch2019can}:
\begin{equation}
y_t(j)=k_t(j)^{\alpha}
\Big[h_t(j)^{\rho}+\big(\phi_t\,a_t(j)\big)^{\rho}\Big]^{\frac{1-\alpha}{\rho}},
\end{equation}
where $\alpha\in(0,1)$ is the share of physical capital, $\phi_t>0$ is the
productivity of AI capital, and $\rho=\frac{\sigma-1}{\sigma}$ with $\sigma>0$,
$\sigma\neq1$, the elasticity of substitution between labor and AI capital. The inner
CES aggregator $\big[h_t(j)^{\rho}+(\phi_t a_t(j))^{\rho}\big]^{1/\rho}$ is
homogeneous of degree one, so the technology exhibits constant returns to scale.
For $\sigma>1$ (equivalently $\rho\in(0,1)$), as in our calibration, labor and AI
capital are gross substitutes within the task nest.

Firm $j$ chooses inputs $(h_t(j),k_t(j),a_t(j))\in\mathbb{R}_{++}^{3}$ to minimize
cost,
\begin{equation}
\min_{h_t(j),\,k_t(j),\,a_t(j)}\;
w_t\,h_t(j)+R_t^{k}\,k_t(j)+R_t^{a}\,a_t(j),
\end{equation}
subject to the technology constraint
\begin{equation}
y_t(j)=k_t(j)^{\alpha}
\Big[h_t(j)^{\rho}+\big(\phi_t\,a_t(j)\big)^{\rho}\Big]^{\frac{1-\alpha}{\rho}},
\end{equation}
where $w_t$ is the wage and $R_t^{k}$, $R_t^{a}$ are the rental rates of physical
and AI capital, respectively. Because the technology has constant returns to
scale, cost minimization yields a marginal cost that is common to all firms and
independent of the scale $y_t(j)$. Let $\theta_t>0$ denote the Lagrange multiplier on the production constraint, that
is, the firm's real marginal cost. Cost minimization yields the first-order
conditions
\begin{align}
R_t^{k}
&=\alpha\,k_t(j)^{\alpha-1}
\Big[h_t(j)^{\rho}+(\phi_t a_t(j))^{\rho}\Big]^{\frac{1-\alpha}{\rho}}\theta_t,\\[4pt]
R_t^{a}
&=(1-\alpha)\,\phi_t^{\rho}\,k_t(j)^{\alpha}
\Big[h_t(j)^{\rho}+(\phi_t a_t(j))^{\rho}\Big]^{\frac{1-\alpha-\rho}{\rho}}
a_t(j)^{\rho-1}\,\theta_t,\\[4pt]
w_t
&=(1-\alpha)\,k_t(j)^{\alpha}
\Big[h_t(j)^{\rho}+(\phi_t a_t(j))^{\rho}\Big]^{\frac{1-\alpha-\rho}{\rho}}
h_t(j)^{\rho-1}\,\theta_t,
\end{align}
so each factor price equals its marginal product valued at $\theta_t$. By Euler's
theorem, constant returns imply the unit-cost representation $w_t\,h_t(j)+R_t^{k}\,k_t(j)+R_t^{a}\,a_t(j)=\theta_t\,y_t(j)$,
so that $\theta_t$ is scale-invariant and common to all firms. Facing the
constant-elasticity demand $y_t(j)=(P_t(j)/P_t)^{-\xi}y_t$, each firm prices at the
constant gross markup $\xi/(\xi-1)$. Under symmetry ($P_t(j)=P_t$), marginal cost
is therefore given by  $\theta_t=(\xi-1)/\xi$.

\subsection{Investment funds and financial-market clearing}
Saving is intermediated by perfectly competitive, risk-neutral investment funds.
Each period, the representative fund collects current saving $s_t^{w}$, repays the
gross return $R_{t-1}s_{t-1}^{w}$ on past saving, and owns the two capital stocks,
traditional capital $k_t$ and AI capital $a_t$, which it rents to intermediate
firms at rates $R_t^{k}$ and $R_t^{a}$ while financing investment $i_t^{k}$ and
$i_t^{a}$. Its net cash flow, rebated to households, is
\begin{equation}
d_t^{f}=s_t^{w}-R_{t-1}s_{t-1}^{w}+R_t^{k}k_t+R_t^{a}a_t-i_t^{k}-i_t^{a}.
\end{equation}
The fund chooses $\{i_t^{k},i_t^{a},k_{t+1},a_{t+1}\}_{t\ge0}$ to maximize the
expected present value of net cash flows,
\begin{equation}
\max\; \mathbb{E}_t\sum_{s\ge0}\Lambda_{t,t+s}\,d_{t+s}^{f},
\qquad
\Lambda_{t,t+s}=\prod_{j=1}^{s}\frac{1}{R_{t+j-1}},
\end{equation}
where, being competitive and risk neutral, the fund discounts at the gross
risk-free rate. Following \citet{christiano2005nominal}, capital accumulation is
subject to investment adjustment costs,
\begin{align}
k_{t+1}&=(1-\delta_k)\,k_{t}
+\left[1-\frac{\phi^{k}}{2}\left(\frac{i_t^{k}}{i_{t-1}^{k}}-\vartheta_{t-1}\right)^{2}\right]i_t^{k},\\[4pt]
a_{t+1}&=(1-\delta_a)\,a_{t}
+\left[1-\frac{\phi^{a}}{2}\left(\frac{i_t^{a}}{i_{t-1}^{a}}-\vartheta_{t-1}\right)^{2}\right]i_t^{a},
\end{align}
where $\delta_k,\delta_a\in(0,1)$ are depreciation rates and $\phi^{k},\phi^{a}>0$
govern the adjustment costs. The reference $\vartheta_{t-1}$ is the gross growth
rate of investment, which on the balanced growth path equals population growth,
$\vartheta_t=n_t$, so that the bracketed terms equal one and adjustment costs
vanish in steady state. The multipliers $q_t^{k}$ and $q_t^{a}$ on the two laws of
motion are the shadow prices of installed physical and AI capital (Tobin's $q$),
and coincide with the asset prices $q_k,q_a$ reported in the impulse responses.

Writing gross investment growth as $g^{x}_t\equiv i^{x}_t/i^{x}_{t-1}$,
$x\in\{k,a\}$, the first-order conditions deliver a no-arbitrage condition that
equates the expected gross returns on the two assets, together with the
investment Euler equations:
\begin{align}
R_t&=\frac{\mathbb{E}_t\!\big[R^k_{t+1}+(1-\delta_k)q^k_{t+1}\big]}{q^k_t}
   =\frac{\mathbb{E}_t\!\big[R^a_{t+1}+(1-\delta_a)q^a_{t+1}\big]}{q^a_t},\\[4pt]
&q^k_t\Big[1-\tfrac{\phi^k}{2}\big(g^k_t-\vartheta_{t-1}\big)^2
   -\phi^k g^k_t\big(g^k_t-\vartheta_{t-1}\big)\Big]
=1-\phi^k\,\mathbb{E}_t\!\Big[\tfrac{q^k_{t+1}}{R_t}\,(g^k_{t+1})^{2}
   \big(g^k_{t+1}-\vartheta_t\big)\Big],\\[4pt]
&q^a_t\Big[1-\tfrac{\phi^a}{2}\big(g^a_t-\vartheta_{t-1}\big)^2
   -\phi^a g^a_t\big(g^a_t-\vartheta_{t-1}\big)\Big]
=1-\phi^a\,\mathbb{E}_t\!\Big[\tfrac{q^a_{t+1}}{R_t}\,(g^a_{t+1})^{2}
   \big(g^a_{t+1}-\vartheta_t\big)\Big].
\end{align}
Financial-market clearing requires that household saving fully finance the value
of the capital stock carried into the following period,
\begin{equation}
s_t^{w}=q_t^{k}k_{t+1}+q_t^{a}a_{t+1}.
\end{equation}
The zero-profit condition for the competitive funds then determines the gross return on saving $R_t$, which the no-arbitrage condition equalizes across the two productive assets.
\subsection{Market-clearing conditions}

In equilibrium, all markets clear, firms maximize profits, and households maximize
utility. Household saving is channeled entirely into investment in the two capital
stocks, physical capital and AI capital, so that aggregate investment $z_t$ is the
only vehicle for saving. By symmetry across intermediate firms,
$k_t(j)=k_t$, $a_t(j)=a_t$, and $h_t(j)=h_t$ for all $j\in[0,N_t]$, where the
common per-firm input levels satisfy
\[
k_t\coloneqq \frac{1}{N_t}\int_0^{N_t} k_t(j)\,dj,
\quad
a_t\coloneqq \frac{1}{N_t}\int_0^{N_t} a_t(j)\,dj,
\quad
h_t\coloneqq \frac{1}{N_t}\int_0^{N_t} h_t(j)\,dj.
\]
Labor-market clearing equates aggregate labor demand $N_t h_t$ to aggregate
effective labor supply: each of the $N_t^{w}$ workers supplies $(1-\kappa\, n_t)$
units of time, so that
\begin{align}
N_t\,h_t = (1-\kappa\, n_t)\,N_t^{w}.
\end{align}
Goods-market clearing allocates output between consumption and investment. In
per-worker terms, with the old-age dependency ratio $\psi_t=N_t^{r}/N_t^{w}$, the
aggregate resource constraint is
\begin{align}
y_t = c_t + z_t,
\end{align}
where aggregate consumption per worker weights retirees by their relative mass,
\begin{align}
c_t = c_t^{w} + \psi_t\,c_t^{r},
\end{align}
and total investment combines physical and AI investment,
\begin{align}
z_t = i_t^{k} + i_t^{a}.
\end{align}

\subsection{Exogenous processes}
Formally, artificial intelligence productivity $\phi_t$ and the survival probability of retirees $\gamma_t$ follow stationary AR(1) processes:
\begin{align}
\phi_t &= (1-\rho_{\phi})\bar{\phi} + \rho_{\phi}\phi_{t-1} + \varepsilon_t^{\phi}, 
\qquad \rho_{\phi}\in(0,1), \\
\gamma_t &= (1-\rho_{\gamma})\bar{\gamma} + \rho_{\gamma}\gamma_{t-1} + \varepsilon_t^{\gamma},
\qquad \rho_{\gamma}\in(0,1),
\end{align}
where $\bar{\phi}$ and $\bar{\gamma}$ denote steady-state levels, and $\varepsilon_t^{\phi}$ and $\varepsilon_t^{\gamma}$ are i.i.d. shocks capturing innovations in technology and longevity risk. Shocks to $\phi_t$ represent exogenous improvements in AI efficiency that raise the marginal productivity of AI capital, while shocks to $\gamma_t$ capture changes in expected longevity that reshape intertemporal saving and consumption decisions. The model is used to compute impulse responses and transitional dynamics, highlighting the joint propagation of technological progress and demographic risk through general equilibrium channels.
\subsection{Comparative statics}
\label{sec:comparative}
This section examines how artificial intelligence reshapes the production structure and, through general equilibrium adjustments, household decisions and
long-run dynamics. AI capital enters production in a CES nest with labor that accommodates both substitutability and complementarity, so that the prevailing
regime governs whether technological progress complements or displaces labor. We first characterize the firm-level substitution induced by AI capital, and then
trace its general equilibrium effects on wages, fertility, saving, consumption,
and growth.
\begin{proposition}[AI and factor substitution]
\label{prop3}
Fix the firm's output at $\bar y>0$ and regard AI capital $a$ as a parameter.
For given factor prices $(w,R^{k})\in\mathbb{R}_{++}^2$, the restricted
cost-minimization problem
\[
\min_{(h,k)\in\mathbb{R}_{++}^2}\; w\,h+R^{k}k
\qquad\text{s.t.}\qquad
F(k,h,a)=k^{\alpha}\big[h^{\rho}+(\phi a)^{\rho}\big]^{\frac{1-\alpha}{\rho}}=\bar y,
\]
admits a unique interior solution $(h(a),k(a))$, the isoquant being strictly
convex, and the conditional input demands are strictly decreasing in AI capital:
\[
h'(a)<0,\qquad k'(a)<0.
\]
At constant output, AI capital therefore substitutes for both labor and physical
capital, in the sense that a larger AI stock economizes on the two other inputs.
\end{proposition}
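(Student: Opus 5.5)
The plan is to reduce the restricted problem to two equations in $(h,k)$, log-linearize them in $a$, and read off the signs of $h'(a)$ and $k'(a)$ through the inner CES aggregate.

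\textbf{Existence and uniqueness.} The map $(k,h,a)\mapsto F(k,h,a)$ is a Cobb--Douglas aggregate of $k$ and a CES index, both of degree one. It is therefore concave and homogeneous of degree one on $\mathbb{R}^3_{++}$. Fixing $a$, the function $(k,h)\mapsto F(k,h,a)$ is concave and strictly increasing, so the upper contour set $\{F\ge\bar y\}$ is convex.

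For strict convexity of the isoquant I will check that the bordered Hessian of $F(\cdot,\cdot,a)$ has the right sign. Alternatively, I will note that the marginal rate of technical substitution computed below is strictly monotone along the isoquant. The isoquant does not touch the axes within the feasible region and the objective is linear with positive prices, so a minimizer exists by a standard coercivity argument. Strict convexity then gives uniqueness and interiority.

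\textbf{First-order conditions and aggregation.} Let $Y\equiv[h^{\rho}+(\phi a)^{\rho}]^{1/\rho}$ be the inner CES aggregate, which is increasing in both $h$ and $a$. Let $s_h\equiv h^{\rho}/(h^{\rho}+(\phi a)^{\rho})\in(0,1)$ be the labour share inside the nest.

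The tangency condition $w/R^k=F_h/F_k$ gives
\[
k=\frac{\alpha}{1-\alpha}\,\frac{w}{R^k}\,Y^{\rho}h^{1-\rho},
\]
and the constraint reads $k^{\alpha}Y^{1-\alpha}=\bar y$. Because prices are fixed, log-differentiating in $a$ yields three linear relations:
\begin{align}
\alpha\,d\ln k+(1-\alpha)\,d\ln Y&=0,\\
d\ln k-\rho\,d\ln Y-(1-\rho)\,d\ln h&=0,\\
d\ln Y-s_h\,d\ln h&=(1-s_h)\,d\ln a.
\end{align}
The implicit function theorem, applied at the unique interior solution, gives differentiability of $(h(a),k(a))$ once this linear system is shown to be nonsingular.

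\textbf{Solving the linear system.} From the first relation, $d\ln k=-\tfrac{1-\alpha}{\alpha}d\ln Y$. Substituting into the second relation gives
\[
d\ln h=-\frac{1-\alpha+\alpha\rho}{\alpha(1-\rho)}\,d\ln Y.
\]
Plugging this into the third relation gives
\[
\Big[1+s_h\frac{1-\alpha+\alpha\rho}{\alpha(1-\rho)}\Big]d\ln Y=(1-s_h)\,d\ln a.
\]

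The key quantity is $1-\alpha+\alpha\rho$. It is positive whenever $\rho>0$, which covers the calibrated case $\sigma>1$, and more generally whenever $\rho>-(1-\alpha)/\alpha$. Under that condition the bracket is positive, so the system is nonsingular and $d\ln Y/d\ln a>0$. It then follows that $d\ln k/d\ln a<0$ and $d\ln h/d\ln a<0$, which are the claims $k'(a)<0$ and $h'(a)<0$.

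\textbf{Expected obstacle.} The main difficulty is the sign condition on $1-\alpha+\alpha\rho$. If $\rho$ is very negative, with $\sigma$ well below one, the coefficient linking $d\ln h$ to $d\ln Y$ changes sign. In that region labour demand would rise with $a$ while $k$ still falls. So the proof must invoke the maintained assumption $\sigma>1$, or state this weaker bound explicitly.

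I would first try to show that the same inequality is exactly what delivers strict convexity of the isoquant in the second-order check. That would make the result self-contained under the stated hypotheses. Failing that, I would restrict the statement to $\rho\in(0,1)$, as in the calibration.
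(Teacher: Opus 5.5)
Your proposal is correct and is essentially the paper's proof. The paper likewise combines the tangency condition with the output constraint, eliminating $k$ first rather than carrying three linear relations. It obtains the same $d\ln h/d\ln a=-(1-\alpha+\alpha\rho)s_a/[(1-\alpha+\alpha\rho)s_h+\alpha(1-\rho)]$ and a rising inner aggregate, reads $k'(a)<0$ off the constraint, and handles your correct caveat $1-\alpha+\alpha\rho>0$ (i.e.\ $\sigma>\alpha$) simply by maintaining $\rho\in(0,1)$.

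Two small corrections. First, that caveat cannot be recovered from the second-order check: along the isoquant $d\ln(F_h/F_k)/d\ln h=-[s_h+\alpha(1-\rho)(1-s_h)]/\alpha<0$ for every $\rho<1$, which is $-(1-\rho)$ times your bracket, so strict convexity only secures nonsingularity and the sign of $h'(a)$ comes from the numerator alone. Second, for $\rho\in(0,1)$ the isoquant does reach $h=0$ at a finite $k$, so interiority follows from $F_h\to\infty$ as $h\to0^{+}$, not from the isoquant avoiding the axes.
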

\begin{proof}
See Appendix~\ref{appendprop3}.
\end{proof}

\begin{remark}
The output-constant substitutability of Proposition~\ref{prop3} should not be
confused with the effect of an AI expansion on factor prices in general
equilibrium. Because $\partial^2 F/\partial a\,\partial k>0$ for all admissible
parameters and $\partial^2 F/\partial a\,\partial h>0$ whenever $\rho<1-\alpha$,
a larger AI stock raises the marginal products of both physical capital and
labor. Thus, although AI capital substitutes for physical capital and labor at
constant output (Hicks substitute), it raises their marginal products (Edgeworth
complement); in general equilibrium, a positive AI shock therefore increases the
rental rate of physical capital and the real wage. Our baseline calibration,
$\rho=0.5$ and $\alpha=0.33$, satisfies $\rho<1-\alpha=0.67$ (equivalently,
$\sigma=2<1/\alpha\approx3.03$), so the model operates in the complementarity
regime; the labor-replacing case is reserved for the near-perfect-substitute
limit studied in Corollary~\ref{prop6}.
\end{remark}

By substituting for traditional physical capital, AI capital triggers broader equilibrium effects that propagate through labor markets and household decisions. Proposition~\ref{prop3} and the global comparative statics of an AI expansion (Proposition~\ref{prop4}, formalized in Appendix~\ref{appendprop4}) together show that artificial intelligence operates through both factor reallocation and general equilibrium adjustment. At the firm level, AI substitutes for labor and traditional capital
by automating production tasks. At the aggregate level, the resulting gains in productivity raise wages, saving, and lifetime consumption; because the income effect dominates, higher wages modestly raise fertility despite the increased
opportunity cost of child-rearing. Artificial intelligence thus reshapes not only the structure of production but also demographic dynamics and intertemporal allocation.

The next proposition studies the case in which artificial intelligence complements labor rather than replaces it.
\begin{proposition}[General equilibrium effects of AI under complementarity]
\label{prop5}
Assume $\rho<1-\alpha$, so that AI capital is Edgeworth-complementary to both
labor and physical capital,
$\partial^2 Y_t/\partial a_t\,\partial H_t>0$ and
$\partial^2 Y_t/\partial a_t\,\partial K_t>0$. Then, at a regular equilibrium, an
increase in AI capital $a_t$ satisfies
\[
\frac{\partial H_t}{\partial a_t}>0,
\qquad
\frac{\partial w_t}{\partial a_t}>0,
\qquad
\frac{\partial C_t}{\partial a_t}>0,
\]
the last inequality holding provided the investment share is interior (the
induced rise in investment does not exceed the rise in output). The rental of AI
capital, by contrast, is not unambiguously signed: at fixed labor it falls
through diminishing returns, $\partial R_t^{a}/\partial a_t\big|_{H_t}<0$, while
its general equilibrium response
$dR_t^{a}/da_t=\partial^2Y_t/\partial a_t^{2}
+(\partial^2Y_t/\partial a_t\partial H_t)\,(dH_t/da_t)$ is of ambiguous sign.
\end{proposition}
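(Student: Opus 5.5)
The plan has three parts: compute the cross-partials of the production function, treat labor as set by the household block at a regular equilibrium, and then sign the derivatives in order: $H$, then $w$, then $C$, and finally the rental $R^a$.

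\textbf{Cross-partials.} Write aggregate output as $Y=K^{\alpha}X^{(1-\alpha)/\rho}$ with $X=H^{\rho}+(\phi a)^{\rho}$. Differentiating gives
\[
\partial^2 Y/\partial a\,\partial H=(1-\alpha)(1-\alpha-\rho)\,\phi^{\rho}K^{\alpha}X^{\frac{1-\alpha-2\rho}{\rho}}a^{\rho-1}H^{\rho-1},
\]
which is positive exactly when $\rho<1-\alpha$. Likewise
\[
\partial^2 Y/\partial a\,\partial K=\alpha(1-\alpha)\phi^{\rho}K^{\alpha-1}X^{\frac{1-\alpha-\rho}{\rho}}a^{\rho-1}>0 .
\]
This confirms the Edgeworth-complementarity hypotheses. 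Since $w=\theta\,\partial Y/\partial H$ with $\theta=(\xi-1)/\xi$ constant, the direct effect $\partial w/\partial a\big|_{H,K}$ is positive.

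\textbf{Labor.} Effective labor is $H=(1-\kappa n)N^w$. I will therefore derive $dH/da$ from the household block: differentiate $F(\Gamma_t)=0$ with respect to $w$ and use the Implicit Function Theorem at the regular equilibrium. This yields $dn/dw$. I will then invoke the income-effect-dominance conclusion stated after Proposition~\ref{prop3} (formalized as Proposition~\ref{prop4}) to sign the labor response. Under the maintained regularity, which I take to mean that the composite map $a\mapsto(H,w)$ has the sign pattern of the partial-equilibrium cross effect, the general-equilibrium sign of $dH/da$ follows the sign of $\partial^2Y/\partial a\,\partial H$.

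This is the step I expect to be the main obstacle. Endogenous fertility rising with income pulls hours down, so positivity of $dH/da$ is not automatic. I would first try to show that labor demand shifts out, because the marginal product of labor rises with $a$, and that the regularity condition keeps the labor-supply feedback $-\kappa N^w\,dn/dw$ from overturning this shift. Concretely, I would solve the linearized system
\[
dw=\theta\big(Y_{HH}\,dH+Y_{Ha}\,da\big),\qquad dH=-\kappa N^w\,(dn/dw)\,dw .
\]
This gives
\[
dH/da=\frac{-\kappa N^w\,n_w\,\theta Y_{Ha}}{1+\kappa N^w\,n_w\,\theta Y_{HH}} .
\]
Signing it requires that the effective labor-supply slope be positive, $-n_w>0$, which is where the regular-equilibrium assumption enters. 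Substituting back gives
\[
dw/da=\frac{\theta Y_{Ha}}{1+\kappa N^w\,n_w\,\theta Y_{HH}}>0,
\]
since $Y_{HH}<0$ and, under the same sign condition, the denominator is positive.

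\textbf{Consumption.} Use the resource constraint $C=Y-Z$, so that
\[
dC/da=Y_a+Y_H\,dH/da+Y_K\,dK/da-dZ/da .
\]
Every output term is positive: $Y_a>0$, $Y_H>0$ with $dH/da>0$, and $Y_K>0$ with capital weakly rising through saving. The stated proviso $dZ/da<dY/da$ then gives $dC/da>0$ directly.

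\textbf{Rental of AI capital.} Compute
\[
\partial^2 Y/\partial a^2=(1-\alpha)\phi^{\rho}K^{\alpha}a^{\rho-2}X^{\frac{1-\alpha-2\rho}{\rho}}\big[(1-\alpha-\rho)(\phi a)^{\rho}+(\rho-1)X\big].
\]
The bracket equals $-\alpha(\phi a)^{\rho}+(\rho-1)H^{\rho}<0$ for $\rho<1$, which establishes the fixed-labor decline. The general-equilibrium derivative is then the stated sum of this negative term and the positive term $Y_{aH}\,dH/da$. Exhibiting parameter values at which either term dominates, for instance small versus large $dH/da$, establishes that its sign is ambiguous.
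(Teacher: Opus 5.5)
\textbf{Gap in the labor step.} In the paper's proof, $\partial H_t/\partial a_t>0$ is not an equilibrium hours response. It is the conditional labor demand at the prevailing wage with $K_t$ fixed, $\partial H_t/\partial a_t=-Y_{Ha}/Y_{HH}>0$, which needs only $Y_{HH}<0$ and $Y_{Ha}>0$. The wage result then comes from implicit differentiation of $H^{d}(w_t,a_t)=H^{s}(w_t)$ under the explicitly maintained assumption $\mathrm{d}H^{s}/\mathrm{d}w_t\ge 0$.

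You instead try to sign the general-equilibrium hours response, and that requires $n_w<0$ strictly, which you attribute to regularity. A nonsingular Jacobian carries no sign information, and your fallback reading of regularity (the composite map inherits the sign of the partial cross effect) assumes the conclusion. Your appeal to income-effect dominance (Proposition~\ref{prop4}) actually points the other way. There, fertility rises with AI precisely when full income $\Omega_t=w_t+d_t$ grows faster than the wage, i.e.\ through the dividend channel that your two-equation system drops. In the log case $n_t=(w_t+d_t)/[\kappa w_t(2+\beta\gamma_{t+1})]$, so $n_w<0$ holds only because $d_t$ is frozen. Since $H_t=(1-\kappa n_t)N_t^{w}$ with $N_t^{w}$ predetermined, higher fertility means fewer hours. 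This is what the paper's own quantitative results display: wages rise and hours fall after an AI shock.

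So there is no general-equilibrium proof of $\mathrm{d}H_t/\mathrm{d}a_t>0$ to be had; adopt the fixed-wage demand-shift reading. Your wage formula is correct and coincides with the paper's once you set $\mathrm{d}H^{s}/\mathrm{d}w=-\kappa N^{w}n_w$. State positivity of its denominator as an assumption on labor supply rather than deriving it from regularity. Restoring the omitted term $-\kappa N^{w}n_d\,\mathrm{d}d_t$ (with $n_d>0$, $\mathrm{d}d_t>0$) only strengthens $\mathrm{d}w_t/\mathrm{d}a_t>0$.

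\textbf{Consumption and the AI rental.} In the consumption step, $K_t$ is predetermined (the paper holds it fixed), so ``capital weakly rising through saving'' is unjustified. The paper's impulse responses in fact show physical capital falling after an AI shock. The claim is also unnecessary: the proviso $\mathrm{d}Z/\mathrm{d}a<\mathrm{d}Y/\mathrm{d}a$ alone gives $\mathrm{d}C/\mathrm{d}a>0$, which is all the paper uses.

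For the AI rental, your computation of $Y_{aa}<0$ matches the paper, but your plan to exhibit a case with $\mathrm{d}R^{a}/\mathrm{d}a>0$ cannot succeed. A direct computation gives $Y_{HH}Y_{aa}-Y_{Ha}^{2}=\alpha(1-\rho)\,Y_HY_a/(Ha)>0$. Hence $Y_{aa}+Y_{Ha}\,\mathrm{d}H/\mathrm{d}a>0$ requires $\mathrm{d}H/\mathrm{d}a>-Y_{aa}/Y_{Ha}>-Y_{Ha}/Y_{HH}$. But then $\mathrm{d}w/\mathrm{d}a=\theta\,(Y_{Ha}+Y_{HH}\,\mathrm{d}H/\mathrm{d}a)<0$, contradicting the wage result. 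In your own linearized system $0<\mathrm{d}H/\mathrm{d}a<-Y_{Ha}/Y_{HH}$, so the rental necessarily falls. The paper's ``ambiguity'' is only the formal observation that the expression is a sum of opposite-signed terms, and it exhibits no example; you should not promise one.
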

\begin{proof}
See Appendix~\ref{appendprop5}.
\end{proof}

Under complementarity ($\rho<1-\alpha$), AI raises the marginal product of labor,
so a larger AI stock stimulates labor demand, the real wage, and, unless the
investment response is too strong, aggregate consumption, yielding positive
spillovers on welfare and macroeconomic performance. These implications reverse
sharply when labor and AI become highly substitutable ($\rho>1-\alpha$), the case
to which we now turn.
\begin{corollary}[AI effects under (near-)perfect substitutability]
\label{prop6}
Let $Y_t=K_t^{\alpha}\big[H_t^{\rho}+(\phi a_t)^{\rho}\big]^{\frac{1-\alpha}{\rho}}$
and take $\rho\to1^{-}$ ($\sigma\to\infty$), so that labor and AI capital become
perfect substitutes and the inner aggregate tends to $H_t+\phi a_t$. Holding the
effective-labor aggregate constant (equivalently, $K_t$ and $w_t$ fixed), one
efficiency unit of AI capital displaces $\phi$ units of labor,
$\partial H_t/\partial a_t=-\phi$, and the equilibrium responses to AI
accumulation are
\[
\frac{\partial H_t}{\partial a_t}<0,
\qquad
\frac{\partial w_t}{\partial a_t}\le 0,
\qquad
\frac{\partial R_t^{a}}{\partial a_t}\le 0,
\qquad
\frac{\partial C_t}{\partial a_t}\gtrless 0,
\]
reversing the labor-demand and wage responses of Proposition~\ref{prop5}. Since
$\mathrm{MPA}=\phi\,\mathrm{MPL}$, the rental of AI capital is tied to the wage,
$R_t^{a}=\phi\,w_t$, so $\partial R_t^{a}/\partial a_t$ shares the sign of
$\partial w_t/\partial a_t$ and both fall as AI accumulates. Artificial
intelligence is thus labor-replacing.
\end{corollary}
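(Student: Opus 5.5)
The plan is to pass to the limit $\rho\to1^{-}$ in the nested technology and then read every claimed sign directly off the limiting marginal products. Since $[H^{\rho}+(\phi a)^{\rho}]^{1/\rho}\to H+\phi a$ pointwise on $\mathbb{R}_{++}^2$, and the derivatives converge locally uniformly, I will work with the limit technology
\[
Y_t=K_t^{\alpha}X_t^{1-\alpha},\qquad X_t\coloneqq H_t+\phi a_t .
\]
This technology gives
\[
\mathrm{MPL}=(1-\alpha)K_t^{\alpha}X_t^{-\alpha},\qquad
\mathrm{MPA}=\phi\,\mathrm{MPL}.
\]
Evaluating the firm first-order conditions of the intermediate-good problem at the common marginal cost $\theta_t=(\xi-1)/\xi$, the wage and the AI rental become $w_t=\theta_t\,\mathrm{MPL}$ and $R_t^{a}=\theta_t\,\mathrm{MPA}=\phi\,w_t$. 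This establishes the tie between the two factor prices stated in the corollary. Note that $\rho\to1$ violates $\rho<1-\alpha$, so Proposition~\ref{prop5} no longer applies. Indeed, $\partial^2F/\partial a\,\partial h$ becomes $-\alpha(1-\alpha)\phi K^{\alpha}X^{-\alpha-1}<0$, which is the Edgeworth-substitute regime.

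\textbf{Labor displacement.} I impose the thought experiment in the statement: $K_t$ and $w_t$ are held fixed. Since $w_t=\theta_t(1-\alpha)K_t^{\alpha}X_t^{-\alpha}$, fixing $w_t$ and $K_t$ pins down $X_t$. Differentiating $H_t+\phi a_t=\bar X$ gives $\partial H_t/\partial a_t=-\phi<0$. This is the one-for-one displacement of $\phi$ labor units per efficiency unit of AI.

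\textbf{Wage and AI rental.} Here I allow the alternative reading, in which labor is not perfectly elastic and the displacement is only partial. Then $dX_t/da_t=\phi+dH_t/da_t\ge 0$ whenever $dH_t/da_t\ge-\phi$, with $K_t$ fixed. Since $w_t$ is strictly decreasing in $X_t$, this gives $\partial w_t/\partial a_t\le0$, with equality exactly in the full-displacement case. Because $R_t^{a}=\phi w_t$, the rental inherits the same sign, so $\partial R_t^{a}/\partial a_t\le0$.

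\textbf{Consumption.} I use the resource constraint $C_t=Y_t-Z_t$. Output weakly rises, since $dY_t/da_t=\mathrm{MPL}\,dX_t/da_t\ge0$. By contrast, the investment needed to sustain a higher $a_t$ raises $Z_t$. On the household side, a lower wage lowers $c_t^{w}$, while a higher capital income raises $c^{r}$. The net effect therefore depends on parameters. To justify $\gtrless$, I will give two parameter configurations with opposite signs rather than a general argument.

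\textbf{Main obstacle.} The step requiring the most care is making the limit $\rho\to1^{-}$ rigorous and reconciling the two readings of ``equilibrium response.'' Under the constant-$X$ reading, $w_t$ is fixed by construction, so $\partial w_t/\partial a_t=0$ and the inequality is only weak. I will therefore state explicitly that the weak inequalities in the corollary cover both cases. I will also verify the sign of the cross-partial in the limit by direct differentiation, rather than by appeal to Proposition~\ref{prop5}.
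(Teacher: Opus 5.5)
Your proposal follows the paper's proof essentially step for step. It uses the limit technology $Y_t=K_t^{\alpha}(H_t+\phi a_t)^{1-\alpha}$ and $\mathrm{MPA}=\phi\,\mathrm{MPL}$ (so $R_t^{a}=\phi w_t$), the fixed-wage displacement $\partial H_t/\partial a_t=-\phi$, a weak wage fall inherited by the rental, and consumption left ambiguous through $C_t=Y_t-Z_t$ with $\mathrm{d}Y_t/\mathrm{d}a_t\ge0$; the paper only asserts this last ambiguity and does not exhibit the two sign cases you promise.

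The one step to tighten is the wage. The paper gets $\partial w_t/\partial a_t\le0$ by implicitly differentiating $H^{d}(w_t,a_t)=H^{s}(w_t)$ with $\mathrm{d}H^{s}/\mathrm{d}w_t\ge0$, whereas you take $\mathrm{d}H_t/\mathrm{d}a_t\ge-\phi$ as a hypothesis. That bound follows from the same primitive: if $\mathrm{d}H_t/\mathrm{d}a_t<-\phi$, then $X_t$ falls, $w_t$ rises, and weakly increasing labor supply could not fall. Together with the weak wage decline it gives $\mathrm{d}H_t/\mathrm{d}a_t\in[-\phi,0]$, which is the interval the paper uses for the output step.
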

\begin{proof}
See Appendix~\ref{appendprop6}.
\end{proof}
\begin{remark}
At $\rho=1$ exactly, the labor--AI nest is linear, so interior factor demands are
determinate only along the price locus $R_t^{a}=\phi\,w_t$; off this ray the firm
specializes in a single factor (a corner solution). We therefore state the
comparative statics as the limit $\rho\to1^{-}$, along which the displacement
$\partial H_t/\partial a_t=-\phi$ and the price identity $R_t^{a}=\phi\,w_t$ hold.
The limit is necessarily one-sided: since $\rho=1-1/\sigma$ with $\sigma>0$, one
has $\rho<1$ throughout, so $\rho=1$ is the upper bound of the admissible range,
approached only from below as $\sigma\to\infty$ (a right limit would require
$\sigma<0$, for which the technology is not well defined).
\end{remark}
The corollary underscores the central role of the elasticity of substitution
$\sigma$. The two regimes are separated by the threshold $\rho=1-\alpha$
(equivalently $\sigma=1/\alpha$), at which $\partial^2 Y_t/\partial H_t\partial
a_t$ changes sign: AI is labor-augmenting for $\rho<1-\alpha$
(Proposition~\ref{prop5}) and labor-replacing for $\rho>1-\alpha$. The baseline
calibration ($\sigma=2<1/\alpha\approx3.03$) lies in the complementarity regime,
so the labor-replacing case merely delimits the parameter space.\footnote{In that
regime an AI expansion lowers the marginal product of labor, contracts labor
demand, and depresses the wage; in the perfect-substitute limit the rental of AI
capital, tied to the wage through $R_t^{a}=\phi\,w_t$, falls alongside it. The
effect on consumption is ambiguous: output still (weakly) rises, but the fall in
labor income works against it, so the sign of $\partial C_t/\partial a_t$ depends
on the strength of the induced investment response. Throughout, technological
change interacts with demographic decisions through the labor-supply adjustments
induced by the time cost of children.}

The time-cost channel linking fertility to effective labor supply is formalized in
Lemma~\ref{prop7}, and a two-capital decomposition of growth in
Proposition~\ref{prop8}; both are collected in Appendix~\ref{app:proofs}. The latter
decomposes the growth effect of AI into a positive productivity channel and two
offsetting channels, consumption crowding-out and capital dilution; the net effect is
positive precisely when next-period capital is more AI-elastic than current capital. Under complementarity and a sufficiently
strong productivity response this condition holds and AI raises long-run growth,
whereas otherwise the offsetting channels dominate. More broadly, the section
shows that artificial intelligence jointly reshapes production, labor-market
outcomes, demographic behavior, and growth, underscoring the tight interaction
between technological change, household decisions, and macroeconomic development.

\subsection{Calibration and functional forms}
We adopt logarithmic functions,
\begin{align}
    u(c_t^{w},n_t)=\log(c_t^{w})+\log(n_t),
\end{align} 
\begin{align}v(c_{t+1}^{r})=\log (c_{t+1}^{r})
\end{align} 
which deliver interior solutions, unit intertemporal elasticity, and a
multiplicative valuation of fertility, as is standard in macro-demographic models.
\section{Quantitative experiments}
\label{sec:quant}
This section characterizes the macroeconomic transmission of the model’s two structural forces, artificial intelligence productivity and longevity risk, tracing their equilibrium effects on the transitional dynamics of output,
factor prices, saving, consumption, and demographic outcomes. The analysis
proceeds in three steps. Section~\ref{sec:calibration} presents the baseline
calibration. Section~\ref{sec:irf} studies each shock through its impulse
response functions. Section~\ref{sec:moments} summarizes the model's second-order
properties, isolating the relative contribution of each shock to aggregate
volatility.

\subsection{Calibration}
\label{sec:calibration}
The calibration (Table~\ref{tab:parameters}) combines conventional macroeconomic
values with one steady-state target. On the technology side, the physical-capital
share is $\alpha=0.33$ and the labor--AI substitution parameter is $\rho=0.50$,
i.e.,\ $\sigma=2$; since $\rho<1-\alpha=0.67$ (equivalently
$\sigma<1/\alpha\approx3.03$), the baseline lies in the complementarity regime in
which AI raises the marginal product of labor. Although direct estimates of the
labor--AI elasticity are scarce, the broader literature on capital--labor
substitution provides a guide, and it places the relevant elasticity well below
the reversal threshold $\sigma^{*}=1/\alpha\approx3.03$: survey evidence centers on
$0.4$--$0.6$ \citep{chirinko2008sigma}, micro-aggregated estimates on $0.5$--$0.7$
\citep{oberfield2021micro}, and even the gross-substitute estimates invoked to
explain the falling labor share reach only about $1.3$
\citep{karabarbounis2014global}. The baseline $\sigma=2$ is therefore a
deliberately conservative choice, set above the bulk of these estimates yet still
safely within the complementarity regime; at the empirically more likely lower
values, complementarity is only reinforced. AI productivity is normalized to
$\phi_{ss}=1.20$, depreciation rates are $\delta_k=0.08$ and $\delta_a=0.12$
(faster obsolescence of AI capital), and investment adjustment costs are
$\phi_k=2.50$ and $\phi_a=3.00$. The elasticity of substitution across varieties
is $\xi=10$, implying a gross markup $\xi/(\xi-1)\approx1.11$. On preferences, the
discount factor is $\beta=0.96$. On demographics, the retiree survival probability
is $\gamma_{ss}=0.93$, and the time cost per child is set to $\kappa=0.1949$ so that steady-state
fertility equals $n_{ss}=2$; together these imply an old-age dependency ratio
$\psi_{ss}=\gamma_{ss}/n_{ss}=0.465$. The AI and longevity shocks are persistent,
with $\rho_\phi=0.85$ and $\rho_\gamma=0.90$.
\begin{table}[!ht]
\centering
\caption{Baseline Calibration of the Model}
\label{tab:parameters}
\begin{tabular}{l c p{8.2cm}}
\toprule
\textbf{Parameter} & \textbf{Value} & \textbf{Economic interpretation} \\
\midrule
$\alpha$        & 0.33   & Share of physical capital in production \\
$\beta$         & 0.96   & Household discount factor \\
$\rho$          & 0.50   & CES substitution parameter between labor and AI capital \\
$\phi_{ss}$     & 1.20   & Steady-state productivity level of AI capital \\
$\rho_{\phi}$   & 0.85   & Persistence of the AI productivity shock \\
$\phi_a$        & 3.00   & AI-capital adjustment-cost parameter \\
$\phi_k$        & 2.50   & Physical-capital adjustment-cost parameter \\
$\gamma_{ss}$   & 0.93   & Survival probability of retirees (long-run) \\
$\rho_{\gamma}$ & 0.90   & Persistence of the longevity shock \\
$\delta_k$      & 0.08   & Depreciation rate of physical capital \\
$\delta_a$      & 0.12   & Depreciation rate of AI capital \\
$\kappa$        & 0.1949 & Time cost per child (calibrated so that $n_{ss}=2$) \\
$n_{ss}$        & 2.00   & Steady-state fertility (children per household) \\
$\psi_{ss}$     & 0.465  & Steady-state old-age dependency ratio \\
$\xi$           & 10     & Substitution elasticity, intermediate goods \\ 
\bottomrule
\end{tabular}
\end{table}
Table~\ref{tab:parameters} presents the baseline calibration. The parameters are divided into two categories.
The first category comprises conventional values from the macroeconomic literature: the capital share $\alpha=0.33$, the household
discount factor $\beta=0.96$, the physical-capital depreciation rate
$\delta_k=0.08$, and the elasticity of substitution across intermediate goods
$\xi=10$, implying a steady-state markup of approximately eleven percent.

AI capital exhibits a higher depreciation rate than physical capital
($\delta_a=0.12$), reflecting the accelerated obsolescence of the equipment and
software in which it is embodied. The persistence parameters for the two driving
processes, $\rho_\phi=0.85$ and $\rho_\gamma=0.90$, and the adjustment-cost
parameters $\phi_a=3.00$ and $\phi_k=2.50$, govern the speed at which the economy
absorbs each disturbance.

The second category is calibrated to ensure that the deterministic steady state reproduces a limited
number of demographic and technological targets. The time cost per child,
$\kappa=0.1949$, is calibrated to deliver a steady-state fertility rate of
$n_{ss}=2$ children per household; given the long-run survival probability of
retirees $\gamma_{ss}=0.93$, this yields an old-age dependency ratio
$\psi_{ss}=\gamma_{ss}/n_{ss}=0.465$. The steady-state productivity of AI capital
is normalized to $\phi_{ss}=1.20$, and the elasticity of substitution between
labor and AI capital is governed by the CES parameter $\rho=0.50$.
\subsection{Transitional dynamics}
\label{sec:irf}
We analyze each shock through its impulse response functions, reported in
Figures~\ref{fig:irf1} and~\ref{fig:irf2}. To keep responses of very
different magnitude visible, the two disturbances are shown on a dual vertical
axis: the left-hand scale (solid blue line, circular markers) corresponds to the
AI technology shock $\phi_t$, and the right-hand scale (dashed vermillion line,
square markers) to the longevity shock $\gamma_t$. All variables are expressed as
percentage deviations from the steady state over a horizon of twenty periods
following the shock. Figure~\ref{fig:irf1} collects the responses of the main
aggregates, while Figure~\ref{fig:irf2} reports factor prices and demographic
variables. Table~\ref{tab:summary} previews the qualitative predictions that the
impulse responses confirm: the two shocks share the same sign on quantities but
carry opposite signs on the returns to capital and on fertility, which is the
exact imprint of the demand-versus-supply asymmetry.

\begin{table}[!ht]
\centering
\caption{Qualitative transmission of the two shocks. ``$+$'' (``$-$'') denotes an
increase (decrease); ``$+\,+$'' marks the largest response. The AI shock is
evaluated in the baseline complementarity regime.}
\label{tab:summary}
{\small\setlength{\tabcolsep}{5pt}%
\begin{tabular}{@{}p{3.8cm} P{1.8cm} P{2.2cm} p{4.7cm}@{}}
\toprule
Variable & AI shock $\varepsilon_{\phi}$ & Longevity shock $\varepsilon_{\gamma}$ & Dominant channel \\
\midrule
Output $y$                      & $+$    & $+$ & front-loaded (demand) vs.\ hump-shaped (supply) \\
Real wage $w$                   & $+$    & $+$ & AI raises the marginal product of labor; longevity deepens capital \\
AI capital $a$                  & $+$    & $+$ & investment reallocation vs.\ capital deepening \\
Return to AI capital $R^{a}$    & $+\,+$ & $-$ & capital demand vs.\ saving supply \\
Return to physical capital $R^{k}$ & $+$ & $-$ & capital demand vs.\ saving supply \\
Real interest rate $R$          & $+$    & $-$ & capital demand vs.\ saving supply \\
Fertility $n$                   & $+$    & $-$ & income effect vs.\ life-cycle saving and child-rearing cost \\
Hours worked $h$                & $-$    & $+$ & mirror of fertility under a fixed time endowment \\
Old-age dependency $\psi$       & $-$    & $+$ & higher survival and lower fertility compound \\
\bottomrule
\end{tabular}}
\end{table}

\subsubsection{The AI productivity shock}
A positive AI productivity shock increases the marginal productivity of AI
capital. As shown in Figure~\ref{fig:irf2}, the shock immediately increases the
return on AI capital (approximately $+0.43\%$), which exceeds the return on
physical capital. Households therefore reallocate their savings toward AI capital,
consistent with the portfolio no-arbitrage condition between the two assets. This reallocation
raises AI investment and the AI capital stock, while physical investment and
physical capital decline. Because aggregate saving is fixed in the short run, the
expansion of AI capital \emph{crowds out} physical capital. The resulting response is a
reallocation across asset classes rather than a uniform expansion of the capital
stock. As physical capital contracts, diminishing returns raise the marginal
product of the remaining stock; the complementarity between AI capital and
physical capital in production reinforces this effect, since a larger AI capital
stock and more productive labor both raise the marginal product of physical
capital. Relative prices move accordingly. The price of AI capital rises
(approximately $+0.19\%$) while the price of physical capital declines slightly
(approximately $-0.02\%$), a manifestation of Tobin's $q$, whereby investment
flows toward the capital good valued above its replacement cost.

In the labor market, the higher productivity of AI capital shifts labor demand
outward and raises the real wage (approximately $+0.027\%$). By raising
productivity and expanding the stock of AI capital that complements labor, the
shock increases both labor demand and wages. At the same time, the associated
wealth effect, as households become wealthier, reduces labor supply, so that wages
rise even as hours worked fall. The responses of the main aggregates to both
shocks are reported in Figure~\ref{fig:irf1}.
\begin{figure}[!ht]
\centering
\includegraphics[width=\textwidth,height=0.9\textheight,keepaspectratio]{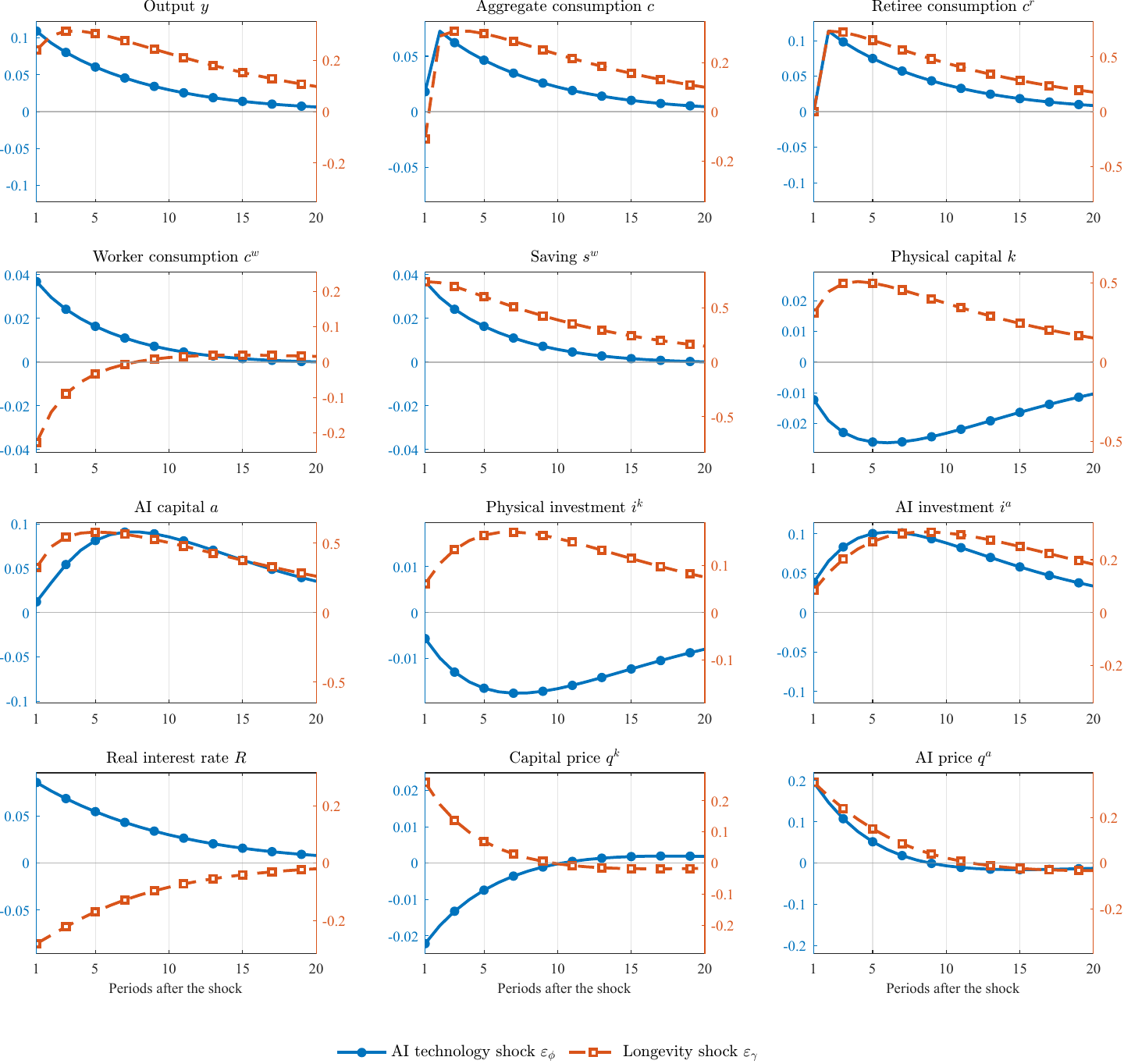}
\caption{Impulse Responses of the Main Aggregates}
\label{fig:irf1}
\end{figure}
The combination of higher wages and lower hours reflects a contraction in labor
\emph{supply} along an outward-shifting demand curve, rather than a decline in labor demand.
Internal consistency requires that AI capital complement labor; were AI a strong
substitute, the marginal product of labor and the wage could instead decline.

Labor supply responds to two opposing forces. The substitution effect, in which a higher
wage increases the reward to working, pushes hours up, while the income effect,
greater wealth increases the demand for leisure, pushes them down. The slight net
reduction in hours worked indicates that the income effect dominates in the short
run. The higher return on capital also raises the real interest rate
(approximately $+0.08\%$). Through the Euler equation, this tilts the consumption
profile toward the future; in levels, however, the wealth effect dominates and
aggregate consumption rises, driven primarily by retirees, who hold a larger share
of wealth.
\begin{figure}[!ht]
\centering
\includegraphics[width=\textwidth,height=0.9\textheight,keepaspectratio]{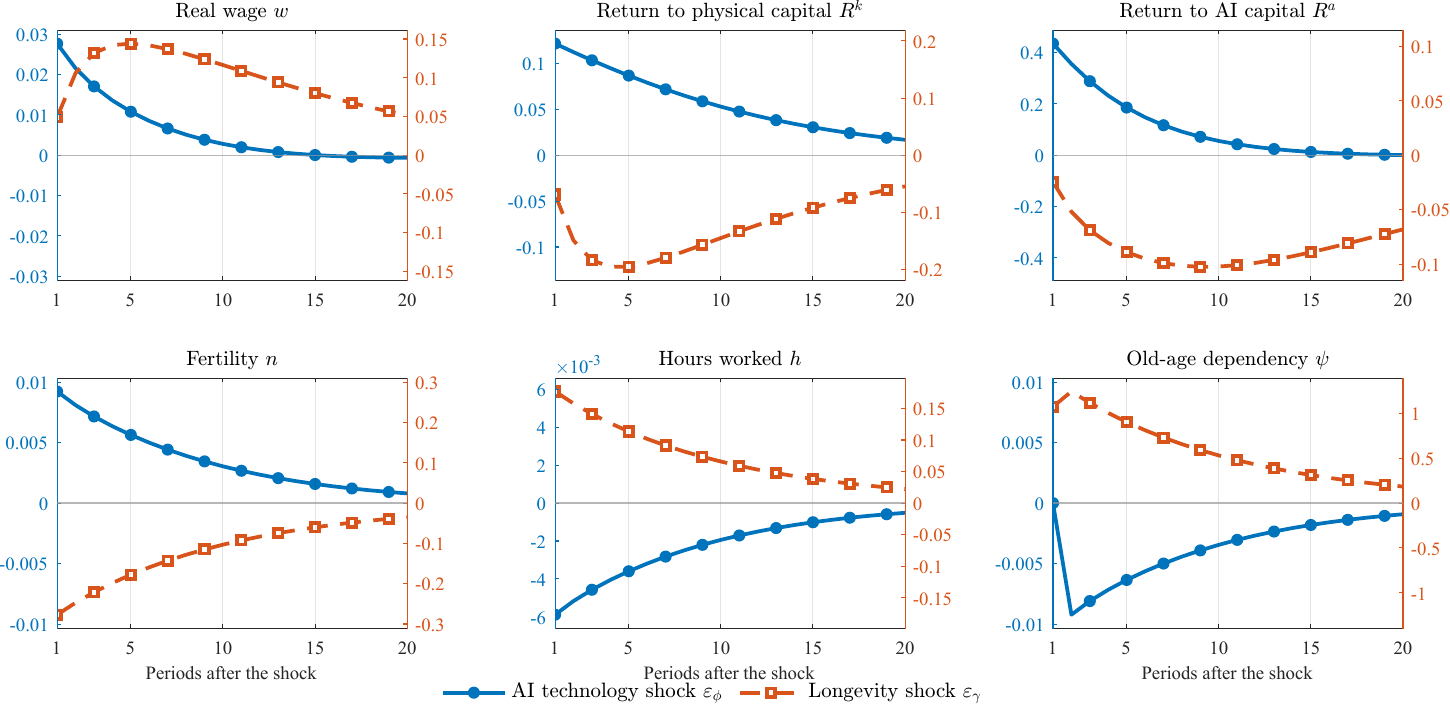}
\caption{Impulse Responses of Factor Prices and Demographic Variables}
\label{fig:irf2}
\end{figure}
The higher wage increases the opportunity cost of parental time, a substitution
effect that would, in isolation, reduce fertility. The offsetting income effect
leaves fertility only marginally higher. The AI shock thus exerts large effects in
the productive and financial blocks but only minimal short-run demographic
effects, as the factor-price and demographic responses in Figure~\ref{fig:irf2}
make clear. Finally, as AI capital accumulates, its marginal productivity and its
return premium over physical capital erode, the physical capital stock recovers,
and output, which rises on impact by approximately $+0.11\%$, declines steadily
as the productivity impulse dissipates.

\subsubsection{The longevity shock}
The longevity shock generates the opposite pattern across
Figures~\ref{fig:irf1} and~\ref{fig:irf2}. A rise in life expectancy strengthens
the motive for retirement saving: workers cut current consumption by approximately
$0.2\%$ to save more. The resulting increase in the supply of loanable
funds works in exactly the opposite direction from the AI shock, which operated on
the demand side.

Higher savings raise the accumulation of both physical and AI capital, and hence
investment in each. Because capital accumulates only gradually, output rises slowly
and peaks at approximately $+0.3\%$ in the fourth or fifth period
(Figure~\ref{fig:irf1}), a more sluggish profile than under the AI shock. As the
capital stock expands while technology is held fixed, the return on capital falls:
the real interest rate declines by about $-0.2\%$, and the returns on physical and
AI capital turn negative (approximately $-0.18\%$ and $-0.11\%$, respectively).
This outcome is indicative of a savings glut. The price of capital $q^{k}$ rises on
impact, and the larger capital stock raises labor productivity, generating a
hump-shaped increase in the real wage $w$ of about $+0.14\%$.

The old-age dependency ratio $\psi$ exhibits the largest response in the system
(Figure~\ref{fig:irf2}). Since $\psi=\gamma/n$, it is governed by the survival rate
of retirees ($\gamma$) and the fertility rate ($n$), and both margins move in the
same direction here. First, higher survival rates enlarge the elderly population.
Second, fertility declines by approximately $-0.2\%$ as households, anticipating
longer lifespans, prioritize their own future consumption and reduce investment in
child-rearing; lower interest rates and a longer retirement horizon further tilt the allocation away from children and toward saving. Households accordingly work
more hours to accumulate retirement wealth, while retirees, more numerous and
longer-lived, raise their consumption.

\subsection{Business cycle properties}
\label{sec:moments}
The model is driven by two structural shocks, each calibrated to a standard
deviation of $1\%$ and uncorrelated with the other: an AI technology shock
($\varepsilon_\phi$) and a longevity shock ($\varepsilon_\gamma$). We summarize its
second-order properties through theoretical moments and persistence, the
unconditional variance decomposition, and contemporaneous cross-correlations; the supporting tables are provided in the Appendix
(Tables~\ref{tab:moments}--\ref{tab:corr}).

Volatility is concentrated in prices and demographic ratios rather than in real
quantities: the real interest rate ($\sigma=0.0151$), the dependency ratio $\psi$
($0.0141$), the capital returns $R^a$ ($0.0133$) and $R^k$ ($0.0104$), and fertility
$n$ ($0.0125$) are the most volatile, whereas output, consumption, capital, and
investment fall between $0.0005$ and $0.0045$. The accumulation variables are highly
persistent ($\rho(1)\!\ge\!0.97$ for $k$, $a$, $i^k$, $i^a$, and $w$), which accounts
for the slow, hump-shaped impulse responses documented above, while asset prices and
workers' consumption revert more quickly ($\rho(1)\in[0.65,0.80]$). The variance
decomposition delivers the central finding: longevity dominates, explaining ninety
to one hundred percent of the variance in nearly every variable: output ($95.8\%$),
consumption ($98\%$), capital ($99.7\%$), fertility ($99.9\%$), and the dependency
ratio ($100\%$); whereas the AI shock matters only for capital pricing, accounting
for $75.5\%$ of the variance in $R^a$, $18.8\%$ of $R^k$, $18.5\%$ of $q^a$, and
$9.3\%$ of the real interest rate. Since the two shocks share a common standard
deviation, these shares reflect the model's transmission mechanism rather than
differences in shock size. The correlation structure indicates a single dominant
factor in the real block: output, consumption, capital, AI capital, investment, the
wage, and saving co-move strongly ($0.8$ to $0.99$), the real interest rate moves
countercyclically as higher saving depresses its equilibrium value, and $R^a$ is
only weakly tied to the real block ($-0.27$), being governed by the
longevity-independent AI shock. Fertility is the mirror image, negatively correlated
with every real aggregate and \emph{perfectly} with hours worked ($-1.00$), a
direct expression of the household's time-allocation trade-off between market work
and children. Taken together, these second-order statistics reinforce the
general equilibrium intuition obtained from the impulse response analysis.

A noteworthy feature of the correlation matrix is that worker consumption, $c^{w}$, is the only consumption variable that moves negatively with output ($\mathrm{corr}(c^{w}, y) = -0.32$). This pattern is a direct signature of the saving-supply mechanism, not an anomaly. The longevity shock accounts for most of the volatility in both $c^{w}$ and $y$ (Table~\ref{tab:vardec}), prompting workers to compress $c^{w}$ to finance capital deepening, thereby increasing $y$. Thus, $c^{w}$ and $y$ move in opposite directions under the dominant shock. The same logic explains the negative comovement of $c^{w}$ with $s^{w}$, $k$, $a$, $i^{k}$, $i^{a}$, $h$, and prices of installed capital, $q^{k}$ and $q^{a}$; all move with output in response to the saving-supply channel. In addition, $c^{w}$ moves positively with fertility, $n$, since the longevity shock depresses both. In a standard one-agent RBC model with only TFP shocks, consumption would be uniformly procyclical. Thus, the negative correlation here highlights the source of fluctuations in an OLG economy dominated by a saving-supply force. Figure~\ref{fig:vardec} summarizes this division of labor between the two disturbances: the longevity shock accounts for nearly the entire forecast-error variance of the quantity block and of the real interest rate, while the AI shock is concentrated in the pricing of capital, where it explains the bulk of the variance of the return to AI capital.

\begin{figure}[!htbp]
\centering
\includegraphics[width=0.82\textwidth]{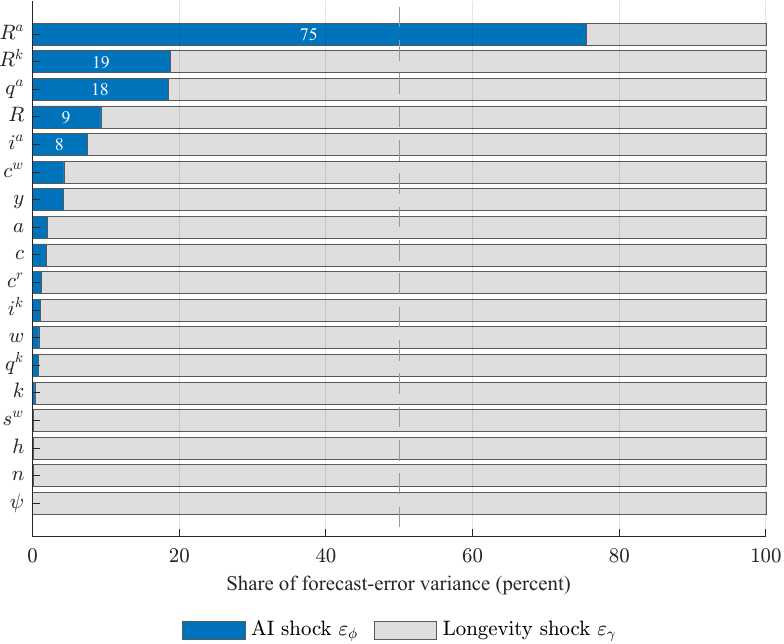}
\caption{Unconditional forecast-error variance decomposition: share attributable
to the AI shock $\varepsilon_{\phi}$ (dark) and the longevity shock
$\varepsilon_{\gamma}$ (light), summing to one (Table~\ref{tab:vardec}).
Variables are ordered by the AI share.}
\label{fig:vardec}
\end{figure}

\section{Artificial intelligence and the labor market}
\label{sec:labormarket}
The transmission analysis established that, in the complementarity regime, an AI
expansion raises the real wage. We now isolate the labor-market footprint of
artificial intelligence and show that it is governed by \emph{two} distinct
elasticity thresholds rather than one: a threshold for the \emph{level} of the
wage and a separate threshold for the \emph{labor share}. Because the model's
factor demands are available in closed form, the relevant elasticities can be
written explicitly and evaluated at the calibration without further simulation.
Throughout this section the comparative statics are partial-equilibrium objects:
they hold physical capital $K_t$ and hours $H_t$ at their equilibrium levels and
trace the factor-demand response to a marginal expansion of the AI stock, thereby
isolating the technological channel from the household adjustments already
analyzed in Section~\ref{sec:irf}.

\subsection{Wages and the labor share}
\label{sec:lm_wages}

\begin{lemma}[Factor shares and labor-market elasticities]
\label{lem:labor}
Let $Y_t=K_t^{\alpha}\big[H_t^{\rho}+(\phi_t a_t)^{\rho}\big]^{(1-\alpha)/\rho}$ with
$\rho=1-1/\sigma$, and write
$s_{a}\equiv(\phi_t a_t)^{\rho}\big/\!\big[H_t^{\rho}+(\phi_t a_t)^{\rho}\big]\in(0,1)$
and $s_{L}\equiv(1-\alpha)(1-s_{a})$ for the AI share of the labor--AI composite and
the aggregate labor share. At fixed $(K_t,H_t)$, the elasticities of output, the
wage, the AI rental, and the labor share with respect to AI capital are
$\partial\ln Y_t/\partial\ln a_t=(1-\alpha)\,s_{a}>0$,
$\partial\ln w_t/\partial\ln a_t=(1-\alpha-\rho)\,s_{a}$,
$\partial\ln R^{a}_t/\partial\ln a_t=(1-\alpha-\rho)\,s_{a}-(1-\rho)$, and
$\partial\ln s_{L}/\partial\ln a_t=-\rho\,s_{a}$.
\end{lemma}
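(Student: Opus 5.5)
The plan is direct logarithmic differentiation of the closed-form output and factor-demand expressions, holding $(K_t,H_t)$ fixed. Write $X\equiv H_t^{\rho}+(\phi_t a_t)^{\rho}$ for the inner CES aggregate, so that $\ln Y_t=\alpha\ln K_t+\frac{1-\alpha}{\rho}\ln X$. The one building block I will reuse everywhere is
\[
\frac{\partial\ln X}{\partial\ln a_t}=\frac{\rho\,(\phi_t a_t)^{\rho}}{X}=\rho\,s_a .
\]
Multiplying by $(1-\alpha)/\rho$ gives the output elasticity $(1-\alpha)s_a$ at once. It is positive because $s_a\in(0,1)$ and $\alpha<1$.

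Next I take factor prices as marginal products, following the intermediate firm's first-order conditions. The constant marginal-cost factor $\theta_t=(\xi-1)/\xi$ is scale-free and drops out of all log-derivatives. From those conditions,
\[
w_t=(1-\alpha)\theta_t K_t^{\alpha}X^{\frac{1-\alpha-\rho}{\rho}}H_t^{\rho-1},
\qquad
R^a_t=(1-\alpha)\theta_t\phi_t^{\rho}K_t^{\alpha}X^{\frac{1-\alpha-\rho}{\rho}}a_t^{\rho-1}.
\]
With $K_t$ and $H_t$ fixed, the only $a_t$-dependence in $w_t$ is through $X$. So
\[
\partial\ln w_t/\partial\ln a_t=\frac{1-\alpha-\rho}{\rho}\,\rho s_a=(1-\alpha-\rho)s_a .
\]
For $R^a_t$ there is the same $X$ term plus the direct term $(\rho-1)$ from $a_t^{\rho-1}$, which gives $(1-\alpha-\rho)s_a-(1-\rho)$. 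The sign of the wage elasticity then matches the threshold $\rho=1-\alpha$ in the Remark after Proposition~\ref{prop3} and in Corollary~\ref{prop6}, which serves as a consistency check.

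For the labor share I will first verify the stated identity $s_L=w_tH_t/(\theta_tY_t)=(1-\alpha)(1-s_a)$. Substituting the wage expression gives $w_tH_t=(1-\alpha)\theta_tK_t^{\alpha}X^{(1-\alpha)/\rho}\,H_t^{\rho}/X$. Dividing by $\theta_tY_t=\theta_tK_t^{\alpha}X^{(1-\alpha)/\rho}$ leaves $(1-\alpha)H_t^{\rho}/X=(1-\alpha)(1-s_a)$. I will also note, as a cross-check, that the log-derivative of $w_tH_t/Y_t$ equals the wage elasticity minus the output elasticity, $(1-\alpha-\rho)s_a-(1-\alpha)s_a=-\rho s_a$.

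The main obstacle is reconciling this with the direct route. Differentiating $\ln(1-s_a)$ requires $\partial s_a/\partial\ln a_t$. Since $s_a=(\phi_ta_t)^{\rho}/X$, this equals $\rho s_a-s_a\cdot\rho s_a=\rho s_a(1-s_a)$. Hence $\partial\ln(1-s_a)/\partial\ln a_t=-\rho s_a$, which agrees with the cross-check above.

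The only care needed is bookkeeping: the $1/\rho$ exponent must cancel against the $\rho$ in $\partial\ln X$. The statement should also be read as holding for $\rho<0$ ($\sigma<1$) too, where $s_a\in(0,1)$ still holds and none of the algebra uses the sign of $\rho$.
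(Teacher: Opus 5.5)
Your proposal is correct and matches the paper's proof essentially step for step. Both arguments log-differentiate through the inner aggregate using $\partial\ln B_t/\partial\ln a_t=\rho s_a$, let the marginal-cost factor $\theta_t$ cancel, obtain the labor share from the wage expression as $s_L=(1-\alpha)H_t^{\rho}/B_t=(1-\alpha)(1-s_a)$, and finish with $\partial s_a/\partial\ln a_t=\rho s_a(1-s_a)$. Your extra cross-check, taking the wage elasticity minus the output elasticity, is the same computation the paper uses to prove Corollary~\ref{cor_decouple}.
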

\begin{proof}
See Appendix~\ref{app:labor}.
\end{proof}

\begin{proposition}[The labor-share threshold]
\label{prop_share}
An AI expansion lowers the aggregate labor share if and only if labor and AI
capital are gross substitutes in the task nest, $\sigma>1$; it raises the labor
share if $\sigma<1$, and leaves it unchanged in the Cobb--Douglas nest $\sigma=1$.
The size of the decline, $|\partial\ln s_{L}/\partial\ln a_t|=\rho\,s_{a}$, is
increasing in both the degree of substitutability $\rho$ and the AI income share
$s_{a}$.
\end{proposition}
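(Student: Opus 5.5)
The plan is to read the result directly off Lemma~\ref{lem:labor}, which gives the labor-share elasticity in closed form: at fixed $(K_t,H_t)$,
\[
\frac{\partial\ln s_L}{\partial\ln a_t}=-\rho\,s_a .
\]
Because $\partial\ln s_L/\partial\ln a_t$ has the same sign as $\partial s_L/\partial a_t$ (both $s_L>0$ and $a_t>0$), the direction of the labor-share response is the sign of $-\rho\,s_a$. Since $s_a\in(0,1)$ is strictly positive for any interior allocation with $\phi_t a_t>0$ and $H_t>0$, that sign is the sign of $-\rho$.

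The case split then follows from $\rho=1-1/\sigma$ with $\sigma>0$. If $\sigma>1$, then $\rho\in(0,1)$ and the elasticity is strictly negative, so the labor share falls. If $\sigma<1$, then $\rho<0$ and the elasticity is strictly positive, so the labor share rises. If $\sigma=1$, then $\rho=0$ and the elasticity is zero. Because the three cases are exhaustive and mutually exclusive, the ``if and only if'' follows: a decline occurs exactly when $\rho>0$, that is, exactly when $\sigma>1$.

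The Cobb--Douglas case needs a separate argument. The CES form with exponent $1/\rho$ is undefined at $\rho=0$, and the paper's technology section even excludes $\sigma\neq1$. I would therefore justify $\sigma=1$ directly. In the limit $\rho\to0$, the inner nest becomes $H_t^{1/2}(\phi_t a_t)^{1/2}$ under equal weights, and $s_a\to 1/2$. The labor share is then the constant $(1-\alpha)/2$, independent of $a_t$. Alternatively, one can invoke continuity of $-\rho s_a$ in $\rho$. This limiting step is the only delicate point, and I would handle it by the explicit Cobb--Douglas computation rather than by continuity alone.

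For the magnitude claim, when $\sigma>1$ we have $|\partial\ln s_L/\partial\ln a_t|=\rho\,s_a$ with both factors positive. This expression is strictly increasing in $\rho$ holding $s_a$ fixed, with derivative $s_a>0$, and strictly increasing in $s_a$ holding $\rho$ fixed, with derivative $\rho>0$. I would add the remark that these are partial monotonicities in the two arguments, since $s_a$ itself depends on $\rho$ through $(\phi_t a_t/H_t)^{\rho}$, and that the statement is to be read this way.
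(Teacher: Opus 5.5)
Your proposal is correct and matches the paper's argument: read the sign of $\partial\ln s_L/\partial\ln a_t=-\rho\,s_a$ off Lemma~\ref{lem:labor} and use $\operatorname{sign}\rho=\operatorname{sign}(\sigma-1)$. Your explicit Cobb--Douglas treatment of $\sigma=1$ (the paper just evaluates the formula at $\rho=0$; the unweighted CES behaves like $2^{1/\rho}$ times the geometric mean there, a share-neutral scale factor) and your partial-monotonicity reading of the magnitude claim are refinements the paper omits, and the latter is needed: with $(H_t,\phi_t a_t)$ held fixed, $\mathrm{d}(\rho s_a)/\mathrm{d}\rho=s_a\bigl[1+(1-s_a)\ln\bigl(s_a/(1-s_a)\bigr)\bigr]$ is negative at the baseline $s_a=0.206$.
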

\begin{proof}
Immediate from $\partial\ln s_{L}/\partial\ln a_t=-\rho\,s_{a}$ in
Lemma~\ref{lem:labor}, since $s_a>0$ and $\operatorname{sign}\rho
=\operatorname{sign}(\sigma-1)$.
\end{proof}

\begin{proposition}[Two thresholds: the wage level versus the labor share]
\label{prop_two}
The wage and the labor share respond to AI through two distinct knife-edges. The
wage rises if and only if $\sigma<1/\alpha$ (Edgeworth complementarity,
Proposition~\ref{prop5}), whereas the labor share falls if and only if
$\sigma>1$. Since $\alpha\in(0,1)$ the two thresholds are ordered,
$1<1/\alpha$, and for every $\sigma$ in the intermediate band
\[
1<\sigma<\frac{1}{\alpha}
\]
an AI expansion \emph{simultaneously} raises the real wage and lowers the labor
share. The baseline calibration $\sigma=2$ lies inside this band
($1<2<3.03$).
\end{proposition}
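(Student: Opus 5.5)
The plan is to read both thresholds directly off the closed-form elasticities of Lemma~\ref{lem:labor}, at fixed $(K_t,H_t)$, and then intersect the two sign conditions. No new computation should be needed beyond translating conditions on $\rho$ into conditions on $\sigma$ through $\rho=1-1/\sigma$.

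\textbf{Wage threshold.} By Lemma~\ref{lem:labor}, $\partial\ln w_t/\partial\ln a_t=(1-\alpha-\rho)\,s_a$. Because $s_a\in(0,1)$ is strictly positive, the sign equals $\operatorname{sign}(1-\alpha-\rho)$. Substituting $\rho=1-1/\sigma$ gives
\[
1-\alpha-\rho=\frac{1}{\sigma}-\alpha ,
\]
and since $\sigma>0$ this is positive if and only if $\sigma<1/\alpha$. This is the same condition $\rho<1-\alpha$ under which Proposition~\ref{prop5} establishes Edgeworth complementarity, so I will cite it for the interpretation. The same computation shows the wage is unchanged at $\sigma=1/\alpha$ and falls for $\sigma>1/\alpha$, which gives the ``if and only if''.

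\textbf{Labor-share threshold.} This is exactly Proposition~\ref{prop_share}: $\partial\ln s_L/\partial\ln a_t=-\rho\,s_a<0$ if and only if $\rho>0$, that is, if and only if $\sigma>1$.

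\textbf{Ordering and the band.} Since $\alpha\in(0,1)$, we have $1/\alpha>1$, so the interval $(1,1/\alpha)$ is nonempty. For any $\sigma$ in it, both strict sign conditions hold at once: the wage elasticity is strictly positive and the labor-share elasticity is strictly negative. An AI expansion therefore raises $w_t$ while lowering $s_L$. For the calibration, $\alpha=0.33$ gives $1/\alpha\approx3.03$, so $\sigma=2$ (equivalently $\rho=0.5$) lies in the band. As a check, $(1-\alpha-\rho)=0.17>0$ and $-\rho=-0.5<0$.

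\textbf{Expected obstacle.} The only delicate point is making sure the statement is read as a partial-equilibrium, fixed-$(K_t,H_t)$ result, as stipulated at the start of this section. Otherwise general-equilibrium feedback through hours could contaminate the ``if and only if''. I will state this scope explicitly and note that the positivity of $s_a$ is what makes both sign equivalences exact. The boundary cases $\sigma=1$ (excluded in the model's $\sigma\neq1$ convention) and $\sigma=1/\alpha$ are the knife-edges where the respective elasticities vanish.
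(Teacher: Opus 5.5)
Your proposal is correct and follows the paper's own proof essentially step for step. You read the wage sign off $(1-\alpha-\rho)s_a$ in Lemma~\ref{lem:labor}, using $1-\alpha-\rho=1/\sigma-\alpha$, take the labor-share sign from Proposition~\ref{prop_share}, and use $1<1/\alpha$ to get the band; your explicit restriction to the fixed-$(K_t,H_t)$ partial-equilibrium setting matches the scope stated at the start of Section~\ref{sec:labormarket}.
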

\begin{proof}
The wage sign follows from $\partial\ln w_t/\partial\ln a_t=(1-\alpha-\rho)s_a$,
which is positive iff $\rho<1-\alpha$, i.e.\ $\sigma<1/\alpha$; the labor-share
sign is Proposition~\ref{prop_share}. Ordering and the band are then immediate.
\end{proof}

\begin{corollary}[Wage--share decoupling at the baseline]
\label{cor_decouple}
At the baseline calibration an AI expansion raises the real wage yet lowers
labor's share of income: the wage gain is outpaced by the rise in output, so labor
receives a smaller slice of a larger pie. The model therefore reconciles a
positive wage response to automation with the secular decline of the labor share
documented for capital- and automation-intensive economies
\citep{karabarbounis2014global, acemoglu2018race}, without invoking any
fall in the wage itself.
\end{corollary}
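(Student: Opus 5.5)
The corollary follows directly from Proposition~\ref{prop_two} and Lemma~\ref{lem:labor}, specialised to the baseline parameters $\alpha=0.33$ and $\rho=0.5$ (so $\sigma=2$). I would work throughout at fixed $(K_t,H_t)$, as the section stipulates, and argue in three steps.

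First, I check that the baseline lies in the intermediate band of Proposition~\ref{prop_two}. We have $1<\sigma=2<1/\alpha\approx 3.03$, or equivalently $0<\rho=0.5<1-\alpha=0.67$. Proposition~\ref{prop_two} then gives the two signs at once. The wage elasticity is
\[
\partial\ln w_t/\partial\ln a_t=(1-\alpha-\rho)\,s_a=0.17\,s_a>0,
\]
and the labor-share elasticity is
\[
\partial\ln s_L/\partial\ln a_t=-\rho\,s_a=-0.5\,s_a<0.
\]
Both signs rely only on $s_a\in(0,1)$, so no steady-state value of $s_a$ is required.

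Second, I make the ``smaller slice of a larger pie'' statement precise. The labor share is $s_L=wH/Y$, and $H$ is held fixed, so
\[
\partial\ln s_L/\partial\ln a=\partial\ln w/\partial\ln a-\partial\ln Y/\partial\ln a .
\]
From Lemma~\ref{lem:labor}, $\partial\ln Y/\partial\ln a=(1-\alpha)s_a$, and subtracting gives $(1-\alpha-\rho)s_a-(1-\alpha)s_a=-\rho s_a$. This confirms internal consistency with the lemma. It also shows that output grows proportionally faster than the wage by exactly $\rho s_a>0$, while output itself rises since $(1-\alpha)s_a>0$. That establishes ``the wage gain is outpaced by the rise in output'' and ``a larger pie'' simultaneously.

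Third, the interpretive part of the statement (reconciling a rising wage with a declining labor share) needs no further mathematics. The claim ``without invoking any fall in the wage'' is exactly the first sign established above.

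I expect no real obstacle. The only point needing care is the identity $s_L=(1-\alpha)(1-s_a)=wH/Y$, which requires the wage to be the marginal product. If the paper's wage includes the marginal-cost factor $\theta=(\xi-1)/\xi$, this factor scales $w$ by a constant. That leaves every log-elasticity unchanged, so I would note it briefly and proceed.
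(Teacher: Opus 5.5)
Your proposal is correct and takes essentially the same route as the paper. The paper also takes the positive wage elasticity $(1-\alpha-\rho)s_a$ from Lemma~\ref{lem:labor} and writes $\partial\ln(w_tH_t/Y_t)/\partial\ln a_t=(1-\alpha-\rho)s_a-(1-\alpha)s_a=-\rho s_a<0$ at fixed $(K_t,H_t)$; your explicit check that $1<\sigma=2<1/\alpha$ and your remark that the constant factor $\theta_t$ drops out of every log-elasticity match how the paper's lemma treats these points.
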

\begin{proof}
By Lemma~\ref{lem:labor}, $\partial\ln w_t/\partial\ln a_t>0$ while
$\partial\ln(w_tH_t/Y_t)/\partial\ln a_t
=\partial\ln w_t/\partial\ln a_t-\partial\ln Y_t/\partial\ln a_t
=(1-\alpha-\rho)s_a-(1-\alpha)s_a=-\rho s_a<0$ for $\sigma>1$.
\end{proof}

\begin{remark}[A Hicksian reading, and the labor share beyond Cobb--Douglas]
\label{rem:hicks}
Proposition~\ref{prop_share} is the classical Hicksian statement that a factor's
income share rises with its own quantity if and only if the elasticity of
substitution against it is below unity: AI's share within the task nest rises with
$a_t$ precisely when $\sigma>1$, at the expense of labor. The result extends to the
general CES nest of Section~\ref{sec:robust_nest}. With outer elasticity
$\sigma_o=1/(1-\eta)$ and physical-capital value share $s_{K}$, the aggregate labor
share is $s_{L}=(1-s_{K})(1-s_{a})$ and responds to AI according to
\[
\frac{\partial\ln s_{L}}{\partial\ln a_t}=-\,s_{a}\,(\rho-\eta\,s_{K}),
\]
which collapses to $-\rho\,s_{a}$ in the Cobb--Douglas outer nest
($\eta=0$, $s_{K}=\alpha$). The labor share thus falls whenever $\rho>\eta\,s_{K}$;
since $\eta\,s_{K}<s_{K}<\rho$ at the baseline, this holds for \emph{every} outer
elasticity $\sigma_o$, so the share-eroding effect of AI is more robust than the
wage threshold of Proposition~\ref{prop_outer}, which it nests as the special case
$\sigma_o\to1$.
\end{remark}

Table~\ref{tab:labor} evaluates these elasticities at the baseline calibration,
where the AI income share is $s_{a}=0.206$ (so that labor receives
$s_{L}=0.53$ of output). The signs realize the band of
Proposition~\ref{prop_two}: a ten-percent rise in the AI stock raises the wage by
about $0.35\%$ and output by about $1.38\%$, and in consequence lowers the labor
share by about $1.0\%$; the rental of AI capital falls by about $4.7\%$ through
diminishing returns, while AI's own income share rises by about $4.0\%$.
Figure~\ref{fig:labor} traces the same elasticities across the elasticity of
substitution $\sigma$, making the two knife-edges visible: the wage elasticity
crosses zero at $\sigma=1/\alpha\approx3.03$ and the labor-share elasticity at
$\sigma=1$, with the shaded band $1<\sigma<1/\alpha$ marking the empirically
relevant region in which AI is wage-enhancing but share-eroding.

\begin{table}[!ht]
\centering
\caption{Labor-market elasticities of an AI expansion at the baseline calibration
($\sigma=2$, $\alpha=0.33$, $s_{a}=0.206$): partial elasticity with respect to the
AI stock (holding $K_t$, $H_t$ fixed) and the response to a ten-percent rise in AI
capital.}
\label{tab:labor}
\begin{tabular}{l c r r}
\toprule
Object & Elasticity $\partial\ln(\cdot)/\partial\ln a_t$ & Value & Response to $+10\%$ AI \\
\midrule
Output $Y$            & $(1-\alpha)\,s_a$              & $+0.138$ & $+1.38\%$ \\
Real wage $w$         & $(1-\alpha-\rho)\,s_a$         & $+0.035$ & $+0.35\%$ \\
Labor share $s_L$     & $-\rho\,s_a$                   & $-0.103$ & $-1.03\%$ \\
AI income share $s_a$ & $\rho\,(1-s_a)$               & $+0.397$ & $+3.97\%$ \\
AI rental $R^{a}$     & $(1-\alpha-\rho)\,s_a-(1-\rho)$ & $-0.465$ & $-4.65\%$ \\
\bottomrule
\end{tabular}
\end{table}

\begin{figure}[!ht]
\centering
\includegraphics[width=0.82\textwidth]{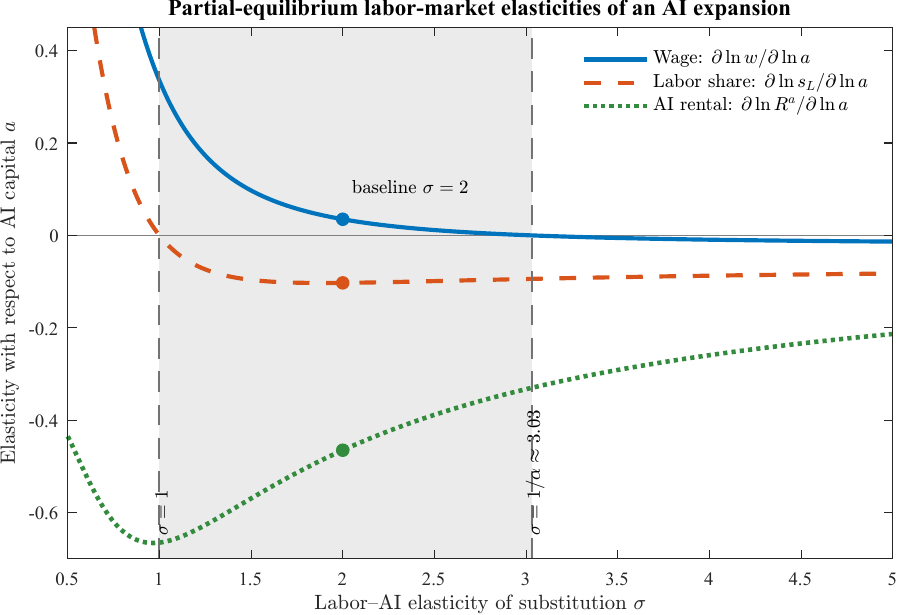}
\caption{Partial-equilibrium labor-market elasticities of an AI expansion against
the labor--AI elasticity $\sigma$. The wage elasticity (solid) changes sign at
$\sigma=1/\alpha\approx3.03$ and the labor-share elasticity (dashed) at $\sigma=1$;
the AI rental (dotted) is negative throughout. The shaded band $1<\sigma<1/\alpha$
contains the baseline $\sigma=2$.}
\label{fig:labor}
\end{figure}

The two-threshold structure clarifies what the model does and does not say about
automation and labor. It does not predict that AI depresses wages at empirically
plausible elasticities: the wage falls only beyond $\sigma=1/\alpha$, outside the
range of existing estimates. It does, however, predict that AI erodes the labor
share for any $\sigma>1$, a far weaker and empirically well-supported condition.
The apparent tension between ``AI is good for workers'' (wages rise) and ``AI
shifts income to capital'' (the labor share falls) is thus resolved within a
single technology: both hold simultaneously throughout the band that contains the
calibration, because output rises faster than the wage. This decoupling is the
labor-market counterpart of the capital-demand interpretation of the AI shock
developed in Section~\ref{sec:irf}.

\subsection{Productivity and displacement: a structural decomposition}
\label{sec:prod_disp}
The wage elasticity admits an interpretation in terms of the two canonical channels
of the task-based literature \citep{autor2003skill, acemoglu2018race,
acemoglu2019automation}: a \emph{productivity} effect, by which automation lowers
costs, expands output, and raises labor demand, and a \emph{displacement} effect, by
which capital substitutes for labor in automatable tasks. In the present model these
two forces are not assumed; they emerge as the two additive terms of the closed-form
wage elasticity and are individually measurable.

\begin{proposition}[Productivity effect]
\label{prop_prod}
An AI expansion raises average labor productivity for \emph{every} value of the
labor--AI elasticity:
\[
\frac{\partial\ln(Y_t/H_t)}{\partial\ln a_t}=(1-\alpha)\,s_{a}>0 .
\]
Output per worker therefore rises with automation regardless of whether AI
complements or substitutes for labor.
\end{proposition}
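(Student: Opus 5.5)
The plan is to reduce the statement to the output elasticity already established in Lemma~\ref{lem:labor}. Throughout this section the comparative statics hold $K_t$ and $H_t$ fixed, so $\ln(Y_t/H_t)=\ln Y_t-\ln H_t$ and $\partial\ln H_t/\partial\ln a_t=0$. Hence
\[
\frac{\partial\ln(Y_t/H_t)}{\partial\ln a_t}=\frac{\partial\ln Y_t}{\partial\ln a_t}=(1-\alpha)\,s_a ,
\]
by the first elasticity of Lemma~\ref{lem:labor}. To keep the argument self-contained, I would also rederive that elasticity directly:
\[
\ln Y_t=\alpha\ln K_t+\frac{1-\alpha}{\rho}\,\ln\!\big[H_t^{\rho}+(\phi_t a_t)^{\rho}\big].
\]
Differentiating in $\ln a_t$, and using $\partial(\phi_t a_t)^{\rho}/\partial\ln a_t=\rho(\phi_t a_t)^{\rho}$, gives $\frac{1-\alpha}{\rho}\cdot\rho\,s_a=(1-\alpha)s_a$. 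The factor $\rho$ cancels, so the expression is valid for every admissible $\rho\neq0$.

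The second step is the sign. Here $\alpha\in(0,1)$ gives $1-\alpha>0$. The term $s_a=(\phi_t a_t)^{\rho}/[H_t^{\rho}+(\phi_t a_t)^{\rho}]$ lies strictly in $(0,1)$ for any real $\rho$, including negative $\rho$ (the case $\sigma<1$), because both summands are strictly positive powers of positive numbers. The product is therefore strictly positive.

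The "regardless of complementarity or substitutability" clause then follows. Neither $(1-\alpha)$ nor the range of $s_a$ depends on the sign of $1-\alpha-\rho$ or of $\rho$. So the conclusion holds on both sides of the thresholds $\sigma=1$ and $\sigma=1/\alpha$ identified in Propositions~\ref{prop_share} and~\ref{prop_two}.

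The only delicate point is the $\sigma=1$ case excluded in the technology ($\rho=0$). If it is wanted, I would obtain it as the limit $\rho\to0$, in which the nest becomes Cobb--Douglas with AI weight $1/2$. The elasticity $(1-\alpha)s_a$ stays continuous and positive there, with $s_a\to1/2$. Otherwise the argument is a one-line differentiation.
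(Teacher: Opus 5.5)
Your proof is correct and is essentially the paper's argument: with $H_t$ held fixed, $\partial\ln(Y_t/H_t)/\partial\ln a_t$ reduces to the output elasticity $(1-\alpha)s_a$ of Lemma~\ref{lem:labor}, which is positive because $\alpha<1$ and $s_a\in(0,1)$; your direct rederivation simply reproduces the lemma's own proof. In your optional $\rho\to0$ aside, the unnormalized nest $[H_t^{\rho}+(\phi_t a_t)^{\rho}]^{1/\rho}$ behaves like $2^{1/\rho}(H_t\phi_t a_t)^{1/2}$ and has no finite positive limit, so only its shape tends to Cobb--Douglas; because the elasticity is scale-free, your conclusion $s_a\to1/2$ is unaffected.
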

\begin{proof}
Holding $H_t$ fixed, $\partial\ln(Y_t/H_t)/\partial\ln a_t=\partial\ln Y_t/\partial\ln a_t
=(1-\alpha)s_a$ by Lemma~\ref{lem:labor}; positivity follows from $\alpha<1$, $s_a>0$.
\end{proof}

\begin{proposition}[Productivity--displacement decomposition of the wage]
\label{prop_proddisp}
The wage response to AI is the difference between the productivity effect and a
displacement effect,
\begin{equation}
\frac{\partial\ln w_t}{\partial\ln a_t}
=\underbrace{(1-\alpha)\,s_{a}}_{\text{productivity effect}}
-\underbrace{\rho\,s_{a}}_{\text{displacement effect}},
\label{eq:proddisp}
\end{equation}
where the productivity effect equals the rise in output per worker
(Proposition~\ref{prop_prod}) and the displacement effect equals the erosion of
labor's income share, $\rho\,s_a=-\partial\ln s_{L}/\partial\ln a_t$. The wage rises
if and only if the productivity effect dominates, $(1-\alpha)>\rho$, that is
$\sigma<1/\alpha$.
\end{proposition}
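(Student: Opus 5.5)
The plan is to read the decomposition directly off Lemma~\ref{lem:labor}, taking $(K_t,H_t)$ fixed throughout as in the rest of the section. The lemma gives $\partial\ln w_t/\partial\ln a_t=(1-\alpha-\rho)\,s_a$. I will simply split this linear expression in $\rho$ as $(1-\alpha)s_a-\rho s_a$, which is exactly \eqref{eq:proddisp}. The two labels then need to be justified by matching each term with an object already signed earlier, so that the decomposition is structural rather than a mere rewriting.

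For the first term, I will invoke Proposition~\ref{prop_prod}. With $H_t$ held fixed, $\partial\ln(Y_t/H_t)/\partial\ln a_t=(1-\alpha)s_a$, so the first term is exactly the rise in output per worker. For the second term, I will use the last elasticity in Lemma~\ref{lem:labor}, $\partial\ln s_L/\partial\ln a_t=-\rho s_a$, which identifies $\rho s_a$ with the erosion of labor's share.

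As a consistency check that ties the two identifications together, I will note the identity used in the proof of Corollary~\ref{cor_decouple}. Since $s_L=w_tH_t/Y_t$ in the Cobb--Douglas outer nest (labor is paid its marginal product valued at $\theta_t$, and $\theta_t$ cancels in the share), we have $\ln w_t=\ln s_L+\ln(Y_t/H_t)$ at fixed $H_t$. Differentiating gives $\partial\ln w_t/\partial\ln a_t=\partial\ln(Y_t/H_t)/\partial\ln a_t+\partial\ln s_L/\partial\ln a_t$. This recovers \eqref{eq:proddisp} independently of the explicit formula for the wage elasticity.

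The sign statement follows because $s_a\in(0,1)$ is strictly positive. The wage elasticity is therefore positive if and only if $1-\alpha>\rho$. Substituting $\rho=1-1/\sigma$, this is equivalent to $1/\sigma>\alpha$, that is $\sigma<1/\alpha$ with $\sigma>0$. This coincides with the threshold of Proposition~\ref{prop_two}.

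The only step needing care is the identity $s_L=w_tH_t/Y_t$, since the wage includes the markup factor $\theta_t$. I would verify from the wage first-order condition that $w_tH_t=\theta_t(1-\alpha)(1-s_a)Y_t$, so that the constant $\theta_t$ drops out of the log-derivative. The rest is immediate.
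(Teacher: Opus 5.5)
Your proposal is correct and matches the paper's proof. The paper likewise reads $(1-\alpha-\rho)s_a=(1-\alpha)s_a-\rho s_a$ off Lemma~\ref{lem:labor} and identifies $\rho s_a$ with $-\partial\ln s_L/\partial\ln a_t$ from the same lemma. Your extra check via $\ln w_t=\ln s_L+\ln(Y_t/H_t)+\mathrm{const}$ is the identity already used in the proof of Corollary~\ref{cor_decouple}, and you handle the constant $\theta_t$ and the equivalence $1-\alpha>\rho\iff\sigma<1/\alpha$ correctly.
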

\begin{proof}
Immediate from Lemma~\ref{lem:labor}, writing
$(1-\alpha-\rho)s_a=(1-\alpha)s_a-\rho s_a$; the displacement term coincides with
$-\partial\ln s_{L}/\partial\ln a_t$ by the same lemma.
\end{proof}

\begin{corollary}[The productivity--pay gap]
\label{cor_gap}
Automation opens a gap between labor productivity, which always rises, and the real
wage, which rises only under complementarity. The gap equals the displacement effect
and the decline in the labor share,
\[
\frac{\partial\ln(Y_t/H_t)}{\partial\ln a_t}-\frac{\partial\ln w_t}{\partial\ln a_t}
=\rho\,s_{a}=-\frac{\partial\ln s_{L}}{\partial\ln a_t} .
\]
\end{corollary}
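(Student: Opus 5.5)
The plan is to obtain the corollary directly from Proposition~\ref{prop_prod} and Proposition~\ref{prop_proddisp}, working at fixed $(K_t,H_t)$ as in the rest of Section~\ref{sec:labormarket}. No new derivative needs to be computed; the argument is a subtraction of two closed-form elasticities followed by an identification of the remainder with the labor-share elasticity of Lemma~\ref{lem:labor}.

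\textbf{Step 1.} Proposition~\ref{prop_prod} gives
\[
\frac{\partial\ln(Y_t/H_t)}{\partial\ln a_t}=(1-\alpha)\,s_a>0
\]
for every $\sigma$. Since $H_t$ is held fixed, this elasticity coincides with $\partial\ln Y_t/\partial\ln a_t$. This settles the claim that labor productivity always rises with AI.

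\textbf{Step 2.} Subtract the wage elasticity supplied by the decomposition \eqref{eq:proddisp}, namely $(1-\alpha)s_a-\rho s_a$. The productivity terms cancel and the difference is exactly $\rho\,s_a$. This is the displacement effect of Proposition~\ref{prop_proddisp}.

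\textbf{Step 3.} Invoke Lemma~\ref{lem:labor}, which gives $\partial\ln s_L/\partial\ln a_t=-\rho s_a$, so the gap also equals $-\partial\ln s_L/\partial\ln a_t$. As a consistency check, I would note that the same equality follows from the accounting identity $s_L=w_tH_t/Y_t$. At fixed $H_t$ it gives $\partial\ln s_L/\partial\ln a_t=\partial\ln w_t/\partial\ln a_t-\partial\ln Y_t/\partial\ln a_t$, which is the identity already used in the proof of Corollary~\ref{cor_decouple}. The gap is therefore the negative of the labor-share elasticity by construction.

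\textbf{Step 4.} For the qualitative statement that the wage rises \emph{only} under complementarity, I would appeal to Proposition~\ref{prop_two}: $\partial\ln w_t/\partial\ln a_t>0$ if and only if $\sigma<1/\alpha$. The sign of the gap is the sign of $\rho$, so the gap is positive exactly when $\sigma>1$, by Proposition~\ref{prop_share}. Throughout the band $1<\sigma<1/\alpha$, productivity therefore outpaces pay while both rise.

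The only real obstacle is interpretive rather than computational: the equality $\partial\ln(Y_t/H_t)=\partial\ln Y_t$ depends on keeping $H_t$ fixed. I would state explicitly that the corollary is a partial-equilibrium identity, in the same sense as Lemma~\ref{lem:labor}, and not a general-equilibrium statement about the responses in Section~\ref{sec:irf}.
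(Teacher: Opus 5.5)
Your proposal is correct and matches the paper's proof: subtract the wage elasticity of Proposition~\ref{prop_proddisp} from the productivity elasticity of Proposition~\ref{prop_prod} to obtain $\rho\,s_a$, then identify this with $-\partial\ln s_L/\partial\ln a_t$ via Lemma~\ref{lem:labor}. Two of your additions are sound and go slightly beyond the paper's one-line proof: the observation that, at fixed $H_t$, the second equality is just the accounting identity $\ln s_L=\ln w_t+\ln H_t-\ln Y_t$ (up to the constant markup $\theta_t$), and the sign discussion based on Propositions~\ref{prop_share} and~\ref{prop_two}.
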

\begin{proof}
Subtract Proposition~\ref{prop_proddisp} from Proposition~\ref{prop_prod} and use
$-\partial\ln s_L/\partial\ln a_t=\rho s_a$.
\end{proof}

At the extensive margin, the strength of displacement is governed, as everywhere in
the model, by the labor--AI elasticity. In the perfect-substitute limit of
Corollary~\ref{prop6} one efficiency unit of AI capital replaces $\phi$ units of
labor and the displacement effect overwhelms the productivity effect, so the wage
falls; under the baseline complementarity the displacement effect is present but
dominated, so automation raises both output per worker and the wage while still
shifting income toward capital. Figure~\ref{fig:proddisp} plots the two effects and
their net across $\sigma$: they are equal at the knife-edge $\sigma=1/\alpha$, where
the wage response vanishes, and the productivity effect dominates throughout the
empirically relevant band that contains the baseline.

\begin{table}[!ht]
\centering
\caption{Productivity--displacement decomposition of the wage response to a
ten-percent AI expansion at the baseline calibration ($\sigma=2$, $s_a=0.206$).}
\label{tab:proddisp}
\begin{tabular}{l c r r}
\toprule
Channel & Elasticity & Value & Response to $+10\%$ AI \\
\midrule
Productivity effect $(1-\alpha)s_a$  & $\partial\ln(Y/H)/\partial\ln a$ & $+0.138$ & $+1.38\%$ \\
Displacement effect $\rho\,s_a$      & $-\partial\ln s_L/\partial\ln a$ & $-0.103$ & $-1.03\%$ \\
\midrule
Net wage effect $(1-\alpha-\rho)s_a$ & $\partial\ln w/\partial\ln a$    & $+0.035$ & $+0.35\%$ \\
Productivity--pay gap $\rho\,s_a$     & productivity $-$ wage            & $+0.103$ & $+1.03\%$ \\
\bottomrule
\end{tabular}
\end{table}

\begin{figure}[!ht]
\centering
\includegraphics[width=0.82\textwidth]{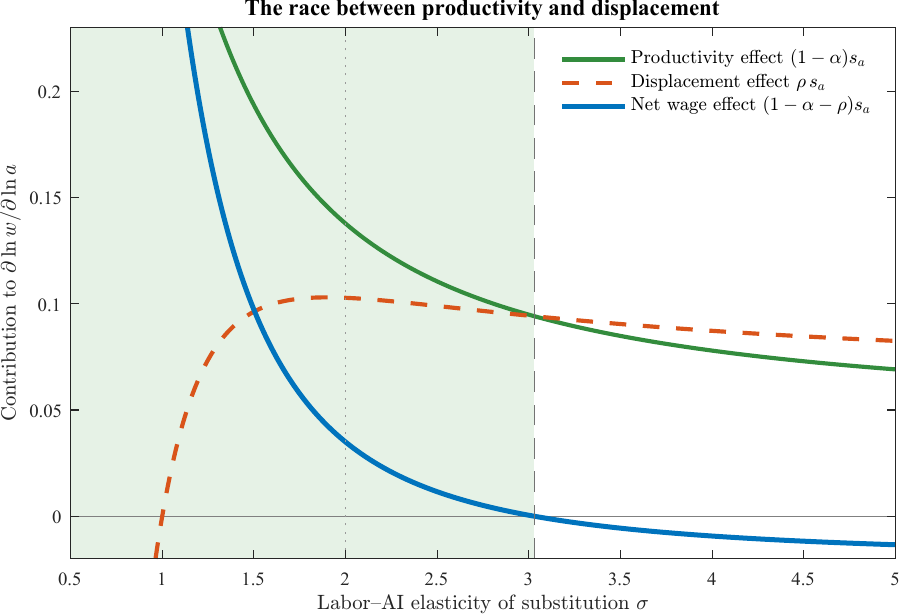}
\caption{The race between productivity and displacement. The wage response (solid)
is the productivity effect $(1-\alpha)s_a$ minus the displacement effect
$\rho\,s_a$; the two are equal at $\sigma=1/\alpha\approx3.03$. The shaded region
$\sigma<1/\alpha$, where the wage rises, contains the baseline $\sigma=2$.}
\label{fig:proddisp}
\end{figure}

This decomposition gives the model a direct empirical interpretation. The
productivity effect is the structural counterpart of the documented
output-per-worker gains from automation \citep{graetz2018robots} and, more
recently, from generative AI, where field evidence reports sizable productivity
gains concentrated among less-experienced workers \citep{brynjolfsson2025genai};
the displacement effect is the counterpart of the task-substitution channel
emphasized by \citet{acemoglu2019automation} and \citet{autor2015jobs}. The model's contribution
is to show that, within a single calibrated technology, the same elasticity that
governs the sign of the wage response also fixes the relative magnitude of the two
effects, so that the productivity--pay gap, the wedge between rising output per
worker and a more slowly rising wage, is exactly the erosion of the labor share.

\section{Welfare analysis}
\label{sec:welfare}
The positive analysis establishes that the two disturbances move returns,
fertility, and output in systematically opposite ways. We now ask how they
map into \emph{welfare}. We adopt the household's expected lifetime utility as
the welfare criterion, derive an exact decomposition of the welfare effect of a
structural shock into income, return, and longevity channels, and relate the two
shocks to the dynamic efficiency of the economy.

\begin{definition}[Cohort welfare and the consumption-equivalent measure]
\label{def:welfare}
The ex ante expected lifetime welfare of the cohort that is young in period $t$,
evaluated at the equilibrium allocation, is
\[
W_t \;\equiv\; \log c_t^{w}+\log n_t+\beta\,\gamma_{t+1}\,\log c_{t+1}^{r}.
\]
A utilitarian planner ranks equilibria by
$\mathcal{W}=\sum_{t\ge0}\Delta^{t}\,W_t$, with cohort weights
$\Delta\in(0,1)$. For a perturbation that changes welfare by $\mathrm{d}W_t$, the
\emph{consumption-equivalent variation} $\omega_t$ is the permanent proportional
change in lifetime consumption, scaling both $c_t^{w}$ and $c_{t+1}^{r}$ by
$1+\omega_t$, that reproduces the same welfare change. Under logarithmic
preferences,
\[
\omega_t=\exp\!\left(\frac{\mathrm{d}W_t}{1+\beta\gamma_{t+1}}\right)-1 ,
\]
since scaling both consumptions by $1+\omega_t$ raises $W_t$ by
$(1+\beta\gamma_{t+1})\log(1+\omega_t)$. The derivation is given in
Appendix~\ref{app:cev}.
\end{definition}

The welfare effect of a structural shock decomposes exactly into income, return,
and longevity channels, obtained from an envelope representation of marginal welfare
(Lemma~\ref{lem:envelope}, stated and proved in Appendix~\ref{app:proofs}).

\begin{proposition}[Welfare decomposition of a structural shock]
\label{prop_welfdecomp}
For a structural disturbance $\xi\in\{\phi_t,\gamma_{t+1}\}$, the total effect on
cohort welfare is
\[
\frac{\mathrm{d}W_t}{\mathrm{d}\xi}
=\underbrace{\lambda_t h_t\,\frac{\mathrm{d}w_t}{\mathrm{d}\xi}
+\lambda_t\,\frac{\mathrm{d}d_t}{\mathrm{d}\xi}}_{\text{income channel}}
+\underbrace{\lambda_t\,\frac{\gamma_{t+1}c_{t+1}^{r}}{R_{t+1}^{2}}\,
\frac{\mathrm{d}R_{t+1}}{\mathrm{d}\xi}}_{\text{return channel}}
+\underbrace{\Big(\beta\log c_{t+1}^{r}-\lambda_t\tfrac{c_{t+1}^{r}}{R_{t+1}}\Big)
\frac{\mathrm{d}\gamma_{t+1}}{\mathrm{d}\xi}}_{\text{longevity channel}} .
\]
\end{proposition}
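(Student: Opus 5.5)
The plan is an envelope argument applied to the household's indirect utility. The welfare criterion $W_t$ of Definition~\ref{def:welfare} is the maximized value of \eqref{valeur1} under log preferences. I will therefore write it as the value of the Lagrangian
\[
\mathcal{L}_t=\log c_t^{w}+\log n_t+\beta\gamma_{t+1}\log c_{t+1}^{r}
+\lambda_t\Big[w_t(1-\kappa n_t)+d_t-c_t^{w}-\tfrac{\gamma_{t+1}}{R_{t+1}}c_{t+1}^{r}\Big],
\]
evaluated at the solution of $F(\Gamma_t)=0$. That solution is locally unique and $C^1$ in the parameters by the implicit function theorem argument given with $F$, and I will state this as Lemma~\ref{lem:envelope}.

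The household takes $(w_t,d_t,R_{t+1},\gamma_{t+1})$ as given. Any structural disturbance $\xi$ moves welfare only through these four objects, which the household treats as parameters. Totally differentiating $W_t=\mathcal{L}_t(\Gamma_t^*(\xi);\xi)$ and applying the first-order conditions $F=0$, the terms in $\mathrm{d}c_t^w$, $\mathrm{d}n_t$, $\mathrm{d}c_{t+1}^r$ and $\mathrm{d}\lambda_t$ all vanish. The fertility condition $U_n=\kappa w_t\lambda_t$ is what removes the indirect time-cost term.

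What remains is the vector of partial derivatives of $\mathcal{L}_t$ with respect to the four parameters, each multiplied by its total derivative in $\xi$:
\begin{itemize}
\item $\partial\mathcal{L}_t/\partial w_t=\lambda_t(1-\kappa n_t)=\lambda_t h_t$;
\item $\partial\mathcal{L}_t/\partial d_t=\lambda_t$;
\item $\partial\mathcal{L}_t/\partial R_{t+1}=\lambda_t\gamma_{t+1}c_{t+1}^r/R_{t+1}^2$;
\item $\partial\mathcal{L}_t/\partial\gamma_{t+1}=\beta\log c_{t+1}^r-\lambda_t c_{t+1}^r/R_{t+1}$.
\end{itemize}
Grouping the first two gives the income channel, and the third and fourth give the return and longevity channels exactly as stated.

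The key modelling step is to justify treating $\gamma_{t+1}$ as a price-like parameter in the longevity channel. Survival enters twice: as the utility weight on retirement felicity, and through the price $\gamma_{t+1}/R_{t+1}$ of retirement consumption in \eqref{budgetfinal}. The second entry is the effect of mortality pooling after bequests recombine into the actuarially fair return, so the factor $\varphi_{t+1}$ and $\mu$ drop out. This must be checked so that no extra derivative of $b_{t+1}$ with respect to $\gamma_{t+1}$ appears. I will verify it by substituting $b_{t+1}$ before differentiating, which leaves the consolidated constraint independent of $\mu$.

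For $\xi=\phi_t$, I set $\mathrm{d}\gamma_{t+1}/\mathrm{d}\xi=0$ since the exogenous processes are independent. For $\xi=\gamma_{t+1}$ the derivative is $1$, and the general-equilibrium effects on $w_t$, $d_t$ and $R_{t+1}$ are carried by the total derivatives.

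The main obstacle is legitimacy rather than algebra. The envelope theorem requires the household optimum to be interior and differentiable in the parameters. I will take this from the regularity assumption that $D_\Gamma F(\Gamma_t^*)$ is invertible, together with log preferences, which guarantee interiority.

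A further subtlety is that $h_t$ here means per-worker hours $1-\kappa n_t$ and not per-firm hours $N_t^w(1-\kappa n_t)/N_t$. I will state explicitly that in the decomposition $h_t$ denotes the household's effective hours.
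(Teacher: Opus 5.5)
Your proposal is correct and follows essentially the same route as the paper. The paper first proves the envelope Lemma~\ref{lem:envelope}, which gives the four partials of the Lagrangian with respect to $w_t$, $d_t$, $R_{t+1}$, $\gamma_{t+1}$ under the consolidated budget constraint; it then applies the chain rule to $W_t=\mathcal{V}(w_t,d_t,R_{t+1},\gamma_{t+1})$, with the policy variations dropping out because the first-order conditions hold. Your added checks are consistent with that setup: that the bequest terms recombine into the price $\gamma_{t+1}/R_{t+1}$, and that $h_t$ means the household's own hours $1-\kappa n_t$.
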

\begin{proof}
See Appendix~\ref{app:welfdecomp}.
\end{proof}

\begin{proposition}[Signs of the welfare effects]
\label{prop_welfsign}
Under the baseline complementarity regime $(\rho<1-\alpha)$:
\emph{(a)} the AI shock is welfare-improving, $\mathrm{d}W_t/\mathrm{d}\phi_t>0$,
because $\mathrm{d}\gamma_{t+1}/\mathrm{d}\phi_t=0$ while
$\mathrm{d}w_t/\mathrm{d}\phi_t,\ \mathrm{d}d_t/\mathrm{d}\phi_t,\
\mathrm{d}R_{t+1}/\mathrm{d}\phi_t>0$ (Propositions~\ref{prop4}--\ref{prop5});
\emph{(b)} the welfare effect of the longevity shock is of ambiguous sign: the
value of a longer life (the term $\beta\log c_{t+1}^{r}$, positive provided
retirement consumption exceeds the felicity's unit reference, $c_{t+1}^{r}>1$)
and the capital-deepening wage gain
($\mathrm{d}w_t/\mathrm{d}\gamma_{t+1}>0$) pull welfare up, whereas the cost of
financing retirement and the compression of returns
($\mathrm{d}R_{t+1}/\mathrm{d}\gamma_{t+1}<0$) pull it down. Longevity raises
cohort welfare if and only if
\[
\beta\log c_{t+1}^{r}+\lambda_t h_t\,\frac{\mathrm{d}w_t}{\mathrm{d}\gamma_{t+1}}
\;>\;
\lambda_t\frac{c_{t+1}^{r}}{R_{t+1}}
+\lambda_t\frac{\gamma_{t+1}c_{t+1}^{r}}{R_{t+1}^{2}}\Big|\tfrac{\mathrm{d}R_{t+1}}{\mathrm{d}\gamma_{t+1}}\Big|
-\lambda_t\frac{\mathrm{d}d_t}{\mathrm{d}\gamma_{t+1}} .
\]
\end{proposition}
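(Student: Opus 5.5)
The plan is to read both parts off the decomposition of Proposition~\ref{prop_welfdecomp}, signing each channel term by term. The common ingredients are the signs of the level objects that multiply the derivatives. From the first row of the household first-order system $F(\Gamma_t)=0$ we have $\lambda_t=U_c(c_t^w,n_t)=1/c_t^w>0$ under logarithmic preferences. Effective hours are $h_t=1-\kappa n_t>0$ at an interior optimum. The term $\gamma_{t+1}c_{t+1}^r/R_{t+1}^2$ is positive because $\gamma_{t+1}\in(0,1)$, $c_{t+1}^r>0$ and $R_{t+1}>0$. Consequently the income and return channels take the signs of $\mathrm{d}w_t/\mathrm{d}\xi$, $\mathrm{d}d_t/\mathrm{d}\xi$ and $\mathrm{d}R_{t+1}/\mathrm{d}\xi$.

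For part (a), set $\xi=\phi_t$. Because $\gamma$ follows its own AR(1) process, independent of $\varepsilon^\phi$, we have $\mathrm{d}\gamma_{t+1}/\mathrm{d}\phi_t=0$ and the longevity channel vanishes.
\begin{itemize}
\item The wage response is positive: the chain $\mathrm{d}w_t/\mathrm{d}\phi_t>0$ follows from Proposition~\ref{prop5} under $\rho<1-\alpha$. The productivity shock enters only through the effective stock $\phi_t a_t$, so the wage elasticity $(1-\alpha-\rho)s_a$ of Lemma~\ref{lem:labor} applies equally to $\ln\phi_t$.
\item The dividend and return responses are positive, $\mathrm{d}d_t/\mathrm{d}\phi_t>0$ and $\mathrm{d}R_{t+1}/\mathrm{d}\phi_t>0$, taken from Proposition~\ref{prop4}, the global AI comparative statics. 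It establishes that the capital-demand disturbance raises $R^k$, $R^a$ and, through no-arbitrage, the gross return $R$. Dividends are profits at the markup $\xi/(\xi-1)$ and so scale with output, which rises by Lemma~\ref{lem:labor}.
\end{itemize}
With every nonzero term positive, $\mathrm{d}W_t/\mathrm{d}\phi_t>0$.

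For part (b), set $\xi=\gamma_{t+1}$, so $\mathrm{d}\gamma_{t+1}/\mathrm{d}\xi=1$. The decomposition then reads
\[
\frac{\mathrm{d}W_t}{\mathrm{d}\gamma_{t+1}}=\lambda_t h_t\frac{\mathrm{d}w_t}{\mathrm{d}\gamma_{t+1}}+\lambda_t\frac{\mathrm{d}d_t}{\mathrm{d}\gamma_{t+1}}+\lambda_t\frac{\gamma_{t+1}c_{t+1}^r}{R_{t+1}^2}\frac{\mathrm{d}R_{t+1}}{\mathrm{d}\gamma_{t+1}}+\beta\log c_{t+1}^r-\lambda_t\frac{c_{t+1}^r}{R_{t+1}}.
\]
Since $\mathrm{d}R_{t+1}/\mathrm{d}\gamma_{t+1}<0$ (the saving-supply result of Proposition~\ref{proplonge} and the capital-deepening argument), the return term equals $-\lambda_t\gamma_{t+1}c_{t+1}^r R_{t+1}^{-2}\,|\mathrm{d}R_{t+1}/\mathrm{d}\gamma_{t+1}|$. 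Grouping the positive pieces on the left and the negative pieces on the right, and moving the dividend term to the right with a minus sign, gives the displayed inequality as an exact equivalence with $\mathrm{d}W_t/\mathrm{d}\gamma_{t+1}>0$. The dividend term is left as is because its sign is not claimed.

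The positive pieces have clear sources. $\beta\log c_{t+1}^r>0$ exactly when $c_{t+1}^r>1$. The wage gain $\mathrm{d}w_t/\mathrm{d}\gamma_{t+1}>0$ comes from capital deepening: more saving raises $K$ and $a$, and the marginal product of labor rises because $\partial^2Y/\partial K\partial H>0$ and, under $\rho<1-\alpha$, $\partial^2Y/\partial a\partial H>0$.

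The two sides involve independent magnitudes, so neither dominates generically, and that establishes the ambiguity. A two-parameter illustration would make this concrete: letting $c_{t+1}^r\to1$ kills the value-of-life term, while letting the wage response shrink lets the positive side be swamped.

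The main obstacle is the sign of $\mathrm{d}d_t/\mathrm{d}\phi_t$ and of $\mathrm{d}R_{t+1}/\mathrm{d}\phi_t$ in general equilibrium. Proposition~\ref{prop5} explicitly warns that $\mathrm{d}R^a_t/\mathrm{d}a_t$ is ambiguous. I would therefore differentiate with respect to $\phi_t$ at a given capital stock, not with respect to $a_t$. At a given stock, the shock raises the marginal products of $K$ and of $a$: the elasticity of $R^a$ with respect to $\phi$ is $(1-\alpha-\rho)s_a+\rho>0$. I would then rely on Proposition~\ref{prop4} for the claim that the predetermined-capital impact response carries over to $R_{t+1}$ through the no-arbitrage condition.
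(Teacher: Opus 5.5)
Your proposal is correct and follows essentially the paper's own argument: you sign each channel of Proposition~\ref{prop_welfdecomp} using the strictly positive coefficients $\lambda_t h_t$, $\lambda_t$, and $\lambda_t\gamma_{t+1}c_{t+1}^{r}/R_{t+1}^{2}$; you drop the longevity channel for the AI shock because $\mathrm{d}\gamma_{t+1}/\mathrm{d}\phi_t=0$; and for the longevity shock you rearrange the decomposition, using $\mathrm{d}R_{t+1}/\mathrm{d}\gamma_{t+1}<0$ to introduce the absolute value, into the stated inequality. The only difference is the source of the return signs: the paper takes both $\mathrm{d}R_{t+1}/\mathrm{d}\phi_t>0$ and $\mathrm{d}R_{t+1}/\mathrm{d}\gamma_{t+1}<0$ directly from the capital-market-clearing argument of Proposition~\ref{prop_robpref}, which sidesteps your ``main obstacle'' without the fixed-stock detour through Proposition~\ref{prop4}, and that proposition, not Proposition~\ref{proplonge} (a fertility result), is the right citation for the longevity sign.
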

\begin{proof}
See Appendix~\ref{app:welfsign}.
\end{proof}

\begin{proposition}[The two shocks and dynamic efficiency]
\label{prop_dyneff}
Along a balanced path on which aggregate capital and output grow at the gross
population growth rate $n_t$, the competitive equilibrium is dynamically efficient
if and only if the gross return weakly exceeds that growth rate, $R_t\ge n_t$
\citep{diamond1965national}. The marginal effect of a structural shock $\xi$ on the
efficiency margin decomposes as
\[
\frac{\mathrm{d}(R_t-n_t)}{\mathrm{d}\xi}
=\frac{\mathrm{d}R_t}{\mathrm{d}\xi}-\frac{\mathrm{d}n_t}{\mathrm{d}\xi}.
\]
For the AI shock both terms are nonnegative
($\mathrm{d}R_t/\mathrm{d}\phi_t>0$, $\mathrm{d}n_t/\mathrm{d}\phi_t\ge0$), so the
margin widens if and only if the return response dominates the fertility response,
$\mathrm{d}R_t/\mathrm{d}\phi_t>\mathrm{d}n_t/\mathrm{d}\phi_t$, which holds in the
baseline calibration where the return to capital is the most responsive variable.
For the longevity shock both terms are nonpositive
($\mathrm{d}R_t/\mathrm{d}\gamma_{t+1}<0$, $\mathrm{d}n_t/\mathrm{d}\gamma_{t+1}<0$),
so the net effect on the margin is ambiguous in general; the saving-supply force
nonetheless compresses $R_t$ toward the growth rate, the channel through which
population aging threatens dynamic efficiency.
\end{proposition}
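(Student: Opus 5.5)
I would prove the statement in three steps, treating the Diamond characterization as a known benchmark and then doing the sign bookkeeping with results already established.

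\emph{Step 1: the efficiency criterion.} On a balanced path, per-worker consumption is $c=y-z$, and investment per worker that keeps capital per worker constant scales with $(n-1+\delta)$ for each stock. Each installed unit of capital then yields the steady-state net return $R$, while maintaining it costs growth at rate $n$. By the standard golden-rule argument of \citet{diamond1965national}, a marginal reduction of capital per worker frees consumption forever without loss iff the marginal net product falls short of the growth-and-depreciation requirement. The no-arbitrage condition, with $q^k=q^a=1$ in steady state because adjustment costs vanish when $\vartheta=n$, sets $R$ equal to both $R^k+1-\delta_k$ and $R^a+1-\delta_a$. Hence overaccumulation of either capital is equivalent to $R<n$. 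I would state this for the common return and then cite \citet{diamond1965national} (extended by Cass) for the converse direction: if $R\ge n$, no feasible reallocation is a Pareto improvement.

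\emph{Step 2: the decomposition.} Once the margin is written as $R_t-n_t$, its derivative is just linearity of differentiation along the equilibrium map implied by the implicit function theorem (regular equilibrium, invertible Jacobian). That map gives differentiable $R_t(\xi)$ and $n_t(\xi)$, so the decomposition requires no further argument.

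\emph{Step 3: signs.} For the AI shock, I would use $\mathrm{d}R_t/\mathrm{d}\phi_t>0$ from the capital-demand characterization (Propositions~\ref{prop4}--\ref{prop5} and the welfare sign result in Proposition~\ref{prop_welfsign}(a)). For $\mathrm{d}n_t/\mathrm{d}\phi_t\ge0$, the income effect on fertility holds under log utility in Proposition~\ref{prop4}. The ``iff'' then follows by rearrangement. The claim that it holds at baseline is quantitative: I would cite the impulse responses, where the $R$ response (about $+0.08\%$) exceeds the fertility response (marginal), after converting both to level derivatives with the steady-state values $R_{ss}$ and $n_{ss}=2$. For the longevity shock, I would take $\mathrm{d}R_t/\mathrm{d}\gamma_{t+1}<0$ from the saving-supply mechanism and $\mathrm{d}n_t/\mathrm{d}\gamma_{t+1}<0$ from Proposition~\ref{proplonge}. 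The difference of two nonpositive terms is unsigned, which is the ambiguity statement.

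The main obstacle is Step 1. Because of the $1/N_t$ variety normalization and the markup, the competitive return equals the marginal product times $\theta=(\xi-1)/\xi$, not the marginal product itself. The Diamond criterion is a statement about the marginal product of capital. I would address this by stating the criterion, as the proposition does, directly in terms of the market return $R_t$ that households face, and by noting that the markup wedge only strengthens $R\ge n$ as a sufficient condition. Alternatively, I would restrict the ``if and only if'' to the decentralized return on saving, as the paper's framing suggests.
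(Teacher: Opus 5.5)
Your proposal follows the paper's proof, which likewise cites \citet{diamond1965national} for the criterion, uses linearity for the decomposition, and reads off the signs from earlier results. The paper takes both return signs directly from the capital-market-clearing argument of Proposition~\ref{prop_robpref}, $\mathrm{d}n_t/\mathrm{d}\phi_t\ge0$ from Proposition~\ref{prop4}, and $\mathrm{d}n_t/\mathrm{d}\gamma_{t+1}<0$ from Proposition~\ref{proplonge}.

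Two of your citations are weaker than they need to be. Proposition~\ref{prop_robpref} is a cleaner source for $\mathrm{d}R_t/\mathrm{d}\phi_t>0$ than Propositions~\ref{prop4}--\ref{prop5}: both perturb $a_t$ rather than $\phi_t$, and Proposition~\ref{prop5} leaves the AI rental's response ambiguous. Proposition~\ref{prop4} strictly gives only a sign condition for fertility (full income versus the wage), not $\mathrm{d}n_t/\mathrm{d}\phi_t\ge0$ outright, and the paper relies on it in the same way.

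Your golden-rule sketch and your impulse-response check of the baseline inequality go beyond the paper's proof. So does your observation that the markup wedge $R^{k}_t=\theta_t\,\mathrm{MPK}$ makes $R_t\ge n_t$ only a sufficient condition for efficiency; the paper's bare appeal to Diamond passes over this.
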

\begin{proof}
See Appendix~\ref{app:dyneff}.
\end{proof}

\begin{remark}
Because parents bear the full time cost of child-rearing, $\kappa w_t$ per child,
the fertility margin carries no static externality in this economy: the private
and social marginal costs of a child coincide. The welfare-relevant friction is
therefore intertemporal and operates through Proposition~\ref{prop_dyneff}: it is
the saving-supply force unleashed by longevity, not the fertility decision per se,
that can push the economy toward dynamic inefficiency.
\end{remark}

The decomposition makes precise the sense in which the two structural forces are
not symmetric in welfare terms. The AI shock is unambiguously welfare-improving in
the complementarity regime and pushes the economy deeper into the dynamically
efficient region; the longevity shock confers the first-order benefit of a longer
life but, by compressing returns and deepening capital, simultaneously raises the
cost of financing retirement and compresses the return toward the economy's growth
rate, so that its net welfare effect must be settled quantitatively.
Figure~\ref{fig:welfare} reports the calibrated counterpart of
Definition~\ref{def:welfare}: the consumption-equivalent welfare paths
$\omega_t$ implied by each shock along its transition. In line with
Proposition~\ref{prop_welfsign}, the AI shock delivers a small but uniformly
positive welfare gain that decays monotonically, small because AI operates chiefly
through asset prices rather than through the consumption and fertility quantities
that enter welfare. The longevity shock, by contrast, produces a short-run welfare
loss roughly an order of magnitude larger, as households compress worker
consumption and fertility to finance the longer retirement; this loss recovers
gradually as the induced capital deepening raises wages. The calibrated outcome
thus resolves the ambiguity of Proposition~\ref{prop_welfsign}(b) on impact in
favor of the cost-of-financing and return-compression channels, and the two paths
are plotted on separate vertical scales because the AI effect is far smaller.

\begin{figure}[!ht]
\centering
\IfFileExists{welfare.pdf}{%
  \includegraphics[width=0.72\textwidth]{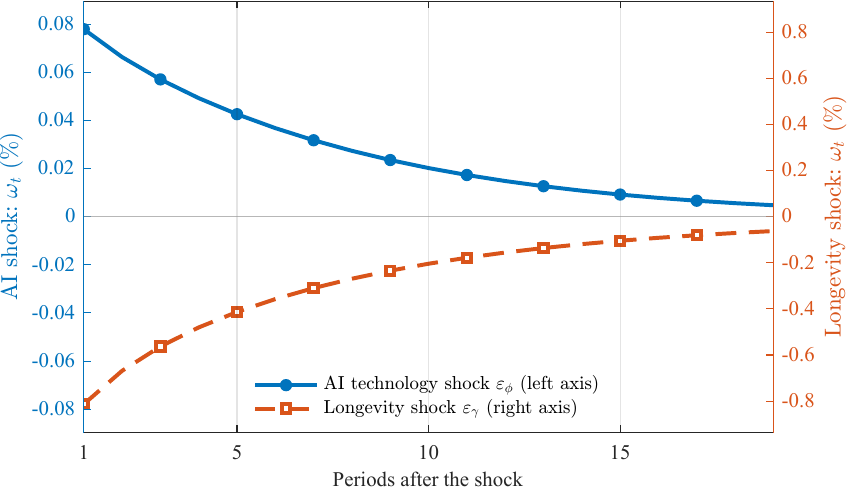}}{%
  \fbox{\parbox[c][3.2cm][c]{0.72\textwidth}{\centering\itshape
  welfare.pdf: run \texttt{make\_welfare\_figure.m} (or
  \texttt{make\_all\_figures.m}) to generate this figure.}}}
\caption{Consumption-equivalent welfare paths $\omega_t$ (percent,
Definition~\ref{def:welfare}) for the AI shock $\varepsilon_{\phi}$ (solid, left
axis) and the longevity shock $\varepsilon_{\gamma}$ (dashed, right axis). The
scales differ because the AI welfare effect is an order of magnitude smaller.}
\label{fig:welfare}
\end{figure}

\section{Robustness analysis}
\label{sec:robustness}

We assess the sensitivity of the impulse responses to a positive AI
productivity shock along three structural margins: the labor--AI elasticity
of substitution $\sigma$, the physical-capital share $\alpha$, and the
persistence of the AI shock $\rho_{\phi}$. The first governs the
substitution--complementarity distinction that determines the sign of the wage
and fertility responses; the second alters the relative weight of physical
capital in production; the third varies the half-life of the technological
impulse. For each exercise, we re-solve the model over a grid of values,
hold all remaining parameters at their baseline, and report the impulse
responses of six headline variables: the real wage, fertility, output,
aggregate consumption, and the rentals of AI and physical capital.
Throughout, series are identified by three redundant attributes: color
(from a colorblind-safe palette), line style, and marker shape; the figures
therefore remain readable under black-and-white reproduction. In the
bottom-row bar charts, fills additionally carry a light-to-dark grayscale
gradient and their edges retain the line colors, so that each parameter
value is identifiable by both shade and edge. Because the $\sigma$ exercise
governs whether AI complements or substitutes for labor, we discuss it in
detail here; the supporting exercises on $\alpha$ and $\rho_{\phi}$ are
reported in Appendix~\ref{app:robust}.
\begin{figure}[!htbp]
\centering
\includegraphics[width=\textwidth,height=0.85\textheight,keepaspectratio]%
{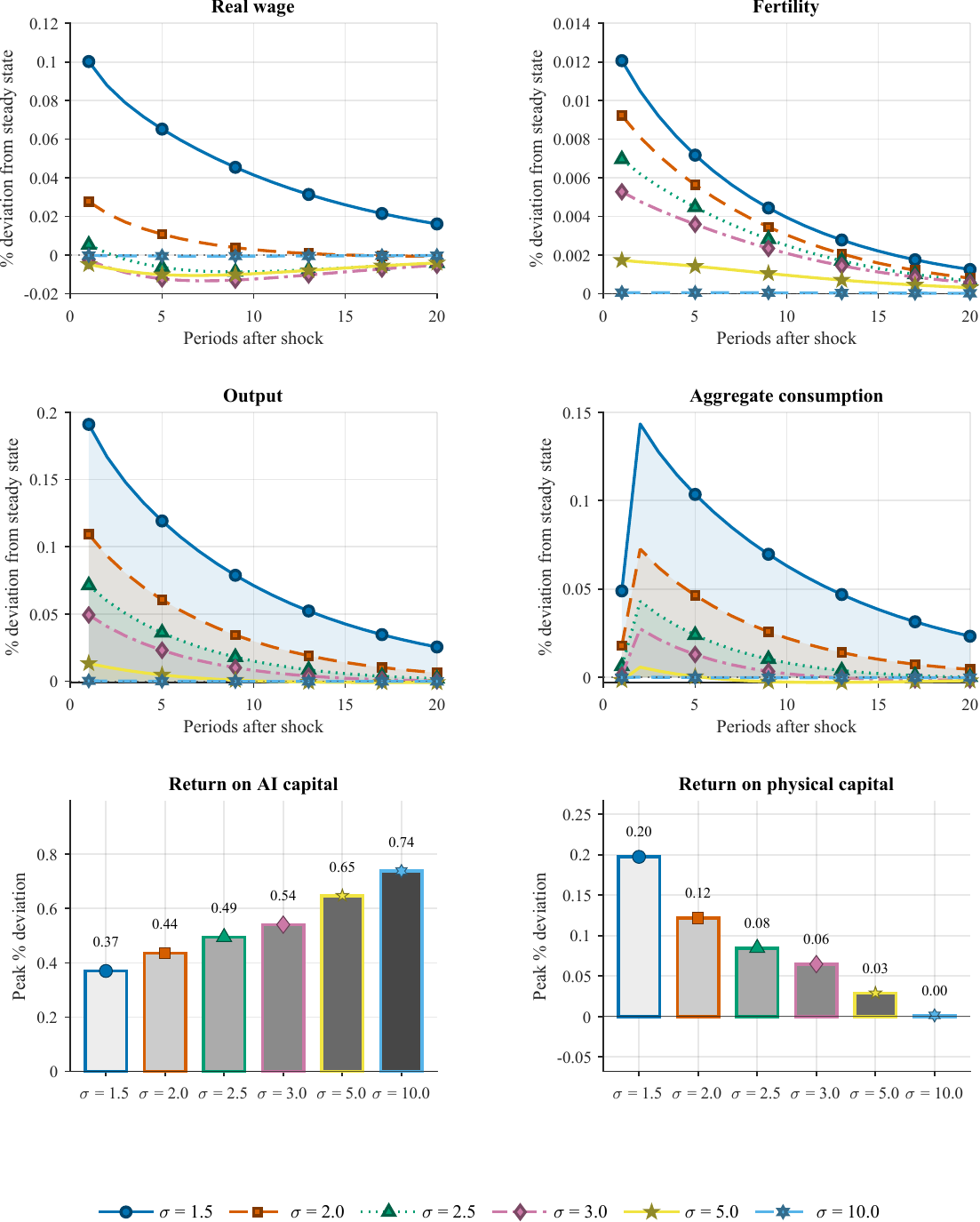}
\caption{Impulse responses to a positive AI shock for alternative values of
the labor--AI elasticity $\sigma$. Top: real wage and fertility. Middle: output
and aggregate consumption. Bottom: peak responses of the two capital rentals.}
\label{fig:robust_sigma}
\end{figure}
\subsection{The labor--AI elasticity of substitution}
The qualitative response to an AI shock depends on whether labor and AI
capital are Edgeworth complements or substitutes in production. The
threshold separating the two regimes is $\sigma=1/\alpha\approx 3.03$; the baseline value $\sigma=2$ places
the economy in the complementarity regime. We sweep
$\sigma \in \{1.5, 2, \ldots, 10\}$, spanning the
complementarity regime ($\sigma<1/\alpha$), the knife-edge
($\sigma\approx 1/\alpha$), and the substitution regime
($\sigma>1/\alpha$). This grid provides a direct test of the analytical
claim of Corollary~\ref{prop6}: as $\sigma$ crosses $1/\alpha$, the wage
and fertility responses to AI reverse sign.

Three patterns emerge from Figure~\ref{fig:robust_sigma}. First, the wage
response is positive and large for low $\sigma$, declines monotonically as
$\sigma$ rises, vanishes near the knife-edge $\sigma\approx 3.03$, and
turns negative once $\sigma>1/\alpha$; the decline is most pronounced at
intermediate elasticities and then attenuates toward zero in the
perfect-substitute limit, where the wage becomes insensitive to the AI
shock because labor's marginal product is pinned down by the near-linear
technology. This sign reversal is precisely the prediction of
Proposition~\ref{prop5} and Corollary~\ref{prop6}: AI raises the marginal
product of labor under complementarity and depresses it under substitution.
Second, the fertility response tracks the same wage channel. Because the
income effect that sustains fertility operates through the AI-induced wage
gain, the modest baseline rise in fertility weakens monotonically as
$\sigma$ increases and becomes negligible in the perfect-substitute limit,
where the AI shock no longer raises lifetime wealth on net and households
cease to reallocate time and saving toward children. Fertility thus serves
as a sensitive diagnostic of the prevailing regime. Third, output is positive
at impact for every $\sigma$, since the AI shock raises the marginal
product of AI capital regardless of the nest; the response path flattens
as $\sigma$ rises because the labor-side amplification through wages and
labor demand weakens. The bottom row reports the peak responses of the two
rental rates. The rental of AI capital jumps sharply on impact in every
case, and its peak response rises in $\sigma$, from about $0.37\%$ at
$\sigma=1.5$ to $0.74\%$ at $\sigma=10$, as production leans more heavily on
the AI factor; the peak response of the physical-capital rental moves in the
opposite direction, falling steadily from about $0.20\%$ to nearly zero as
the marginal product is reallocated toward AI capital when the two factors
become closer substitutes for labor. The exercise confirms that the substantive
results of the paper derive from the complementarity regime that the
baseline calibration occupies; the labor-replacing regime is internally
consistent and exhibits the sign patterns predicted by
Corollary~\ref{prop6}, but it lies outside the range of empirical
estimates of $\sigma$. We stress that this elasticity dependence is a
deliberate and transparent feature of the framework rather than a hidden
fragility: the model nests both regimes within a single technology, locates
the knife-edge analytically at $\sigma=1/\alpha$, and characterizes the
behavior on either side, so that a reader can map any preferred empirical value
of the labor--AI elasticity directly into the model's predictions.

The complementary exercises reported in Appendix~\ref{app:robust} confirm
that the signs of the wage, fertility, output, and consumption responses
to a positive AI shock are robust to plausible variation in the capital
share $\alpha$ and in the persistence $\rho_{\phi}$ of the shock; only the
labor--AI elasticity of substitution can reverse them, and only beyond the
empirically supported range.

\subsection{Robustness to the production structure}
\label{sec:robust_nest}
A natural concern is that the substitution--complementarity distinction is an
artifact of the Cobb--Douglas outer nest, which fixes the physical-capital share
at $\alpha$ and imposes a unit elasticity between physical capital and the
labor--AI composite. We address this directly by embedding the technology in a
strictly more general two-level CES and showing that the complementarity
threshold survives, merely relocating to a transparent function of two separately
disciplinable elasticities.

\begin{proposition}[Generality of the complementarity threshold]
\label{prop_outer}
Replace the Cobb--Douglas outer aggregator with a CES nest of physical capital and
the labor--AI composite,
\[
Y_t=\Big[\alpha\,K_t^{\eta}+(1-\alpha)\,G_t^{\eta}\Big]^{1/\eta},
\qquad
G_t=\big[H_t^{\rho}+(\phi_t a_t)^{\rho}\big]^{1/\rho},
\]
with outer elasticity $\sigma_o=1/(1-\eta)$ between physical capital and the
composite $G_t$, and inner elasticity $\sigma=1/(1-\rho)$ between labor and AI
capital. Let
$s_{K}\equiv \alpha K_t^{\eta}\big/\!\big[\alpha K_t^{\eta}+(1-\alpha)G_t^{\eta}\big]$
be the equilibrium value share of physical capital. Then AI capital is an
Edgeworth complement to labor, $\partial^2 Y_t/\partial H_t\,\partial a_t>0$, so
that a positive AI shock raises the marginal product of labor and the wage, if and
only if
\[
\frac{1}{\sigma}>\frac{s_{K}}{\sigma_o}
\qquad\Longleftrightarrow\qquad
\sigma<\sigma^{*}\equiv\frac{\sigma_o}{s_{K}} .
\]
The baseline is the Cobb--Douglas limit $\sigma_o\to1$, in which $s_K\to\alpha$ is
constant and the threshold collapses to $\sigma^{*}=1/\alpha$.
\end{proposition}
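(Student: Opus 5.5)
The plan is to compute the cross-partial $\partial^2 Y_t/\partial H_t\,\partial a_t$ directly, working in logarithms of the marginal product of labor so that the sign reduces to a linear expression in the elasticity parameters. First, I differentiate the two-level CES through the inner aggregator. Since $\partial Y_t/\partial G_t=(1-\alpha)Y_t^{1-\eta}G_t^{\eta-1}$ and $\partial G_t/\partial H_t=G_t^{1-\rho}H_t^{\rho-1}$, the marginal product of labor is
\[
\mathrm{MPL}_t=(1-\alpha)\,Y_t^{1-\eta}\,G_t^{\eta-\rho}\,H_t^{\rho-1}.
\]
Because $\mathrm{MPL}_t>0$, the sign of $\partial^2Y_t/\partial H_t\,\partial a_t$ is the sign of $\partial\ln\mathrm{MPL}_t/\partial a_t$. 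The only dependence on $a_t$ runs through $G_t$, both directly and via $Y_t$.

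Second, I take logs and differentiate with respect to $a_t$ at fixed $(K_t,H_t)$:
\[
\frac{\partial\ln\mathrm{MPL}_t}{\partial a_t}
=(1-\eta)\,\frac{\partial\ln Y_t}{\partial a_t}+(\eta-\rho)\,\frac{\partial\ln G_t}{\partial a_t}.
\]
I then use $\partial\ln Y_t/\partial\ln G_t=(1-\alpha)G_t^{\eta}/[\alpha K_t^{\eta}+(1-\alpha)G_t^{\eta}]=1-s_K$, which turns the bracket into
\[
\big[(1-\eta)(1-s_K)+\eta-\rho\big]\,\frac{\partial\ln G_t}{\partial a_t}
=\Big[(1-\rho)-s_K(1-\eta)\Big]\,\frac{\partial\ln G_t}{\partial a_t}.
\]
Since $\partial\ln G_t/\partial a_t=\phi_t^{\rho}a_t^{\rho-1}/[H_t^{\rho}+(\phi_ta_t)^{\rho}]>0$ for every admissible $\rho$, the cross-partial is positive if and only if $1-\rho>s_K(1-\eta)$. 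With $1-\rho=1/\sigma$ and $1-\eta=1/\sigma_o$, this is exactly $1/\sigma>s_K/\sigma_o$, or equivalently $\sigma<\sigma_o/s_K$.

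Third, I connect the cross-partial to the wage. At fixed $K_t$ and $H_t$, a rise in $a_t$ or $\phi_t$ raises $\mathrm{MPL}_t$, and hence $w_t=\theta_t\,\mathrm{MPL}_t$ with $\theta_t$ constant, under the same sign condition. For $\phi_t$ this follows because $\phi_t$ and $a_t$ enter only through the product $\phi_ta_t$.

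Finally, I treat the Cobb--Douglas limit. As $\eta\to0$, $s_K\to\alpha$, since the value share of the Cobb--Douglas outer nest is the exponent $\alpha$, and $\sigma_o\to1$. The threshold therefore collapses to $1/\sigma>\alpha$, i.e.\ $\rho<1-\alpha$, recovering the condition of Proposition~\ref{prop5}. As a consistency check, the exponent $(1-\alpha-\rho)s_a$ in Lemma~\ref{lem:labor} should reappear as the $\eta=0$ case of the bracket above, multiplied by the elasticity $\partial\ln G_t/\partial\ln a_t=s_a$.

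The main obstacle is bookkeeping rather than substance. The weights $\alpha$ and $1-\alpha$ inside the outer CES must be kept consistent with the Cobb--Douglas limit: at $\eta\to0$ the CES tends to $K_t^{\alpha}G_t^{1-\alpha}$, so the limit is legitimate. The identification of $s_K$ as the value share must also be verified using the marginal product of $K_t$, so that the formula holds at the equilibrium factor allocation rather than only at the parameter level.
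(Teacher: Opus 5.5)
Your proposal is correct and is essentially the paper's proof: both log-differentiate $\mathrm{MPL}_t=(1-\alpha)Y_t^{1-\eta}G_t^{\eta-\rho}H_t^{\rho-1}$ (the paper writes $Y_t^{1-\eta}$ as $\Phi_t^{(1-\eta)/\eta}$ with $\Phi_t=\alpha K_t^{\eta}+(1-\alpha)G_t^{\eta}$), use that the elasticity of $Y_t$ with respect to $G_t$ is $1-s_K$, and reduce the sign of the cross-partial to that of $(1-\rho)-(1-\eta)s_K$. Your extras (the $\phi_t$ argument and the check against Lemma~\ref{lem:labor}) are fine, and the value-share concern is moot, since $s_K=K_t\,(\partial Y_t/\partial K_t)/Y_t$ holds identically at any allocation.
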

\begin{proof}
See Appendix~\ref{app:outer}.
\end{proof}

Proposition~\ref{prop_outer} shows that the knife-edge is governed not by the
Cobb--Douglas assumption but by the ratio of the outer elasticity to the
physical-capital share. Evaluated at the baseline factor share
($s_K\approx\alpha=0.33$), the boundary is $\sigma^{*}=\sigma_o/s_K\approx
3.03\,\sigma_o$: it equals $2.42$ for $\sigma_o=0.8$, $3.03$ for the Cobb--Douglas
case $\sigma_o=1$, and $3.64$ for $\sigma_o=1.2$. The baseline economy
($\sigma=2$) therefore remains in the complementarity regime for every outer
elasticity $\sigma_o>\sigma\,s_K\approx0.67$, that is, for the Cobb--Douglas nest
and for the upper part of the empirical range of the capital--labor substitution
elasticity. Only when physical capital and the labor--AI composite are strong
gross complements, $\sigma_o$ below roughly two-thirds, does AI become
labor-replacing at $\sigma=2$. The qualitative conclusions of the positive
analysis thus rest on two transparent and independently estimable elasticities,
$\sigma$ and $\sigma_o$, rather than on the functional form of the outer nest.

\subsection{Relation to functional forms in the AI macroeconomics literature}
\label{sec:robust_litforms}
The nested CES nests, as special cases, the production technologies used elsewhere in
the macroeconomics of automation, so the robustness of the paper's conclusions to
those alternatives can be read directly off the elasticity grid above. Three
benchmark specifications recur in that literature. First, \emph{automation capital as
a perfect substitute for labor}: \citet{prettner2019note} and \citet{lankisch2019can}
write output as $Y_t=K_t^{\alpha}\big(H_t+\phi_t a_t\big)^{1-\alpha}$, with AI capital
and labor perfectly substitutable. This is exactly the limit $\sigma\to\infty$ of our
inner nest (Corollary~\ref{prop6}); it places the economy in the labor-replacing
regime and reverses the wage and fertility responses. Our framework reproduces this
case but shows that it requires an elasticity far above existing estimates of the
labor--automation elasticity. Second, \emph{task-based automation}: the framework of
\citet{acemoglu2018race} and \citet{acemoglu2019automation} aggregates a continuum of
tasks performed by capital or labor and, at the aggregate level, reduces to a CES
between labor and automation capital whose elasticity is governed by the share of
automated tasks, which $\sigma$ parameterizes directly. Third, \emph{labor-augmenting
AI}, $Y_t=K_t^{\alpha}\big(\phi_t a_t H_t\big)^{1-\alpha}$, in which AI raises labor
productivity multiplicatively; here $\partial^2 Y_t/\partial H_t\partial a_t>0$ for
every parameter value, so an AI shock raises the marginal product of labor
unconditionally, strengthening rather than challenging the complementarity result.
Confronting the model with these alternative forms therefore leaves the headline
mechanism, the opposite-signed transmission of the AI and longevity shocks through the
return to capital, intact: only the perfect-substitute specification overturns the
wage and fertility signs, and it does so outside the empirically relevant range of the
labor--AI elasticity.

\subsection{Sensitivity to the utility functional form}
\label{sec:robust_utility}
The exercises above vary technology and shock parameters while holding
preferences at the logarithmic benchmark $u(c^{w},n)=\log c^{w}+\log n$,
$v(c^{r})=\log c^{r}$. Because the logarithmic case is a knife-edge along
several margins, implying a unit intertemporal elasticity of substitution
(IES) and a unit-elastic demand for children, we now separate the conclusions
that are properties of preferences from those that are structural. We consider three
departures from the benchmark: (i) constant-relative-risk-aversion (CRRA)
felicity over consumption, $v(c^{r})=(c^{r})^{1-\theta}/(1-\theta)$ with
$\theta>0$, $\theta\neq1$, which detaches the IES $1/\theta$ from unity;
(ii) iso-elastic felicity over fertility,
$u(c^{w},n)=\log c^{w}+\chi\,n^{1-\eta}/(1-\eta)$, which detaches the own-price
elasticity of fertility demand from $-1$; and (iii) Greenwood--Hercowitz--Huffman
(GHH) non-separability, which removes the wealth effect on the time-allocation
margin.

Two conclusions are invariant across these specifications, and one is genuinely
knife-edge. First, the demand-versus-supply asymmetry that organizes the
paper, namely the AI shock raising the return to capital and the longevity shock
compressing it, is a property of the capital market and the production block,
not of preferences. We record this formally.
\begin{proposition}[Preference-robustness of the demand--supply asymmetry]
\label{prop_robpref}
Consider any preferences $U(c^{w},n)+\beta\gamma_{t+1}V(c^{r})$ with $U,V\in C^{1}$,
strictly increasing and strictly quasi-concave, for which consumption and the
number of children are normal goods and the retirement-saving motive is increasing
in survival, $\partial s_t^{w}/\partial\gamma_{t+1}>0$. Then, at an interior
equilibrium with capital-market clearing $s_t^{w}=q_t^{k}k_{t+1}+q_t^{a}a_{t+1}$,
the gross return on saving satisfies
\[
\frac{\partial R_t}{\partial \phi_t}>0,
\qquad
\frac{\partial R_t}{\partial \gamma_{t+1}}<0 .
\]
The preference parameters $(\theta,\eta,\chi)$ affect the magnitude of these
responses but not their sign.
\end{proposition}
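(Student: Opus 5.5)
The plan is to treat the capital market as a supply--demand system in the single price $R_t$ and sign the two responses by the implicit function theorem. Preferences enter only through the saving supply, so the signs should not depend on them.

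\textbf{Step 1: demand side.} Combine the no-arbitrage condition with the firm first-order conditions, where $\theta_t=(\xi-1)/\xi$ is constant. Evaluated in the steady state with zero adjustment costs ($q^k=q^a=1$), this gives a capital-demand schedule
\[
\mathcal{D}(R_t,\phi_t)\equiv k_{t+1}+a_{t+1}.
\]
This is the capital stock at which $\theta\,\partial F/\partial k+1-\delta_k=\theta\,\partial F/\partial a+1-\delta_a=R_t$, at given effective labor $h_t$. Joint concavity of the constant-returns technology (Proposition~\ref{prop3}) makes $\mathcal{D}$ strictly decreasing in $R_t$. For the $\phi_t$ side, I will use the Remark after Proposition~\ref{prop3}:
\begin{itemize}
\item $\partial^2F/\partial a\,\partial k>0$ holds for all admissible parameters.
\item $\partial (\phi a)$-productivity raises $\partial F/\partial a$ at fixed quantities. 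Its own-elasticity is $(1-\alpha)s_a+\rho(1-s_a)>0$, which is positive even for $\rho<0$ once both marginal products are combined.
\end{itemize}
Together these give $\partial\mathcal{D}/\partial\phi_t>0$. Here $\phi_t$ shifts both rentals up at given stocks, which is the ``capital-demand disturbance'' of Section~\ref{sec:irf}.

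\textbf{Step 2: supply side.} Write saving as $\mathcal{S}(R_t,\gamma_{t+1},w_t,d_t)$, obtained from the household block $F(\Gamma_t)=0$ through the implicit function theorem. I use only the hypotheses of the proposition:
\begin{itemize}
\item $\partial s^w/\partial\gamma_{t+1}>0$ is assumed directly.
\item $\phi_t$ affects $\mathcal{S}$ only through income $w_t(1-\kappa n_t)+d_t$. Normality of both goods keeps saving from falling when income rises.
\item Clearing requires $\mathcal{S}=\mathcal{D}$. Define $G(R,\phi,\gamma)=\mathcal{D}(R,\phi)-\mathcal{S}(R,\gamma,\cdot)$. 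Then $\partial R/\partial\gamma=(\partial\mathcal{S}/\partial\gamma)/G_R$ and $\partial R/\partial\phi=-G_\phi/G_R$.
\end{itemize}

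\textbf{Step 3: signing.} The crux is showing $G_R<0$, that is, that demand crosses supply from above. This needs $\partial\mathcal{D}/\partial R<\partial\mathcal{S}/\partial R$.

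The difficulty is that with general CRRA the interest elasticity of saving can be negative when $\theta>1$, because the income effect dominates. I would try three things in order:
\begin{enumerate}
\item Use the paper's timing. $s_t^w$ finances $k_{t+1},a_{t+1}$, whose returns are $R_{t+1}$, whereas $R_t$ is pinned by the fund's discounting. The clearing condition can then be read with saving predetermined by current income, which is exactly the ``aggregate saving is fixed in the short run'' statement. On that reading $\partial\mathcal{S}/\partial R_t=0$ at the margin, so $G_R=\partial\mathcal{D}/\partial R<0$ from Step 1 alone.
\item Failing that, invoke the regularity (``interior, regular equilibrium'') assumption to impose the standard Walrasian stability sign $G_R<0$, as in Diamond-type OLG analysis.
\end{enumerate}

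With $G_R<0$, the longevity result follows: $\partial R/\partial\gamma=\mathcal{S}_\gamma/G_R<0$.

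For the AI result I need $G_\phi>0$, which means $\mathcal{D}_\phi$ must exceed the induced rise in saving through wage income. I expect this to be the main obstacle.

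\begin{itemize}
\item \textbf{First attempt:} under the predetermined-saving reading of item 1, the current wage moves $s_t^w$, not the demand for funds at $R_t$. The income effect then drops out and $G_\phi=\mathcal{D}_\phi>0$.
\item \textbf{Fallback:} bound the induced saving by the marginal propensity to save, which is at most $1$ by normality of $c^w$. Times the wage gain, this is of order $(1-\alpha-\rho)s_a$ by Lemma~\ref{lem:labor}. Compare it with the direct rental gain, which by the Euler identity $wh+R^kk+R^aa=\theta y$ is larger.
\end{itemize}

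\textbf{Step 4.} Finally, I note that $(\theta,\eta,\chi)$ enter only $\mathcal{S}_R$ and $\mathcal{S}_\gamma$, so they rescale the magnitudes without flipping the signs.
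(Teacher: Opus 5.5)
Your route is the paper's: capital-market clearing as saving supply equals capital demand in $R_t$, implicit differentiation, then signing the numerators and the common denominator. The paper, however, obtains both signs you flag as obstacles by assumption rather than by derivation. It writes saving as $s_t^{w}(R_t;\Theta)$ with no $\phi_t$ argument, so that $\partial R_t/\partial\phi_t=\mathcal{D}_\phi/(\partial s_t^{w}/\partial R_t-\mathcal{D}_R)$, and it simply asserts $\partial s_t^{w}/\partial R_t\ge0$. You are right that neither follows from the stated hypotheses. With CRRA retirement felicity and $\theta>1$, saving is decreasing in $R$, so signing $G_R$ needs an added stability condition $\mathcal{S}_R>\mathcal{D}_R$ (your item 2). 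Your item 1 is not a property of the model, since saving at $t$ responds to the return it earns; it is an assumption of the same kind.

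The real gap is $G_\phi>0$, and neither of your attempts closes it.

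\emph{First attempt.} Being predetermined relative to $R_t$ does not make $s_t^{w}$ invariant to $\phi_t$. The wage $w_t$ moves on impact (elasticity $(1-\alpha-\rho)s_a>0$, Lemma~\ref{lem:labor}), and so do the rebated profits $d_t$. Under the normality you assume, this gives $\mathcal{S}_\phi>0$, which pushes $R_t$ \emph{down}. With log preferences, $s_t^{w}=\beta\gamma_{t+1}\Omega_t/(2+\beta\gamma_{t+1})$ is proportional to full income.

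\emph{Fallback.} The Euler identity compares rentals at fixed quantities, whereas $\mathcal{D}_\phi$ is a quantity shift at fixed $R_t$. That shift depends on the curvature of the technology and on adjustment costs. It also depends on $\rho_\phi$, because $k_{t+1},a_{t+1}$ are priced off $\mathbb{E}_t\phi_{t+1}$. For a nearly transitory shock $\mathcal{D}_\phi\approx0$, while the income effect on saving is contemporaneous. Your own formula then gives $\partial R_t/\partial\phi_t<0$.

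So the AI half cannot be derived from the stated hypotheses. It needs the extra condition $\mathcal{D}_\phi>\mathcal{S}_\phi$, of which the paper's implicit $\mathcal{S}_\phi=0$ is the special case. Once $\mathcal{S}_\phi\neq0$, the preference parameters also enter the numerator, so your Step~4 (magnitude but not sign) relies on the same condition.

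Finally, your claim that the $\phi$-elasticity of the AI rental, $(1-\alpha)s_a+\rho(1-s_a)$, stays positive for $\rho<0$ is unsupported. It tends to $\rho<0$ as $s_a\to0$. So $\mathcal{D}_\phi>0$ should be argued only for $\rho>0$, or handled separately.
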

\begin{proof}
See Appendix~\ref{app:robpref}.
\end{proof}

Second, the \emph{opposite-sign} fertility responses to the two shocks survive
for the whole class of preferences in which children are a normal good carrying a
time cost: the AI shock raises fertility through the income effect of a higher
wage (Proposition~\ref{prop4}), while the longevity shock lowers it by
strengthening the life-cycle saving motive and raising the opportunity cost of
child-rearing (Proposition~\ref{proplonge}). Neither mechanism relies on the
logarithmic form. Third, the one result that is specific to the benchmark is the
\emph{near-perfect} negative correlation between fertility and hours. By
Lemma~\ref{prop7}, the time-cost channel vanishes exactly when fertility demand is
unit-elastic ($\varepsilon=-1$), which is the logarithmic case $\eta=0$; for
$\eta>0$ (inelastic demand) the fertility--hours link weakens, and for $\eta<0$
(elastic demand) it strengthens. The GHH specification likewise attenuates the
AI-to-fertility channel by muting the wealth effect, but leaves the
longevity channel and the return asymmetry intact. We therefore read the unit
fertility--hours correlation as the sharp benchmark of a more general,
same-signed pattern, exactly as anticipated in the Discussion. Taken together,
these exercises, spanning technology, shock persistence, and the preference
specification, confirm that the paper's main conclusions are structural
properties of the model rather than artifacts of fine-tuning.

\section{Discussion}
\label{sec:discussion}
The analysis brings together three strands of macroeconomic research that are
typically pursued in isolation. The first is the macroeconomics of automation and
artificial intelligence, concerned with how capital-augmenting and task-replacing
technologies reshape factor returns and the functional distribution of income
\citep{acemoglu2018race, aghion_ai_growth}. The second is the economics of
endogenous fertility, organized around the trade-off between the number of children
and the parental time they require \citep{becker_fertility, becker_murphy_tamura},
which the unified growth literature places at the center of the demographic
transition \citep{galor_weil_2000}. The third is the life-cycle and
overlapping-generations literature on longevity, in which a longer expected lifespan
strengthens the saving motive and deepens the capital stock
\citep{longevity_saving, delacroix_licandro_1999, cervellati_sunde_2005}. By
embedding all three within a single general equilibrium model with two types of
capital, our framework lets technological and demographic forces interact rather
than operate along separate margins; that interaction is the source of the paper's
central result.

That result turns on the return to capital. Both shocks are transmitted through the
same equilibrium price, but from opposite sides of the market: the AI shock shifts
the demand for capital outward, raising returns and the real interest rate, whereas
the longevity shock shifts the supply of capital outward, pushing them down. This
demand-versus-supply logic explains why two apparently unrelated disturbances leave
opposite imprints not only on returns but also on fertility, which responds jointly
to the wage, the opportunity cost of children \citep{becker_fertility}, and the
saving motive. Seen in this light, the model delivers a unified account of two
phenomena usually discussed separately: the prospect that AI sustains the return to
capital \citep{aghion_ai_growth}, and the secular decline in real interest rates
associated with population aging \citep{eggertsson_secular_2019,
gagnon_demographics_2021}. Under our calibration the second force prevails, so the
equilibrium interest rate falls on balance, consistent with the low-$r$ environment
documented for aging economies \citep{gagnon_demographics_2021}.

Several features of the model qualify these conclusions and indicate where they
are most and least robust. The sign of the fertility and wage responses to the AI
shock hinges on whether AI capital complements labor in the aggregate; were the two
strongly substitutable, the marginal product of labor and the wage could fall
\citep{acemoglu2018race}, and the income effect that supports fertility would
weaken or reverse. As stressed above, this dependence is not a hidden fragility but
the model's central comparative-static margin, reported transparently for both
regimes. The near-perfect negative
correlation between fertility and hours worked is, in turn, a direct expression of
the household's time-allocation structure, in which market work and child-rearing
compete for a fixed time endowment \citep{becker_fertility}; richer models with home
production or market childcare would attenuate this link. Finally, the variance
decomposition holds the two shocks to a common standard deviation. The finding that
longevity dominates aggregate volatility is therefore a statement about the model's
transmission mechanism, not evidence that longevity disturbances are empirically
larger than technological ones. Disciplining the relative magnitudes and frequencies
of the two shocks with data is a necessary step before drawing firm conclusions
about their historical contributions.

The two shocks are also asymmetric in welfare terms. In
the complementarity regime the AI shock is unambiguously welfare-improving and pushes
the economy deeper into the dynamically efficient region, whereas longevity confers
the first-order benefit of a longer life but, by compressing returns and deepening
capital, raises the cost of financing retirement and compresses the return $R_t$
toward the economy's growth rate $n_t$ \citep{diamond1965national}. In the calibrated
model these opposing forces resolve, on impact, in favor of the cost-of-financing and
return-compression channels: longevity produces a pronounced short-run welfare loss,
an order of magnitude larger than the small, uniformly positive gain from the AI shock,
as households cut worker consumption and fertility to finance the longer retirement;
the loss then recedes as the induced capital deepening raises wages. This links the model to the secular-stagnation reading of
population aging \citep{eggertsson_secular_2019, gagnon_demographics_2021} from the
supply side of capital, and supplies a transparent consumption-equivalent metric
(Definition~\ref{def:welfare}) that ranks the two disturbances along the transition.

These qualifications notwithstanding, the analysis carries a clear, policy-relevant
message. Because the fertility decline operates through the reallocation of
household time and saving rather than through technology per se, policies that lower
the time cost of children are likely to be more effective at sustaining fertility
than interventions aimed at the pace of automation. More broadly, the framework
implies that the macroeconomic consequences of AI and population aging cannot be
assessed in isolation: their effects on capital returns partly offset one another,
whereas their effects on fertility compound in the same direction, with first-order
implications for the long-run trajectory of the old-age dependency ratio
\citep{galor_weil_2000}.
\section{Conclusion}

This paper has analyzed the joint macroeconomic effects of artificial
intelligence and rising longevity in a unified general equilibrium model with a
life-cycle structure, endogenous fertility, and two forms of capital, physical
and AI. The central finding is that the two forces transmit through the same
equilibrium price from opposite sides of the market: the AI shock is a
capital-demand disturbance that raises returns, most sharply on AI capital,
reallocates investment toward that asset, and lifts output on impact, whereas the
longevity shock is a saving-supply disturbance that deepens the capital stock,
compresses returns and the real interest rate, and propagates through hump-shaped
dynamics. The same asymmetry moves fertility in opposite directions, mildly upward
under AI through an income effect and downward under longevity, so that the old-age
dependency ratio is the most responsive variable in the system; fertility is
robustly countercyclical and almost perfectly negatively correlated with hours,
placing household time allocation at the center of the mechanism.

Quantitatively, longevity is the dominant source of fluctuations, accounting for
the bulk of the variance of most aggregates, while AI operates chiefly as an
asset-return shock concentrated on the price of capital, particularly the return
to AI capital. These conclusions are structural rather than calibrated: across the
capital share, the shock persistence, the utility specification, and a
generalized CES production nest, only an empirically implausible labor--AI
elasticity, beyond the threshold $\sigma^{*}=\sigma_o/s_K$ (equal to
$1/\alpha\approx 3.03$ in the baseline), can reverse the wage and fertility
signs. A welfare analysis sharpens the
asymmetry: under complementarity the AI shock yields a small, uniformly positive
welfare gain and moves the economy deeper into the dynamically efficient region,
whereas the longevity shock trades the value of a longer life against costlier
retirement financing and compressed returns, pushing $R_t$ toward the growth rate
$n_t$. In the calibrated model the latter forces dominate on impact, so longevity
generates a short-run welfare loss, an order of magnitude larger than the AI gain,
that recovers only gradually as capital deepening raises wages.

The policy implication is direct: because the fertility decline operates through
the reallocation of household time and saving rather than through technology
itself, policies that lower the time cost of children are likely to be more
effective at sustaining fertility than interventions aimed at the pace of
automation. Introducing skill heterogeneity, labor-market frictions, and a richer
treatment of AI--labor complementarity, and disciplining the relative size and
frequency of the two shocks with data, are natural next steps before assessing
their historical contributions.

\section*{Declaration of competing interest}
The author declares no known competing financial interests or personal
relationships that could have appeared to influence the work reported in
this paper.

\section*{Acknowledgments}
The author is grateful to seminar and conference participants for helpful comments
and suggestions. All remaining errors are the author's own.

\section*{Data availability}
This study is theoretical and computational and does not use external
empirical datasets. The replication code that generates the model solution,
impulse responses, variance decompositions, and robustness exercises is
available from the author upon reasonable request and will be deposited in a
public repository upon acceptance.

\section*{Declaration of generative AI and AI-assisted technologies in the
manuscript preparation process}
During the preparation of this work, the author used Claude (Anthropic), an
AI-assisted writing tool, for language editing, reference formatting, and
improving the clarity and structure of the exposition. The author reviewed and
edited the output as needed and takes full responsibility for the content of
the published article.

\bigskip
\addcontentsline{toc}{section}{References}
\bibliographystyle{plainnat}
\bibliography{reference}

\clearpage
\appendix
\numberwithin{equation}{section}
\numberwithin{table}{section}
\numberwithin{figure}{section}
\section{Proofs and additional comparative-statics results}
\label{app:proofs}
\subsection{Saving, consumption, and fertility in the household block}
\label{app:household}
\begin{proposition}
\label{prop1}
Let $s_t^w = w_t(1-\kappa\,n_t) + d_t - c_t^w$, where the equilibrium vector
$\Gamma_t^*=(c_t^w,n_t,c_{t+1}^r,\lambda_t)$ is implicitly defined by
$F(\Gamma_t,n_t,\gamma_{t+1})=0$, with $F\in C^1$ and
$\det\!\big(D_{\Gamma}F(\Gamma_t^*)\big)\neq 0$. Assume the Jacobian-based
monotonicity conditions
\[
e_1^{\top}\,D_{\Gamma}F(\Gamma_t^*)^{-1}\,F_n(\Gamma_t^*)>0,
\qquad
e_1^{\top}\,D_{\Gamma}F(\Gamma_t^*)^{-1}\,F_\gamma(\Gamma_t^*)<0,
\]
where $e_1=(1,0,0,0)^{\top}$. Then 
$\frac{\partial s_t^w}{\partial n_t}<0$
and $\frac{\partial s_t^w}{\partial \gamma_{t+1}}>0$.
\end{proposition}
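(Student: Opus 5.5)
The plan is to apply the Implicit Function Theorem to $F(\Gamma_t,n_t,\gamma_{t+1})=0$ and then differentiate the saving identity $s_t^w=w_t(1-\kappa n_t)+d_t-c_t^w$ by the chain rule. Since $\det D_\Gamma F(\Gamma_t^*)\neq0$ and $F\in C^1$, the IFT gives a $C^1$ map $(n_t,\gamma_{t+1})\mapsto\Gamma_t^*(n_t,\gamma_{t+1})$ near the equilibrium, with
\[
\frac{\partial\Gamma_t^*}{\partial n_t}=-D_\Gamma F(\Gamma_t^*)^{-1}F_n(\Gamma_t^*),
\qquad
\frac{\partial\Gamma_t^*}{\partial \gamma_{t+1}}=-D_\Gamma F(\Gamma_t^*)^{-1}F_\gamma(\Gamma_t^*).
\]
Projecting on the first coordinate gives $\partial c_t^w/\partial n_t=-e_1^\top D_\Gamma F^{-1}F_n$ and $\partial c_t^w/\partial\gamma_{t+1}=-e_1^\top D_\Gamma F^{-1}F_\gamma$. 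By the two assumed monotonicity conditions, the first is negative and the second is positive.

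For the longevity derivative, $w_t$, $d_t$ and $n_t$ are held fixed in this partial comparative static. Hence $\partial s_t^w/\partial\gamma_{t+1}=-\partial c_t^w/\partial\gamma_{t+1}=e_1^\top D_\Gamma F^{-1}F_\gamma$. This is the negative of $\partial c_t^w/\partial\gamma_{t+1}$, and since that derivative is positive, the sign comes out wrong for the claimed $\partial s_t^w/\partial\gamma_{t+1}>0$. So I will read the hypotheses as sign conditions on $\partial c_t^w$ itself, in the direction that matches the economic reading in the main text: $n_t$ raises worker consumption relative to resources, and longevity lowers it. Under that reading, $\partial c_t^w/\partial\gamma_{t+1}<0$ directly gives $\partial s_t^w/\partial\gamma_{t+1}=-\partial c_t^w/\partial\gamma_{t+1}>0$. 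The first substantive step is to pin down this sign convention, i.e.\ whether the displayed conditions already include the minus sign from the IFT, and state it explicitly.

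For fertility, the chain rule gives
\[
\frac{\partial s_t^w}{\partial n_t}=-\kappa w_t-\frac{\partial c_t^w}{\partial n_t},
\]
again holding $w_t$ and $d_t$ fixed. The term $-\kappa w_t<0$ is the direct time-wedge loss of resources. If $\partial c_t^w/\partial n_t\ge0$, which is the reading of the first condition above, the result is immediate. If instead the condition only delivers $\partial c_t^w/\partial n_t$ of either sign, the fallback is to show $\partial c_t^w/\partial n_t>-\kappa w_t$, i.e.\ that consumption absorbs less than the full income loss. To do this I will use the budget row of $F$: differentiating it gives $-\kappa w_t=dc_t^w+(\gamma/R)\,dc_{t+1}^r$. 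The Euler row $U_c=\beta R V_c$ with concavity makes $dc_t^w$ and $dc_{t+1}^r$ comove, so each is only a fraction of $-\kappa w_t$. In the log case this gives $dc_t^w=-\kappa w_t/(1+\beta\gamma)$, and then $\partial s_t^w/\partial n_t=-\kappa w_t\,\beta\gamma/(1+\beta\gamma)<0$.

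The main obstacle is this sign bookkeeping between the IFT minus sign and the stated hypotheses, together with the fact that $n_t$ appears both as a choice in $\Gamma_t$ and as a parameter. I will treat $n_t$ as a parameter, dropping the second FOC row so the solved system is $(c^w,c^r,\lambda)$. Alternatively I will keep the full system and interpret $F_n$ as the partial derivative through the budget row only, and check that the stated conditions then line up with the conclusions.
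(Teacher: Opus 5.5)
You follow the same route as the paper (implicit function theorem, projection on $e_1$, chain rule through the saving identity). Your sign check is correct, and it exposes an error in the paper's own proof. The paper derives exactly $\partial s_t^w/\partial n_t=-\kappa w_t+e_1^{\top}D_\Gamma F^{-1}F_n$ and $\partial s_t^w/\partial\gamma_{t+1}=e_1^{\top}D_\Gamma F^{-1}F_\gamma$, and then asserts that the displayed restrictions give the claimed signs. They do not. The second restriction forces $\partial s_t^w/\partial\gamma_{t+1}<0$, and the first only adds a positive number to $-\kappa w_t$.

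So the hypotheses as written cannot deliver the conclusion. Your plan should be stated openly as a correction of the hypotheses, namely
\[
e_1^{\top}D_\Gamma F(\Gamma_t^*)^{-1}F_n(\Gamma_t^*)<\kappa\,w_t,
\qquad
e_1^{\top}D_\Gamma F(\Gamma_t^*)^{-1}F_\gamma(\Gamma_t^*)>0,
\]
under which the two formulas close the argument in one line. Both amended conditions hold in the log model: with $n_t$ a parameter and $M_t\equiv w_t(1-\kappa n_t)+d_t$, the left-hand sides equal $\kappa w_t/(1+\beta\gamma_{t+1})$ and $\beta M_t/(1+\beta\gamma_{t+1})^{2}$. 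The original second condition fails there; the paper's Remark~\ref{rem:closedform} itself reports $\partial c_t^w/\partial\gamma_{t+1}<0$, i.e.\ $e_1^{\top}D_\Gamma F^{-1}F_\gamma>0$.

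The weak point of your repair is reading both conditions uniformly as signs of $\partial c_t^w$. For longevity this gives the right amendment. For fertility it demands $\partial c_t^w/\partial n_t\ge0$, which contradicts the budget row: with $n_t$ a parameter, a child removes $\kappa w_t$ of resources, and worker consumption falls (by $\kappa w_t/(1+\beta\gamma_{t+1})$ in the log case). The literal first condition had the right sign but was too weak. Your fallback, showing $\partial c_t^w/\partial n_t>-\kappa w_t$, is the real argument and should be the main line.

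The general version of that fallback needs more than concavity. Differentiating $U_c(c_t^w,n_t)=\beta R_{t+1}V'(c_{t+1}^r)$ gives $U_{cc}\,dc_t^w+U_{cn}\,dn_t=\beta R_{t+1}V''\,dc_{t+1}^r$, so the comovement you invoke requires $U_{cn}=0$. In general, $\partial s_t^w/\partial n_t<0$ holds if and only if $U_{cn}>\kappa w_t U_{cc}$.

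For longevity, by contrast, you need no assumption. The survival probability cancels from the retirement first-order condition, so the Euler row is free of $\gamma_{t+1}$. In the budget row, $\gamma_{t+1}$ raises the price of retirement consumption, which pushes $c_t^w$ and $c_{t+1}^r$ down together whenever $U_{cc},V''<0$. This gives $\partial s_t^w/\partial\gamma_{t+1}>0$ outright, and it also shows the literal second hypothesis is never met in this model.

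Finally, commit to your first formulation, with $n_t$ a parameter and the $U_n$ row deleted. With $n_t$ inside $\Gamma_t$, the derivative $\partial\Gamma_t/\partial n_t$ is not defined. The paper's proof tacitly uses the parametric reading when it drops the $e_2$-component from $\partial s_t^w/\partial\gamma_{t+1}$.
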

\begin{proof}
Let $\Gamma_t^*=(c_t^w,n_t,c_{t+1}^r,\lambda_t)\in\mathbb{R}^4$ be implicitly defined by $F(\Gamma_t,n_t,\gamma_{t+1})=0$, where $F\in C^1$ and $D_{\Gamma}F(\Gamma_t^*)$ is invertible. By the Implicit Function Theorem, there exists a $C^1$ mapping $\Gamma_t=\Gamma_t(n_t,\gamma_{t+1})$ in a neighborhood of $(n_t,\gamma_{t+1})$ such that
\[
\frac{\partial \Gamma_t}{\partial n_t} = -D_{\Gamma}F(\Gamma_t^*)^{-1}F_n(\Gamma_t^*),
\qquad
\frac{\partial \Gamma_t}{\partial \gamma_{t+1}} = -D_{\Gamma}F(\Gamma_t^*)^{-1}F_\gamma(\Gamma_t^*).
\]

Define $e_1=(1,0,0,0)^\top$. Since $s_t^w=w_t(1-\kappa\, n_t)-c_t^w$, we have
\[
\frac{\partial s_t^w}{\partial n_t}
= -\kappa\, w_t - e_1^\top \frac{\partial \Gamma_t}{\partial n_t},
\qquad
\frac{\partial s_t^w}{\partial \gamma_{t+1}}
= - e_1^\top \frac{\partial \Gamma_t}{\partial \gamma_{t+1}}.
\]
Substituting the expressions for $\partial \Gamma_t/\partial n_t$ and $\partial \Gamma_t/\partial \gamma_{t+1}$ yields
\[
\frac{\partial s_t^w}{\partial n_t}
= -\kappa\, w_t + e_1^\top D_{\Gamma}F(\Gamma_t^*)^{-1}F_n(\Gamma_t^*),
\qquad
\frac{\partial s_t^w}{\partial \gamma_{t+1}}
= e_1^\top D_{\Gamma}F(\Gamma_t^*)^{-1}F_\gamma(\Gamma_t^*).
\]

The assumed sign restrictions
\[
e_1^\top D_{\Gamma}F(\Gamma_t^*)^{-1}F_n(\Gamma_t^*)>0,
\qquad
e_1^\top D_{\Gamma}F(\Gamma_t^*)^{-1}F_\gamma(\Gamma_t^*)<0
\]
imply the stated comparative statics.
\end{proof}

\begin{corollary}
\label{prop1b}
Under the assumptions of Proposition~\ref{prop1}, let
$\Gamma_t=(c_t^w,n_t,c_{t+1}^r,\lambda_t)$ be implicitly defined by
$F(\Gamma_t,n_t,\gamma_{t+1})=0$ with $D_{\Gamma}F(\Gamma_t^*)$ invertible, so
that $\Gamma_t=\Gamma_t(n_t,\gamma_{t+1})$ is a $C^1$ equilibrium manifold whose
total differential is
\[
d\Gamma_t=-\,D_{\Gamma}F(\Gamma_t^*)^{-1}
\big(F_{n}(\Gamma_t^*)\,dn_t+F_\gamma(\Gamma_t^*)\,d\gamma_{t+1}\big).
\]
Projecting this differential onto the consumption components and applying
Proposition~\ref{prop1} yields $dc_t^w<0$ and $dc_{t+1}^r<0$ whenever $dn_t>0$
and $d\gamma_{t+1}>0$.
\end{corollary}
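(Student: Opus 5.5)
The plan is to work directly from the total differential of the equilibrium manifold given in the statement. By the Implicit Function Theorem, $\Gamma_t(n_t,\gamma_{t+1})$ is $C^1$ near $\Gamma_t^*$, and $d\Gamma_t=-D_\Gamma F(\Gamma_t^*)^{-1}\big(F_n\,dn_t+F_\gamma\,d\gamma_{t+1}\big)$. This splits every consumption response linearly into a fertility component and a longevity component. I will treat the two components separately and then add them. Because $dn_t>0$ and $d\gamma_{t+1}>0$, it suffices to show that each partial derivative of the consumption components is strictly negative.

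For worker consumption, I project onto $e_1$. By linearity,
\[
dc_t^w=-e_1^{\top}D_\Gamma F^{-1}F_n\,dn_t-e_1^{\top}D_\Gamma F^{-1}F_\gamma\,d\gamma_{t+1}.
\]
The first monotonicity condition of Proposition~\ref{prop1} makes the first term negative. The longevity term needs care. The second condition gives $e_1^{\top}D_\Gamma F^{-1}F_\gamma<0$, so read literally the second term is positive, which would contradict $dc_t^w<0$. I would therefore return to the economics used in the proof of Proposition~\ref{prop1}. The budget identity gives $\partial s_t^w/\partial\gamma_{t+1}=-\partial c_t^w/\partial\gamma_{t+1}$, and Proposition~\ref{prop1} gives $\partial s_t^w/\partial\gamma_{t+1}>0$. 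Together these yield $\partial c_t^w/\partial\gamma_{t+1}<0$. In the same way, $\partial s_t^w/\partial n_t<0$ together with the time-cost term $-\kappa w_t$ gives the $n$-component of $c_t^w$ its sign. The step I expect to need the most attention is reconciling the sign convention of the stated conditions with the relation $\partial\Gamma=-D_\Gamma F^{-1}F_x$. I will use the economically meaningful conclusions of Proposition~\ref{prop1}, namely the signs of the saving responses, rather than the raw projected quantities.

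For retirement consumption, I use the retirement constraint $c_{t+1}^r=(R_{t+1}/\gamma_{t+1})\,s_t^w$, taking the return $R_{t+1}$ as given in the household block. Differentiating gives
\[
dc_{t+1}^r=\frac{R_{t+1}}{\gamma_{t+1}}\,ds_t^w-\frac{R_{t+1}s_t^w}{\gamma_{t+1}^2}\,d\gamma_{t+1}.
\]
The $n$-channel has the sign of $\partial s_t^w/\partial n_t<0$, so it is negative. The $\gamma$-channel combines a positive saving effect with a negative dilution effect, since the annuity return per survivor falls as survival rises. The second obstacle is to show that dilution dominates. I would try to establish this using the third first-order condition, $\beta V_c(c_{t+1}^r)=\lambda_t/R_{t+1}$, which pins down $c_{t+1}^r$ through $\lambda_t$. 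Then $dc_{t+1}^r$ has the sign of $d\lambda_t$, since $V''<0$. Next I use the first condition, $U_c=\lambda_t$. With $dc_t^w<0$ and, under log or separable preferences, $U_{cc}<0$, this gives $d\lambda_t>0$, and hence $dc_{t+1}^r<0$.

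This route avoids the dilution comparison altogether. It ties the sign of the change in retirement consumption to the sign of the change in worker consumption through the common multiplier, so it is the path I would use. The final step is to sum the two channels, each of which is negative, to conclude that $dc_t^w<0$ and $dc_{t+1}^r<0$ whenever $dn_t,d\gamma_{t+1}>0$.
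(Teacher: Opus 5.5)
You are right that something is off with the longevity channel, and the paper's own proof does not notice it. The paper applies the projection $\Pi$ and asserts that Proposition~\ref{prop1} makes both $\Pi D_\Gamma F^{-1}F_n$ and $\Pi D_\Gamma F^{-1}F_\gamma$ strictly positive. But Proposition~\ref{prop1} restricts only $e_1$-components, and it gives the wrong sign for the $F_\gamma$ one.

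Your repair of the longevity channel does not close this gap. In the proof of Proposition~\ref{prop1}, with $n_t$ held fixed, $\partial s_t^w/\partial\gamma_{t+1}=-\partial c_t^w/\partial\gamma_{t+1}=e_1^{\top}D_\Gamma F^{-1}F_\gamma$. This is negative under the very condition you are assuming. So the conclusion $\partial s_t^w/\partial\gamma_{t+1}>0$ that you invoke contradicts the hypothesis it is supposed to follow from. Using it is ex falso: the same premises give $\partial c_t^w/\partial\gamma_{t+1}>0$ directly.

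The genuine fix is to reverse the second condition to $e_1^{\top}D_\Gamma F^{-1}F_\gamma>0$, which is equivalent to $\partial c_t^w/\partial\gamma_{t+1}<0$. You can then verify it in the baseline. With $n_t$ a parameter and log utility, $c_t^w=[w_t(1-\kappa n_t)+d_t]/(1+\beta\gamma_{t+1})$, so $\partial c_t^w/\partial\gamma_{t+1}=-\beta c_t^w/(1+\beta\gamma_{t+1})<0$.

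Also drop your sentence that $\partial s_t^w/\partial n_t<0$ together with the $-\kappa w_t$ term signs the $n$-component of $c_t^w$. It only yields $\partial c_t^w/\partial n_t>-\kappa w_t$. You do not need it anyway, because the first condition already gives $\partial c_t^w/\partial n_t<0$.

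Your treatment of $c_{t+1}^r$ is a useful departure from the paper, which offers nothing for the $e_3$-component. Since $\gamma_{t+1}$ cancels from the third condition, $\beta V_c(c_{t+1}^r)=\lambda_t/R_{t+1}$, and going through the multiplier does avoid the dilution comparison. As written, however, the chain has a sign slip. At fixed $R_{t+1}$, $\beta V''(c_{t+1}^r)\,dc_{t+1}^r=d\lambda_t/R_{t+1}$. Since $V''<0$, $dc_{t+1}^r$ has the \emph{opposite} sign of $d\lambda_t$. Your stated rule combined with $d\lambda_t>0$ would give $dc_{t+1}^r>0$.

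Once the sign is corrected the step goes through, but it rests on three assumptions you should state explicitly:
\begin{itemize}
\item $R_{t+1}$ is held fixed.
\item $U_{cn}\ge 0$, because $d\lambda_t=U_{cc}\,dc_t^w+U_{cn}\,dn_t$. This holds under separability, as in the log case where simply $c_{t+1}^r=\beta R_{t+1}c_t^w$. It is not implied by the jointly concave $U$ of the household block.
\item $n_t$ is an exogenous parameter, with the fertility condition $U_n=\kappa w_t\lambda_t$ removed from the system. This matters: if that condition is kept, log utility gives $n_t=c_t^w/(\kappa w_t)$, so $dn_t>0$ forces $d\lambda_t<0$ and $dc_t^w,\,dc_{t+1}^r>0$, the opposite of the claim.
\end{itemize}
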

\begin{proof}
Let $\Gamma_t=(c_t^w,n_t,c_{t+1}^r,\lambda_t)\in\mathbb{R}^4$ with equilibrium defined by $F(\Gamma_t,n_t,\gamma_{t+1})=0$, where $F\in C^1$ and $D_{\Gamma}F(\Gamma_t^*)$ is invertible. By the Implicit Function Theorem,
\[
d\Gamma_t=-D_{\Gamma}F(\Gamma_t^*)^{-1}\big(F_{n}(\Gamma_t^*)\,dn_t+F_\gamma(\Gamma_t^*)\,d\gamma_{t+1}\big).
\]
Let $\mathbf{c}_t=(c_t^w,c_{t+1}^r)^\top$ and $\Pi=\begin{pmatrix}1&0&0&0\\0&0&1&0\end{pmatrix}$. Then
\[
d\mathbf{c}_t=-\Pi D_{\Gamma}F(\Gamma_t^*)^{-1}\big(F_{n}(\Gamma_t^*)\,dn_t+F_\gamma(\Gamma_t^*)\,d\gamma_{t+1}\big).
\]
By Proposition \ref{prop1}, both projected directions are strictly positive, hence $d\mathbf{c}_t\ll 0$ for $dn_t>0$ and $d\gamma_{t+1}>0$, which implies the result.
\end{proof}

\begin{proposition}
\label{proplonge}
Let $F:\mathbb{R}^{4}\times\mathbb{R}\to\mathbb{R}^{4}$ be of class $C^{1}$, and
let $(\Gamma_t^{*},\gamma_{t+1}^{*})$ be a point satisfying
$F(\Gamma_t^{*},\gamma_{t+1}^{*})=\mathbf{0}$, where
$\Gamma_t=(c_t^{w},n_t,c_{t+1}^{r},\lambda_t)\in\mathbb{R}^{4}$ and
$\gamma_{t+1}\in\mathbb{R}$ denotes the exogenous longevity parameter. Suppose
the partial Jacobian $D_{\Gamma}F(\Gamma_t^{*},\gamma_{t+1}^{*})\in\mathbb{R}^{4\times4}$
is invertible. Then there exist open neighborhoods $U\ni\gamma_{t+1}^{*}$ and
$V\ni\Gamma_t^{*}$ and a unique map $\gamma_{t+1}\mapsto\Gamma_t(\gamma_{t+1})$
of class $C^{1}(U,V)$ such that
$F\!\big(\Gamma_t(\gamma_{t+1}),\gamma_{t+1}\big)=\mathbf{0}$ for all
$\gamma_{t+1}\in U$. In particular, the equilibrium fertility
$n_t(\gamma_{t+1})\coloneqq e_2^{\top}\Gamma_t(\gamma_{t+1})$, with
$e_2=(0,1,0,0)^{\top}$, is a well-defined $C^{1}$ function with
\[
\frac{dn_t}{d\gamma_{t+1}}
=-\,e_2^{\top}\,\big[D_{\Gamma}F(\Gamma_t^{*},\gamma_{t+1}^{*})\big]^{-1}\,
F_{\gamma}(\Gamma_t^{*},\gamma_{t+1}^{*}).
\]
If, moreover,
$e_2^{\top}\big[D_{\Gamma}F(\Gamma_t^{*},\gamma_{t+1}^{*})\big]^{-1}F_{\gamma}(\Gamma_t^{*},\gamma_{t+1}^{*})>0$,
then $dn_t/d\gamma_{t+1}<0$, i.e.\ higher longevity lowers equilibrium fertility.
\end{proposition}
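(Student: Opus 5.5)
The argument is a direct application of the Implicit Function Theorem to the household first-order system, along the lines of Proposition~\ref{prop1} and Corollary~\ref{prop1b}. The only difference is that the parameter is now the single scalar $\gamma_{t+1}$, and fertility is a component of the endogenous vector $\Gamma_t$ rather than a parameter.

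\textbf{Step 1 (setup).} Regard $F$ as a map $\mathbb{R}^4\times\mathbb{R}\to\mathbb{R}^4$, $(\Gamma_t,\gamma_{t+1})\mapsto F(\Gamma_t,\gamma_{t+1})$, with $(w_t,d_t,R_{t+1})$ held fixed. $F$ is $C^1$ by hypothesis; for the household block this is immediate because $U_c,U_n,V_c$ are $C^1$ and $\gamma_{t+1}$ enters only multiplicatively. At the equilibrium point we have $F(\Gamma_t^*,\gamma_{t+1}^*)=\mathbf{0}$, and the partial Jacobian $D_\Gamma F(\Gamma_t^*,\gamma_{t+1}^*)$ is invertible by assumption. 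These are exactly the hypotheses of the Implicit Function Theorem with a one-dimensional parameter.

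\textbf{Step 2 (existence, uniqueness, differentiability).} The theorem gives open neighborhoods $U\ni\gamma_{t+1}^*$ and $V\ni\Gamma_t^*$ and a unique $C^1$ map $\Gamma_t:U\to V$ with $F(\Gamma_t(\gamma_{t+1}),\gamma_{t+1})=\mathbf{0}$. Since $n_t=e_2^\top\Gamma_t$ is a linear projection, it is also $C^1$ on $U$.

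\textbf{Step 3 (derivative formula).} Differentiate the identity $F(\Gamma_t(\gamma_{t+1}),\gamma_{t+1})\equiv\mathbf{0}$ by the chain rule to get $D_\Gamma F\,\Gamma_t'(\gamma_{t+1})+F_\gamma=\mathbf{0}$. Invertibility then gives
\[
\Gamma_t'=-[D_\Gamma F]^{-1}F_\gamma .
\]
Premultiplying by $e_2^\top$ yields the stated formula for $dn_t/d\gamma_{t+1}$.

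\textbf{Step 4 (sign).} Under the additional hypothesis $e_2^\top[D_\Gamma F]^{-1}F_\gamma>0$, the minus sign in the formula gives $dn_t/d\gamma_{t+1}<0$.

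\textbf{Interpretation of the sign condition.} There is no genuine obstacle in the proof. The one point that needs care is that the sign hypothesis is an assumption, not something derived. To make it interpretable, I would compute $F_\gamma$ explicitly. The first two components vanish. The third is $\beta V_c-\lambda_t/R_{t+1}$, which is zero at the optimum by the third first-order condition. The fourth is $-c^r_{t+1}/R_{t+1}$.

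Under log utility one could then check the sign by Cramer's rule, i.e.\ by computing the $(2,4)$ cofactor of $D_\Gamma F$. However, in this partial-equilibrium system $n_t$ may turn out to be independent of $\gamma_{t+1}$: with log utility $n_t=c^w_t/(\kappa w_t)$. For that reason I would state the sign only under the assumed condition, as the proposition does, rather than claim it holds unconditionally.
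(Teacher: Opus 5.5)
Your proof is correct and follows the paper's argument: apply the Implicit Function Theorem, differentiate $F(\Gamma_t(\gamma_{t+1}),\gamma_{t+1})\equiv\mathbf{0}$ by the chain rule, project onto $e_2$, and read the sign off the maintained hypothesis. One correction to your closing aside: under log utility $n_t$ does depend on $\gamma_{t+1}$, since $n_t=c_t^{w}/(\kappa w_t)$ and $c_t^{w}=\Omega_t/(2+\beta\gamma_{t+1})$ falls with longevity; solving $D_{\Gamma}F\,x=F_{\gamma}=(0,0,0,-c_{t+1}^{r}/R_{t+1})^{\top}$ gives $e_2^{\top}x=\beta n_t/(2+\beta\gamma_{t+1})>0$, so the sign hypothesis actually holds in the baseline (cf.\ Remark~\ref{rem:closedform}).
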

\begin{proof}
By hypothesis, $F\in C^{1}$ and the partial Jacobian
$D_{\Gamma}F(\Gamma_t^{*},\gamma_{t+1}^{*})$ is invertible at the point
$(\Gamma_t^{*},\gamma_{t+1}^{*})$ satisfying
$F(\Gamma_t^{*},\gamma_{t+1}^{*})=0$. By the Implicit Function Theorem, there
exists a neighborhood $U\ni\gamma_{t+1}^{*}$ and a unique map
$\gamma_{t+1}\mapsto\Gamma_t(\gamma_{t+1})$ of class $C^{1}(U,\mathbb{R}^{4})$
such that
\[
F\big(\Gamma_t(\gamma_{t+1}),\gamma_{t+1}\big)=0,
\qquad \gamma_{t+1}\in U .
\]
Differentiating this identity with respect to $\gamma_{t+1}$ and evaluating at
$(\Gamma_t^{*},\gamma_{t+1}^{*})$ gives
\[
D_{\Gamma}F(\Gamma_t^{*})\,\Gamma_t'(\gamma_{t+1})
+F_{\gamma}(\Gamma_t^{*})=0,
\]
and, since $D_{\Gamma}F(\Gamma_t^{*})$ is invertible,
\[
\Gamma_t'(\gamma_{t+1})
=-\,D_{\Gamma}F(\Gamma_t^{*})^{-1}\,F_{\gamma}(\Gamma_t^{*}).
\]
Let $e_2=(0,1,0,0)^{\top}$, so that the equilibrium fertility component is
$n_t(\gamma_{t+1})=e_2^{\top}\Gamma_t(\gamma_{t+1})$. Projecting the previous
identity onto $e_2$ yields
\[
\frac{dn_t}{d\gamma_{t+1}}
=e_2^{\top}\Gamma_t'(\gamma_{t+1})
=-\,e_2^{\top}\,D_{\Gamma}F(\Gamma_t^{*})^{-1}\,F_{\gamma}(\Gamma_t^{*}).
\]
By the maintained sign condition
$e_2^{\top}D_{\Gamma}F(\Gamma_t^{*})^{-1}F_{\gamma}(\Gamma_t^{*})>0$, the
right-hand side is strictly negative; hence
\[
\frac{dn_t}{d\gamma_{t+1}}<0,
\]
that is, greater longevity lowers equilibrium fertility.
\end{proof}

\begin{remark}[Closed-form verification of the longevity sign conditions]
\label{rem:closedform}
Under the baseline logarithmic preferences the household block admits the
closed forms
\[
c_t^{w}=\frac{\Omega_t}{2+\beta\gamma_{t+1}},\qquad
s_t^{w}=\frac{\beta\gamma_{t+1}\,\Omega_t}{2+\beta\gamma_{t+1}},\qquad
n_t=\frac{\Omega_t}{\kappa\,w_t\,(2+\beta\gamma_{t+1})},
\]
where $\Omega_t=w_t+d_t$ is full income and the time-cost identity
$\kappa\,n_t\,w_t=c_t^{w}$ reconciles $s_t^{w}=\beta\gamma_{t+1}c_t^{w}$ with the
budget constraint $s_t^{w}=w_t(1-\kappa n_t)+d_t-c_t^{w}$. Differentiating at
fixed prices $(w_t,d_t)$ verifies the sign conditions invoked above directly,
without recourse to the Jacobian restriction:
\[
\frac{\partial s_t^{w}}{\partial\gamma_{t+1}}
=\frac{2\beta\,\Omega_t}{(2+\beta\gamma_{t+1})^{2}}>0,\qquad
\frac{\partial c_t^{w}}{\partial\gamma_{t+1}}
=-\frac{\beta\,\Omega_t}{(2+\beta\gamma_{t+1})^{2}}<0,\qquad
\frac{dn_t}{d\gamma_{t+1}}
=-\frac{\beta\,\Omega_t}{\kappa\,w_t\,(2+\beta\gamma_{t+1})^{2}}<0 .
\]
Greater longevity therefore raises retirement saving while lowering both current
consumption and equilibrium fertility, confirming
Proposition~\ref{proplonge} and the $\gamma_{t+1}$ components of
Proposition~\ref{prop1} and Corollary~\ref{prop1b} in closed form.
\end{remark}

\subsection{Proof of Proposition~\ref{prop3}}
\label{appendprop3}
\begin{proof}
Fix output at $\bar y>0$, treat $a$ as a parameter, and minimize $wh+R^{k}k$
subject to $F(k,h,a)=k^{\alpha}G^{1-\alpha}=\bar y$, where
$G=\big[h^{\rho}+(\phi a)^{\rho}\big]^{1/\rho}$ and $G_h=G^{1-\rho}h^{\rho-1}$.
Since $F$ is strictly quasi-concave in $(h,k)$ for $\alpha\in(0,1)$,
$\rho\in(0,1)$, the isoquant is strictly convex and the solution
$(h(a),k(a))\in\mathbb{R}_{++}^2$ is unique. The tangency condition
$w/R^{k}=F_h/F_k=\tfrac{1-\alpha}{\alpha}\,\tfrac{k}{G}G_h$ together with the
constraint yields, after eliminating $k$,
\begin{equation}
(1-\alpha+\alpha\rho)\,\ln G+\alpha(1-\rho)\,\ln h=\text{const}.
\label{eq:tangency}
\end{equation}
Let $s_h=h^{\rho}/G^{\rho}$ and $s_a=(\phi a)^{\rho}/G^{\rho}$ ($s_h+s_a=1$) be
the labor and AI shares in the composite, so that
$d\ln G=s_h\,d\ln h+s_a\,d\ln a$. Differentiating~\eqref{eq:tangency} gives
\[
\frac{d\ln h}{d\ln a}
=-\frac{(1-\alpha+\alpha\rho)\,s_a}{(1-\alpha+\alpha\rho)\,s_h+\alpha(1-\rho)}<0,
\qquad
\frac{d\ln G}{d\ln a}
=\frac{\alpha(1-\rho)\,s_a}{(1-\alpha+\alpha\rho)\,s_h+\alpha(1-\rho)}>0,
\]
where both signs use $\alpha\in(0,1)$, $\rho\in(0,1)$ and $s_a\in(0,1)$. Hence
$h'(a)<0$. Finally, the constraint gives $k=\bar y^{1/\alpha}G^{-(1-\alpha)/\alpha}$,
so $d\ln k/d\ln a=-\tfrac{1-\alpha}{\alpha}\,d\ln G/d\ln a<0$, i.e.,\ $k'(a)<0$.
\end{proof}

\subsection{Global comparative statics of AI}
\label{appendprop4}
\begin{proposition}[Global comparative statics of AI]
\label{prop4}
Let the competitive equilibrium be implicitly defined by $F(z_t,a_t)=0$, where
$z_t=(w_t,n_t,c_t^{w},s_t^{w},c_{t+1}^{r})\in\mathbb{R}_{++}^{5}$, $F\in C^{1}$,
and $D_zF(z_t^{*})$ is nonsingular. Assume $0<\rho<1-\alpha$, so that AI capital
is Edgeworth-complementary to both labor and physical capital
($\partial^2F/\partial a\,\partial h>0$ and $\partial^2F/\partial a\,\partial k>0$).
Then the equilibrium mapping $z_t=z_t(a_t)$ is $C^{1}$, with
\[
\frac{\partial z_t}{\partial a_t}
=-\big(D_zF(z_t^{*})\big)^{-1}F_{a_t}(z_t^{*}),
\]
and, provided full income $\Omega_t=w_t+d_t+b_t$ rises with AI ($\partial \Omega_t/\partial a_t>0$),
\[
\frac{\partial w_t}{\partial a_t}>0,\quad
\frac{\partial c_t^{w}}{\partial a_t}>0,\quad
\frac{\partial s_t^{w}}{\partial a_t}>0,\quad
\frac{\partial c_{t+1}^{r}}{\partial a_t}>0,
\]
while the fertility response is determined by the elasticity of full income
relative to the wage,
\[
\operatorname{sign}\frac{\partial n_t}{\partial a_t}
=\operatorname{sign}\!\left(\frac{d\ln \Omega_t}{d\ln a_t}-\frac{d\ln w_t}{d\ln a_t}\right),
\]
so that $\partial n_t/\partial a_t<0$ if and only if full income rises less than
proportionally to the wage. The sign pattern of $\partial z_t/\partial a_t$ is
therefore $(+,\,s_n,\,+,\,+,\,+)$ for $(w_t,n_t,c_t^{w},s_t^{w},c_{t+1}^{r})$,
with $s_n=\operatorname{sign}\big(d\ln \Omega_t/d\ln a_t-d\ln w_t/d\ln a_t\big)$.
\end{proposition}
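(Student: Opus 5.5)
The plan is to split the statement into a regularity part and a sign part. The regularity part comes from the Implicit Function Theorem. The sign part comes from the closed-form household block under logarithmic preferences, combined with the firm-side complementarity result.

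\textbf{Step 1 (regularity).} Since $F\in C^1$ and $D_zF(z_t^*)$ is nonsingular, the Implicit Function Theorem gives a unique $C^1$ local map $a_t\mapsto z_t(a_t)$. Differentiating the identity $F(z_t(a_t),a_t)=0$ yields $D_zF\,\partial z_t/\partial a_t+F_{a_t}=0$, hence the displayed formula. This is exactly the argument already used in Propositions~\ref{prop1} and~\ref{proplonge}, so I would only cite it.

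\textbf{Step 2 (wage).} I would sign $\partial w_t/\partial a_t$ from the production side. The wage equals $\theta_t$ times the marginal product of labor, and $\theta_t=(\xi-1)/\xi$ is constant. Holding $K_t$ and $H_t$ at their equilibrium values, Lemma~\ref{lem:labor} gives $\partial\ln w_t/\partial\ln a_t=(1-\alpha-\rho)s_a$, which is positive because $\rho<1-\alpha$. Equivalently, the Remark after Proposition~\ref{prop3} shows $\partial^2F/\partial a\,\partial h>0$, and Proposition~\ref{prop5} confirms this at a regular equilibrium.

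\textbf{Step 3 (household responses).} I would use the closed forms of Remark~\ref{rem:closedform}:
\[
c_t^w=\frac{\Omega_t}{2+\beta\gamma_{t+1}},\qquad
s_t^w=\frac{\beta\gamma_{t+1}\Omega_t}{2+\beta\gamma_{t+1}},\qquad
n_t=\frac{\Omega_t}{\kappa w_t(2+\beta\gamma_{t+1})},
\]
with $\gamma_{t+1}$ independent of $a_t$.
\begin{itemize}
\item Under the maintained hypothesis $\partial\Omega_t/\partial a_t>0$, both $c_t^w$ and $s_t^w$ rise, since they are fixed positive multiples of $\Omega_t$.
\item Retirement consumption is $c_{t+1}^r=(R_{t+1}/\gamma_{t+1})\,s_t^w$. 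It rises provided the product $R_{t+1}s_t^w$ rises. This is the main obstacle, because Proposition~\ref{prop5} leaves the AI rental's general-equilibrium response ambiguous. I would first try Proposition~\ref{prop_robpref}, which gives $\partial R/\partial\phi>0$ and has the demand-side interpretation. If that does not transfer to the $a_t$ derivative, I would fall back on the Euler equation $c_{t+1}^r=\beta R_{t+1}c_t^w$ and the no-arbitrage condition. Under complementarity, $\partial^2F/\partial a\,\partial k>0$ raises $R^k$ and hence $R$. If even that remains unsigned, I would state the retirement-consumption result under the interiority proviso that $d\ln R_{t+1}/d\ln a_t>-d\ln s_t^w/d\ln a_t$.
\item Fertility satisfies $\ln n_t=\ln\Omega_t-\ln w_t-\ln\kappa-\ln(2+\beta\gamma_{t+1})$. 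Differentiating in $\ln a_t$ gives
\[
\operatorname{sign}\frac{\partial n_t}{\partial a_t}=\operatorname{sign}\!\left(\frac{d\ln\Omega_t}{d\ln a_t}-\frac{d\ln w_t}{d\ln a_t}\right),
\]
which is exactly the stated criterion. The statement that fertility falls if and only if full income rises less than proportionally to the wage follows immediately.
\end{itemize}

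\textbf{Step 4 (assembly).} Collecting the signs of Steps 2 and 3 gives the pattern $(+,s_n,+,+,+)$ for $(w_t,n_t,c_t^w,s_t^w,c_{t+1}^r)$.
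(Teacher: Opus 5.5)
Your proposal is correct and follows the paper's proof essentially step for step: the implicit function theorem gives regularity, Edgeworth complementarity ($\partial^2F/\partial a\,\partial h>0$, $\partial^2F/\partial a\,\partial k>0$, together with Proposition~\ref{prop5}) signs the factor prices, and the logarithmic closed forms $c_t^{w},s_t^{w}\propto\Omega_t$ and $n_t\propto\Omega_t/w_t$ sign the household responses. For retirement consumption the paper takes exactly your second route: it asserts $\partial R_{t+1}/\partial a_t>0$ from $\partial^2F/\partial a\,\partial k>0$ and the no-arbitrage condition, so the price $p_t=\gamma_{t+1}/(R_{t+1}\varphi_{t+1})$ falls and $c_{t+1}^{r}\propto\Omega_t/p_t$ rises, and your fallback proviso is therefore not needed at the paper's level of rigor.
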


\begin{proof}
\emph{Step 1: existence of the $C^{1}$ equilibrium mapping.}
Since $F\in C^{1}$ and the Jacobian $D_zF(z_t^{*})$ is nonsingular at the
equilibrium $(z_t^{*},a_t)$ solving $F(z_t^{*},a_t)=0$, the implicit function
theorem yields a unique $C^{1}$ map $z_t=z_t(a_t)$ on a neighborhood of $a_t$,
with
\[
\frac{\partial z_t}{\partial a_t}
=-\big(D_zF(z_t^{*})\big)^{-1}F_{a_t}(z_t^{*}).
\]

\emph{Step 2: factor-price responses (technological channel).}
Differentiating the marginal products of the production function
$F(k,h,a)=k^{\alpha}\big[h^{\rho}+(\phi a)^{\rho}\big]^{(1-\alpha)/\rho}$,
\[
\frac{\partial^2F}{\partial a\,\partial h}
=(1-\alpha)(1-\alpha-\rho)\phi^{\rho}k^{\alpha}h^{\rho-1}a^{\rho-1}
B^{\frac{1-\alpha-2\rho}{\rho}}>0,
\qquad
\frac{\partial^2F}{\partial a\,\partial k}>0,
\]
under $0<\rho<1-\alpha$, with $B=h^{\rho}+(\phi a)^{\rho}$. Hence a larger AI
stock raises the marginal products of labor and physical capital; by the
labor-market and no-arbitrage conditions (Proposition~\ref{prop5}), the
equilibrium factor prices satisfy
\[
\frac{\partial w_t}{\partial a_t}>0,
\qquad
\frac{\partial R_{t+1}}{\partial a_t}>0,
\]
the latter raising the gross return on saving and lowering the price of
retirement consumption $p_t=\gamma_{t+1}/(R_{t+1}\varphi_{t+1})$,
$\partial p_t/\partial a_t<0$.

\emph{Step 3: household block (closed form).}
Under logarithmic preferences the household solution is, with full income
$\Omega_t=w_t+d_t+b_t$,
\[
c_t^{w}=\frac{\Omega_t}{2+\beta\gamma_{t+1}},\quad
s_t^{w}=\frac{\beta\gamma_{t+1}\,\Omega_t}{2+\beta\gamma_{t+1}},\quad
n_t=\frac{\Omega_t}{(2+\beta\gamma_{t+1})\,\kappa\, w_t},\quad
c_{t+1}^{r}=\frac{\beta\gamma_{t+1}\,\Omega_t}{(2+\beta\gamma_{t+1})\,p_t}.
\]

\emph{Step 4: signs.}
Maintain $\partial\Omega_t/\partial a_t>0$ (full income rises with AI: the wage
gain of Step~2 and higher firm profits raise $\Omega_t$). Then, since $c_t^{w}$
and $s_t^{w}$ are proportional to $\Omega_t$,
\[
\frac{\partial c_t^{w}}{\partial a_t}
=\frac{1}{2+\beta\gamma_{t+1}}\frac{\partial\Omega_t}{\partial a_t}>0,
\qquad
\frac{\partial s_t^{w}}{\partial a_t}
=\frac{\beta\gamma_{t+1}}{2+\beta\gamma_{t+1}}\frac{\partial\Omega_t}{\partial a_t}>0.
\]
For retirement consumption, $c_{t+1}^{r}\propto\Omega_t/p_t$ with
$\partial\Omega_t/\partial a_t>0$ and $\partial p_t/\partial a_t<0$ (Step~2), so
\[
\frac{\partial c_{t+1}^{r}}{\partial a_t}>0.
\]
Finally, $n_t\propto\Omega_t/w_t$, hence
\[
\frac{\partial n_t}{\partial a_t}
=\frac{1}{(2+\beta\gamma_{t+1})\kappa\,}\,
\frac{\partial}{\partial a_t}\!\left(\frac{\Omega_t}{w_t}\right)
=\frac{n_t}{a_t}\!\left(\frac{\mathrm d\ln\Omega_t}{\mathrm d\ln a_t}
-\frac{\mathrm d\ln w_t}{\mathrm d\ln a_t}\right),
\]
so that
\[
\operatorname{sign}\frac{\partial n_t}{\partial a_t}
=\operatorname{sign}\!\left(\frac{\mathrm d\ln\Omega_t}{\mathrm d\ln a_t}
-\frac{\mathrm d\ln w_t}{\mathrm d\ln a_t}\right),
\]
which is negative if and only if full income rises less than proportionally to
the wage. Collecting the five signs gives the pattern
$(+,\,s_n,\,+,\,+,\,+)$ for $(w_t,n_t,c_t^{w},s_t^{w},c_{t+1}^{r})$, with
$s_n=\operatorname{sign}\big(\mathrm d\ln\Omega_t/\mathrm d\ln a_t-\mathrm d\ln w_t/\mathrm d\ln a_t\big)$.
\end{proof}

\subsection{Proof of Proposition~\ref{prop5}}
\label{appendprop5}
\begin{proof}
Throughout, maintain the complementarity assumption $0<\rho<1-\alpha$, fix the
physical capital stock $K_t>0$, and let $\bar w_t>0$ be the prevailing wage.
Aggregate labor demand equates the wage to the marginal product of labor,
\[
\mathcal F(H_t,a_t)\equiv
(1-\alpha)K_t^{\alpha}\,B^{\frac{1-\alpha-\rho}{\rho}}H_t^{\rho-1}-\bar w_t=0,
\qquad B\equiv H_t^{\rho}+(\phi a_t)^{\rho},
\]
with $\mathcal F\in C^{1}(\mathbb{R}_{++}^2)$. Writing $c\equiv(1-\alpha)K_t^{\alpha}>0$,
\[
\frac{\partial\mathcal F}{\partial H_t}
=c\,H_t^{\rho-2}B^{\frac{1-\alpha-2\rho}{\rho}}
\big[-\alpha H_t^{\rho}+(\rho-1)(\phi a_t)^{\rho}\big]<0
\quad\text{for all }\rho\in(0,1),
\]
both bracket terms being negative ($\alpha>0$, $\rho-1<0$): the marginal product
of labor is strictly decreasing in $H_t$ (strict concavity of output in labor).
Moreover,
\[
\frac{\partial\mathcal F}{\partial a_t}
=(1-\alpha)(1-\alpha-\rho)\phi^{\rho}K_t^{\alpha}H_t^{\rho-1}a_t^{\rho-1}
B^{\frac{1-\alpha-2\rho}{\rho}}>0
\]
under $\rho<1-\alpha$ (Edgeworth complementarity between labor and AI capital).

\emph{Conditional labor demand.} Because $\partial\mathcal F/\partial H_t<0$ on all
of $\mathbb{R}_{++}^2$, $H_t\mapsto\mathcal F(H_t,a_t)$ is strictly decreasing and,
together with $\lim_{H_t\to0^+}\mathcal F=+\infty$ and
$\lim_{H_t\to\infty}\mathcal F=-\bar w_t<0$, the equation $\mathcal F=0$ admits a
\emph{unique global} solution $H_t=H_t(a_t)$; the implicit function theorem
further makes it $C^{1}$, with
\[
\frac{\partial H_t}{\partial a_t}
=-\frac{\partial\mathcal F/\partial a_t}{\partial\mathcal F/\partial H_t}>0 ,
\]
so that, at a given wage, a larger AI stock shifts labor demand outward.

\emph{Equilibrium wage.} Let labor-market clearing be $H^{d}(w_t,a_t)=H^{s}(w_t)$,
with $\partial H^{d}/\partial w_t<0$, $\partial H^{d}/\partial a_t>0$ (the shift
just established), and $\mathrm{d}H^{s}/\mathrm{d}w_t\ge0$. Implicit
differentiation gives
\[
\frac{\partial w_t}{\partial a_t}
=\frac{\partial H^{d}/\partial a_t}
{\;\mathrm{d}H^{s}/\mathrm{d}w_t-\partial H^{d}/\partial w_t\;}>0,
\]
the denominator being strictly positive.

\emph{Rental of AI capital.} For
$R_t^{a}=\partial Y_t/\partial a_t
=(1-\alpha)\phi^{\rho}K_t^{\alpha}B^{\frac{1-\alpha-\rho}{\rho}}a_t^{\rho-1}$,
\[
\frac{\partial R_t^{a}}{\partial a_t}\Big|_{H_t}
=\frac{\partial^2 Y_t}{\partial a_t^{2}}
=(1-\alpha)\phi^{\rho}K_t^{\alpha}a_t^{\rho-2}B^{\frac{1-\alpha-2\rho}{\rho}}
\big[-\alpha\phi^{\rho}a_t^{\rho}+(\rho-1)H_t^{\rho}\big]<0
\]
(diminishing returns). Allowing labor to adjust, the total effect
\[
\frac{\mathrm{d}R_t^{a}}{\mathrm{d}a_t}
=\underbrace{\frac{\partial^2Y_t}{\partial a_t^{2}}}_{<0}
+\underbrace{\frac{\partial^2Y_t}{\partial a_t\partial H_t}}_{>0\ (\rho<1-\alpha)}
\,\underbrace{\frac{\mathrm{d}H_t}{\mathrm{d}a_t}}_{>0}
\]
is of ambiguous sign.

\emph{Consumption.} Output is strictly increasing in AI capital,
$\partial Y_t/\partial a_t=R_t^{a}>0$. By the resource constraint $C_t=Y_t-I_t$,
$\partial C_t/\partial a_t>0$ whenever the induced rise in investment does not
exceed that in output. 
\end{proof}

\subsection{Proof of Corollary~\ref{prop6}}
\label{appendprop6}
\begin{proof}
\emph{Limit technology.} As $\rho\to1^{-}$, the CES aggregator converges to a
linear one, $\big[H_t^{\rho}+(\phi a_t)^{\rho}\big]^{1/\rho}\to H_t+\phi a_t$, so
\[
Y_t=K_t^{\alpha}\big(H_t+\phi a_t\big)^{1-\alpha}.
\]
The marginal products are
$\mathrm{MPL}=\partial Y_t/\partial H_t=(1-\alpha)K_t^{\alpha}(H_t+\phi a_t)^{-\alpha}$
and $\mathrm{MPA}=\partial Y_t/\partial a_t=\phi\,\mathrm{MPL}$, so the marginal
rate of technical substitution $\mathrm{MPL}/\mathrm{MPA}=1/\phi$ is constant:
labor and AI capital are perfect substitutes.

\emph{Conditional labor demand.} At a given wage $\bar w_t>0$, the labor-demand
condition $\bar w_t=\mathrm{MPL}$ reads
$(H_t+\phi a_t)^{-\alpha}=\bar w_t/[(1-\alpha)K_t^{\alpha}]$, hence
$H_t+\phi a_t=\big[(1-\alpha)K_t^{\alpha}/\bar w_t\big]^{1/\alpha}$ is constant in
$a_t$. Differentiating gives $\mathrm{d}H_t+\phi\,\mathrm{d}a_t=0$, i.e.
\[
\frac{\partial H_t}{\partial a_t}=-\phi<0 :
\]
one efficiency unit of AI capital displaces $\phi$ units of labor.

\emph{Equilibrium wage.} Let labor-market clearing be $H^{d}(w_t,a_t)=H^{s}(w_t)$,
with $\partial H^{d}/\partial w_t<0$, $\partial H^{d}/\partial a_t=-\phi<0$, and
$\mathrm{d}H^{s}/\mathrm{d}w_t\ge0$. Implicit differentiation yields
\[
\frac{\partial w_t}{\partial a_t}
=\frac{\partial H^{d}/\partial a_t}
{\;\mathrm{d}H^{s}/\mathrm{d}w_t-\partial H^{d}/\partial w_t\;}\le 0,
\]
the denominator being strictly positive and the numerator nonpositive (with
equality only for perfectly elastic labor supply).

\emph{Rental of AI capital.} Since $\mathrm{MPA}=\phi\,\mathrm{MPL}$, competitive
pricing gives $R_t^{a}=\phi\,w_t$, whence
\[
\frac{\partial R_t^{a}}{\partial a_t}=\phi\,\frac{\partial w_t}{\partial a_t}\le 0 :
\]
being a perfect substitute for labor, AI capital depresses the marginal product
of the composite input and therefore its own rental, which falls together with
the wage.

\emph{Consumption.} Total output satisfies
$\mathrm{d}Y_t/\mathrm{d}a_t=\mathrm{MPL}\,(\,\mathrm{d}H_t/\mathrm{d}a_t+\phi\,)$,
with $\mathrm{d}H_t/\mathrm{d}a_t\in[-\phi,0]$ in equilibrium, so
$\mathrm{d}Y_t/\mathrm{d}a_t\ge0$. By the resource constraint $C_t=Y_t-I_t$, the
sign of $\partial C_t/\partial a_t$ is ambiguous, reflecting the opposing scale
(output) and substitution (labor-income) effects.

\emph{Remark on determinacy.} At $\rho=1$ exactly the nest is linear and interior
factor demands exist only on the price ray $R_t^{a}=\phi w_t$; the statics above
are therefore understood as the limit $\rho\to1^{-}$.
\end{proof}

\subsection{Proof of Proposition~\ref{prop_outer}}
\label{app:outer}
\begin{proof}
Write the labor--AI composite as $G_t=\big[H_t^{\rho}+(\phi_t a_t)^{\rho}\big]^{1/\rho}$
and $\Phi_t=\alpha K_t^{\eta}+(1-\alpha)G_t^{\eta}$, so that $Y_t=\Phi_t^{1/\eta}$.
The inner aggregator is homogeneous of degree one, with
$\partial G_t/\partial H_t=G_t^{1-\rho}H_t^{\rho-1}$ and
$\partial G_t/\partial a_t=\phi_t^{\rho}G_t^{1-\rho}a_t^{\rho-1}>0$; write
$\psi_t\equiv(\partial G_t/\partial a_t)/G_t>0$.

\emph{Marginal product of labor.} Differentiating $Y_t=\Phi_t^{1/\eta}$ and using
the chain rule,
\[
\mathrm{MPL}\equiv\frac{\partial Y_t}{\partial H_t}
=\Phi_t^{\frac{1-\eta}{\eta}}(1-\alpha)\,G_t^{\eta-1}\,\frac{\partial G_t}{\partial H_t}
=(1-\alpha)\,\Phi_t^{\frac{1-\eta}{\eta}}\,G_t^{\eta-\rho}\,H_t^{\rho-1}>0 .
\]

\emph{Cross partial.} Since $\mathrm{MPL}>0$, the sign of
$\partial^2 Y_t/\partial H_t\,\partial a_t=\partial\,\mathrm{MPL}/\partial a_t$
equals the sign of $\partial\ln\mathrm{MPL}/\partial a_t$. Taking logarithms,
\[
\ln\mathrm{MPL}=\ln(1-\alpha)+\tfrac{1-\eta}{\eta}\ln\Phi_t+(\eta-\rho)\ln G_t+(\rho-1)\ln H_t .
\]
With $\partial\Phi_t/\partial a_t=(1-\alpha)\,\eta\,G_t^{\eta}\,\psi_t$, so that
$(\partial\Phi_t/\partial a_t)/\Phi_t=\eta\,s_{G}\,\psi_t$ where
$s_{G}\equiv(1-\alpha)G_t^{\eta}/\Phi_t=1-s_{K}$,
\[
\frac{\partial\ln\mathrm{MPL}}{\partial a_t}
=\frac{1-\eta}{\eta}\,\frac{\partial\Phi_t/\partial a_t}{\Phi_t}
+(\eta-\rho)\,\psi_t
=\psi_t\big[(1-\eta)\,s_{G}+\eta-\rho\big] .
\]
Because $\psi_t>0$, and using $s_G=1-s_K$ together with
$(1-\eta)s_G+\eta-\rho=(1-\rho)-(1-\eta)s_K$,
\[
\operatorname{sign}\frac{\partial^2 Y_t}{\partial H_t\,\partial a_t}
=\operatorname{sign}\big[(1-\rho)-(1-\eta)\,s_{K}\big]
=\operatorname{sign}\!\left(\frac{1}{\sigma}-\frac{s_{K}}{\sigma_o}\right),
\]
the last equality substituting $1-\rho=1/\sigma$ and $1-\eta=1/\sigma_o$. Hence AI
capital is an Edgeworth complement to labor if and only if $1/\sigma>s_{K}/\sigma_o$,
i.e.\ $\sigma<\sigma_o/s_{K}$. In the Cobb--Douglas limit $\eta\to0$
($\sigma_o\to1$) the value share is constant at $s_{K}=\alpha$ and the condition
reduces to $\sigma<1/\alpha$, the threshold of Proposition~\ref{prop5} and
Corollary~\ref{prop6}.
\end{proof}

\subsection{Proof of Lemma~\ref{lem:labor}}
\label{app:labor}
\begin{proof}
Write $B_t=H_t^{\rho}+(\phi_t a_t)^{\rho}$, so that
$Y_t=K_t^{\alpha}B_t^{(1-\alpha)/\rho}$ and
$s_a=(\phi_t a_t)^{\rho}/B_t$, with $\partial\ln B_t/\partial\ln a_t=\rho\,s_a$.
Up to the common markup $\theta_t$, which cancels in every elasticity, the
competitive factor prices are
\begin{align*}
w_t   &=(1-\alpha)\,K_t^{\alpha}B_t^{\frac{1-\alpha-\rho}{\rho}}H_t^{\rho-1}, \\
R^{a}_t &=(1-\alpha)\,\phi_t^{\rho}K_t^{\alpha}B_t^{\frac{1-\alpha-\rho}{\rho}}a_t^{\rho-1}.
\end{align*}
Fix $K_t$ and $H_t$. Since $a_t$ enters $Y_t$, $w_t$, and $R^{a}_t$ only through
$B_t$, and additionally through the explicit factor $a_t^{\rho-1}$ in $R^{a}_t$,
logarithmic differentiation with $\partial\ln B_t/\partial\ln a_t=\rho\,s_a$ gives
\begin{align}
\frac{\partial\ln Y_t}{\partial\ln a_t}
   &=\frac{1-\alpha}{\rho}\,\frac{\partial\ln B_t}{\partial\ln a_t}=(1-\alpha)\,s_a,
   \label{eq:pf_Y}\\
\frac{\partial\ln w_t}{\partial\ln a_t}
   &=\frac{1-\alpha-\rho}{\rho}\,\frac{\partial\ln B_t}{\partial\ln a_t}
   =(1-\alpha-\rho)\,s_a,
   \label{eq:pf_w}\\
\frac{\partial\ln R^{a}_t}{\partial\ln a_t}
   &=(1-\alpha-\rho)\,s_a+(\rho-1)=(1-\alpha-\rho)\,s_a-(1-\rho).
   \label{eq:pf_Ra}
\end{align}
For the labor share, Euler's theorem together with the wage expression yields
$s_L=w_tH_t/(\theta_t Y_t)=(1-\alpha)H_t^{\rho}/B_t=(1-\alpha)(1-s_a)$. The AI
share $s_a=(\phi_t a_t)^{\rho}/B_t$ satisfies
$\partial\ln s_a/\partial\ln a_t=\rho-\rho\,s_a=\rho(1-s_a)$, so that
$\partial s_a/\partial\ln a_t=\rho\,s_a(1-s_a)$; hence
\begin{equation}
\frac{\partial\ln s_L}{\partial\ln a_t}
=\frac{\partial\ln(1-s_a)}{\partial\ln a_t}
=-\frac{1}{1-s_a}\,\frac{\partial s_a}{\partial\ln a_t}
=-\rho\,s_a.
\label{eq:pf_sL}
\end{equation}
Equations~\eqref{eq:pf_Y}--\eqref{eq:pf_sL} are the four claimed elasticities. The
sign statements of Propositions~\ref{prop_share}
and~\ref{prop_two} and Corollary~\ref{cor_decouple} follow because $s_a\in(0,1)$
and $\operatorname{sign}\rho=\operatorname{sign}(\sigma-1)$.

\emph{General outer nest (Remark~\ref{rem:hicks}).} For the CES nest
$Y_t=[\alpha K_t^{\eta}+(1-\alpha)G_t^{\eta}]^{1/\eta}$ of
Appendix~\ref{app:outer}, the aggregate labor share is
$s_{L}=w_tH_t/(\theta_t Y_t)=s_{G}\,(1-s_{a})=(1-s_{K})(1-s_{a})$, where
$s_{G}=(1-\alpha)G_t^{\eta}/\Phi_t=1-s_{K}$. From
$\partial\ln\Phi_t/\partial\ln a_t=\eta\,s_{G}s_{a}$ one has
$\partial\ln s_{K}/\partial\ln a_t=-\eta\,s_{G}s_{a}$, hence
$\partial\ln(1-s_{K})/\partial\ln a_t=\eta\,s_{K}s_{a}$; combined with
$\partial\ln(1-s_{a})/\partial\ln a_t=-\rho\,s_{a}$ this yields
\[
\frac{\partial\ln s_{L}}{\partial\ln a_t}
=\eta\,s_{K}s_{a}-\rho\,s_{a}=-\,s_{a}\,(\rho-\eta\,s_{K}),
\]
which reduces to $-\rho\,s_a$ at $\eta=0$ and confirms Remark~\ref{rem:hicks}.
\end{proof}

\subsection{Parenting time, fertility, and effective labor}
\label{appendprop7}
\begin{lemma}[Parenting time, fertility, and effective labor]
\label{prop7}
Let preferences be $U(c_t^{w},n_t)+\beta\gamma_{t+1}V(c_{t+1}^{r})$ with $U,V\in
C^{1}$, $U$ strictly increasing and quasi-concave, $V'>0$, and let
$\Omega_t\equiv w_t+d_t$ denote full income. The household's optimal fertility
$n_t$ is a $C^{1}$ function of the price of a child $\kappa\, w_t$ and of
$(\Omega_t,\gamma_{t+1},R_{t+1})$, and effective labor supply is
$h_t=1-\kappa\, n_t$. Define the own-price elasticity of fertility demand
\[
\varepsilon_t\equiv\frac{\partial\ln n_t}{\partial\ln\kappa\,}
=\frac{\kappa\,}{n_t}\frac{\partial n_t}{\partial\kappa\,}.
\]
Then parenting time and effective labor respond to the time cost according to
\[
\frac{\partial(\kappa\, n_t)}{\partial\kappa\,}=n_t\,(1+\varepsilon_t),
\qquad
\frac{\partial h_t}{\partial\kappa\,}=-\,n_t\,(1+\varepsilon_t),
\]
and, at given output and factor prices, the time cost of children affects factor
demands only through $h_t$,
\[
\frac{\partial k_t}{\partial\kappa\,}
=\frac{\partial k_t}{\partial h_t}\,\frac{\partial h_t}{\partial\kappa\,},
\qquad
\frac{\partial a_t}{\partial\kappa\,}
=\frac{\partial a_t}{\partial h_t}\,\frac{\partial h_t}{\partial\kappa\,}.
\]
In particular, the labor-scarcity effect vanishes
($\partial h_t/\partial\kappa\,=\partial k_t/\partial\kappa\,=\partial a_t/\partial\kappa\,=0$)
if and only if fertility demand is unit-elastic, $\varepsilon_t=-1$; it is
negative when demand is inelastic ($\varepsilon_t>-1$) and positive when demand is
elastic ($\varepsilon_t<-1$). The logarithmic specification is the knife-edge case
$\varepsilon_t=-1$.
\end{lemma}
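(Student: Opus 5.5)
The proof proceeds in three steps: establish the regularity of the fertility demand, differentiate the time-cost identity, and then chain through the firm's input demands.

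\emph{Step 1: regularity of fertility demand.} With the survival-weighted intertemporal budget \eqref{budgetfinal}, the household's problem reads
\[
\max\; U(c^{w},n)+\beta\gamma_{t+1}V(c^{r})
\quad\text{s.t.}\quad
c^{w}+\kappa w_t\,n+\frac{\gamma_{t+1}}{R_{t+1}}\,c^{r}\le \Omega_t .
\]
This is a standard consumer problem with three goods and prices $(1,\kappa w_t,\gamma_{t+1}/R_{t+1})$. Under interiority and a nonsingular bordered Hessian (the invertibility of $D_\Gamma F$ maintained in Section~2), the implicit function theorem applied to the first-order system $F(\Gamma_t)=0$ shows that $n_t$ is a $C^1$ function of $(\kappa w_t,\Omega_t,\gamma_{t+1},R_{t+1})$. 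The argument is the one used in Proposition~\ref{proplonge}, with the parameter $\kappa$ in place of $\gamma_{t+1}$. Because $\kappa$ enters the problem only through the child price $\kappa w_t$ (holding $w_t$ and $\Omega_t$ fixed), $\varepsilon_t$ is exactly the own-price elasticity of child demand.

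\emph{Step 2: the time-cost identities.} By the product rule,
\[
\frac{\partial(\kappa n_t)}{\partial\kappa}=n_t+\kappa\,\frac{\partial n_t}{\partial\kappa}=n_t(1+\varepsilon_t).
\]
Since $h_t=1-\kappa n_t$, it follows immediately that $\partial h_t/\partial\kappa=-n_t(1+\varepsilon_t)$. Because $n_t>0$, this derivative is zero, negative, or positive according as $\varepsilon_t=-1$, $\varepsilon_t>-1$, or $\varepsilon_t<-1$.

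\emph{Step 3: transmission to factor demands.} At given output and factor prices, the firm's input demands from the cost-minimization problem of Section~2 depend on $\kappa$ only through the labor input. The parameter $\kappa$ appears neither in the technology nor in $(w_t,R^k_t,R^a_t)$. Labor-market clearing then fixes $h_t$ from the household side. Writing $k_t=k_t(h_t)$ and $a_t=a_t(h_t)$ as the remaining conditional demands, the chain rule gives the two displayed formulas. Consequently $\partial k_t/\partial\kappa$ and $\partial a_t/\partial\kappa$ vanish whenever $\partial h_t/\partial\kappa$ does. For the converse direction of the ``if and only if'', I would need $\partial k_t/\partial h_t\neq0$ and $\partial a_t/\partial h_t\neq0$. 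These I would take from the restricted cost-minimization analysis underlying Proposition~\ref{prop3}, where the inputs move strictly along the isoquant.

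\emph{Step 4: the logarithmic knife-edge.} Finally, I verify the logarithmic case using the closed form of Remark~\ref{rem:closedform}:
\[
n_t=\frac{\Omega_t}{\kappa w_t(2+\beta\gamma_{t+1})}.
\]
Here $n_t\propto\kappa^{-1}$, so $\varepsilon_t=-1$ exactly, and $\kappa n_t=\Omega_t/[w_t(2+\beta\gamma_{t+1})]$ is independent of $\kappa$.

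\emph{Main obstacle.} I expect the delicate point to be Step 3's conditional-demand interpretation. Depending on which variable is held fixed, $h_t$ is either a constrained input (as above) or an endogenous choice. If it is an endogenous choice, $k_t$ would not depend on $h_t$ at all and the derivatives would be trivially zero. I would therefore state explicitly that $k_t$ and $a_t$ are evaluated along the output-constant isoquant with $h_t$ supplied by market clearing. I would also keep $\Omega_t=w_t+d_t$ independent of $\kappa$, which holds at fixed prices, so that no income term contaminates $\varepsilon_t$.
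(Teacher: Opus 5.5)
Your proposal is correct and follows essentially the paper's own route. Regularity comes from the implicit function theorem applied to the nonsingular bordered Hessian of the first-order system, $\partial(\kappa n_t)/\partial\kappa=n_t(1+\varepsilon_t)$ follows from the product rule, and the factor-demand result is a chain rule through conditional demands; the paper invokes Shephard's lemma at that last step, but your restricted, $h_t$-given reading of those demands is the more coherent one, and your closed-form check of $\varepsilon_t=-1$ in the log case is a useful explicit addition. The only superfluous piece is your plan for the converse of the ``if and only if'': ``vanishes'' means all three derivatives are zero, which includes $\partial h_t/\partial\kappa=-n_t(1+\varepsilon_t)=0$, and with $n_t>0$ this already forces $\varepsilon_t=-1$, so you need not establish $\partial k_t/\partial h_t,\partial a_t/\partial h_t\neq0$.
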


\begin{proof}
\textbf{Standing assumptions.}
Let $U\in C^{2}(\mathbb{R}_{++}^{2})$ and $V\in C^{2}(\mathbb{R}_{++})$ satisfy
$U_c,U_n>0$, $V'>0$, the Hessian $D^{2}U$ negative definite, and $V''<0$, so that
the objective
$W(c^{w},n,c^{r})\equiv U(c^{w},n)+\beta\gamma_{t+1}V(c^{r})$ is strictly concave
on $\mathbb{R}_{++}^{3}$. Assume the Inada conditions
$\lim_{x\to0^{+}}U_c=\lim_{x\to0^{+}}U_n=\lim_{x\to0^{+}}V'=+\infty$, ruling out
boundary optima. Fix $(\Omega_t,w_t,p_t)\in\mathbb{R}_{++}^{3}$ with
$\Omega_t=w_t+d_t$ and $p_t=\gamma_{t+1}/R_{t+1}$.

\textbf{Step 1 (existence, uniqueness, interiority).}
For each $\kappa\,>0$ the budget set
$B(\kappa\,)=\{(c^{w},n,c^{r})\in\mathbb{R}_{+}^{3}:c^{w}+p_tc^{r}+\kappa\, w_t n\le\Omega_t\}$
is compact and convex with nonempty interior. As $W$ is continuous and strictly
concave, it attains a unique maximizer on $B(\kappa\,)$; by the Inada conditions the
maximizer lies in $\operatorname{int}B(\kappa\,)$ and the budget binds. Hence the
optimum is the unique interior point satisfying the first-order conditions, for
some multiplier $\lambda_t>0$:
\begin{equation}
U_c(c_t^{w},n_t)=\lambda_t,\qquad
U_n(c_t^{w},n_t)=\lambda_t\,\kappa\, w_t,\qquad
\beta\gamma_{t+1}V'(c_{t+1}^{r})=\lambda_t\,p_t,
\label{eq:foc}
\end{equation}
together with $c_t^{w}+p_tc_{t+1}^{r}+\kappa\, w_t n_t=\Omega_t$.

\textbf{Step 2 ($C^{1}$ dependence on $\kappa\,$).}
Define $\Psi:\mathbb{R}_{++}^{3}\times\mathbb{R}_{++}\times\mathbb{R}_{++}\to\mathbb{R}^{4}$,
with state $x=(c^{w},n,c^{r},\lambda)$ and parameter $\kappa\,$, by
\[
\Psi(x,\kappa\,)=
\begin{pmatrix}
U_c(c^{w},n)-\lambda\\
U_n(c^{w},n)-\lambda\kappa\, w_t\\
\beta\gamma_{t+1}V'(c^{r})-\lambda p_t\\
\Omega_t-c^{w}-p_tc^{r}-\kappa\, w_t n
\end{pmatrix}.
\]
At the optimum $\Psi(x_t,\kappa\,)=0$. Its Jacobian in $x$ is the bordered Hessian
\[
D_x\Psi(x_t,\kappa\,)=
\begin{pmatrix}
U_{cc} & U_{cn} & 0 & -1\\
U_{nc} & U_{nn} & 0 & -\kappa\, w_t\\
0 & 0 & \beta\gamma_{t+1}V'' & -p_t\\
-1 & -\kappa\, w_t & -p_t & 0
\end{pmatrix}.
\]
Strict concavity of $W$ (so that $D^{2}W$ is negative definite) together with the
nonzero constraint gradient $g\equiv(1,\kappa\, w_t,p_t)^{\top}\neq 0$ implies, by the
standard second-order theory of constrained optimization, that $D^{2}W$ is negative
definite on $\ker g^{\top}$; equivalently the bordered Hessian $D_x\Psi(x_t,\kappa\,)$
is nonsingular. Since $\Psi\in C^{1}$, the implicit function theorem yields a
neighborhood of $\kappa\,$ on which the solution
$x_t(\kappa\,)=(c_t^{w},n_t,c_{t+1}^{r},\lambda_t)(\kappa\,)$ is the unique zero of
$\Psi(\cdot,\kappa\,)$ and is continuously differentiable, with
\begin{equation}
\frac{\mathrm{d}x_t}{\mathrm{d}\kappa\,}
=-\big[D_x\Psi(x_t,\kappa\,)\big]^{-1}\,\partial_\kappa\,\Psi(x_t,\kappa\,),
\qquad
\partial_\kappa\,\Psi=\big(0,\,-\lambda_t w_t,\,0,\,-w_t n_t\big)^{\top}.
\label{eq:ift}
\end{equation}
In particular $\partial n_t/\partial\kappa\,$ exists and is finite, and the
own-price elasticity of fertility demand
$\varepsilon_t\equiv(\kappa\,/n_t)\,\partial n_t/\partial\kappa\,$ is well defined.

\textbf{Step 3 (parenting time and effective labor).}
Let $T_t\equiv\kappa\, n_t$ denote parenting time per worker. By Step~2,
$\kappa\,\mapsto T_t$ is $C^{1}$ and
\[
\frac{\partial T_t}{\partial\kappa\,}
=n_t+\kappa\,\,\frac{\partial n_t}{\partial\kappa\,}
=n_t\Big(1+\frac{\kappa\,}{n_t}\frac{\partial n_t}{\partial\kappa\,}\Big)
=n_t\,(1+\varepsilon_t).
\]
Effective labor supply per worker is $h_t=1-T_t$, whence
$\partial h_t/\partial\kappa\,=-n_t(1+\varepsilon_t)$, and aggregate effective labor
$L_t=h_tN_t^{w}$ ($N_t^{w}$ predetermined) satisfies
$\partial L_t/\partial\kappa\,=-N_t^{w}n_t(1+\varepsilon_t)$.

\textbf{Step 4 (transmission to factor demands).}
At given output $Y_t$ and factor prices $(w_t,R_t^{k},R_t^{a})$, the firm's
conditional factor demands solve
$\min_{k,h,a}\,\{w_th+R_t^{k}k+R_t^{a}a:\,F(k,h,a)=Y_t\}$, where
$F(k,h,a)=k^{\alpha}[h^{\rho}+(\phi a)^{\rho}]^{(1-\alpha)/\rho}$ is, for
$\rho<1$, $C^{2}$ and concave with $DF\neq 0$. The cost function is therefore
$C^{2}$ and, by Shephard's lemma, the conditional demands
$k_t=\hat k(h_t;\cdot)$ and $a_t=\hat a(h_t;\cdot)$ are $C^{1}$ in the labor input
$h_t$. Because $\kappa\,$ enters the firm's program only through $h_t$, the maps
$\kappa\,\mapsto k_t$ and $\kappa\,\mapsto a_t$ are compositions of $C^{1}$ functions,
and the chain rule gives
\[
\frac{\partial k_t}{\partial\kappa\,}
=\frac{\partial\hat k}{\partial h_t}\,\frac{\partial h_t}{\partial\kappa\,}
=-\frac{\partial\hat k}{\partial h_t}\,n_t(1+\varepsilon_t),
\qquad
\frac{\partial a_t}{\partial\kappa\,}
=\frac{\partial\hat a}{\partial h_t}\,\frac{\partial h_t}{\partial\kappa\,}
=-\frac{\partial\hat a}{\partial h_t}\,n_t(1+\varepsilon_t).
\]

\textbf{Step 5 (the unit-elastic knife edge).}
Since $n_t>0$, the common factor $\partial h_t/\partial\kappa\,=-n_t(1+\varepsilon_t)$
vanishes if and only if $\varepsilon_t=-1$, in which case
$\partial k_t/\partial\kappa\,=\partial a_t/\partial\kappa\,=0$ and the time cost of
children has no first-order labor-scarcity effect on factor demands. Otherwise the
sign of each response is that of $-(1+\varepsilon_t)\,\partial\hat x/\partial h_t$
($x\in\{k,a\}$): for inelastic fertility demand ($\varepsilon_t>-1$) the labor
input falls with $\kappa\,$, while for elastic demand ($\varepsilon_t<-1$) it rises.
The logarithmic specification delivers $\varepsilon_t\equiv-1$, the knife-edge case
in which $T_t$, $h_t$, and the factor demands are all invariant to $\kappa\,$. This
establishes the lemma. \qedhere
\end{proof}

\subsection{AI and economic growth: a two-capital decomposition}
\label{appendprop8}
\begin{proposition}[AI and economic growth: a two-capital decomposition]
\label{prop8}
Let $z_t=k_t+a_t$ denote the aggregate capital stock, with the two stocks evolving
as $k_{t+1}=(1-\delta_k)k_t+i_t^{k}$ and $a_{t+1}=(1-\delta_a)a_t+i_t^{a}$,
$\delta_k,\delta_a\in(0,1)$, and $i_t^{k}+i_t^{a}=Y_t-C_t$ (goods-market clearing).
Define the growth rate $g_t\equiv z_{t+1}/z_t-1$. Under competitive factor markets
and an interior equilibrium, the effect of AI capital on growth admits the exact
decomposition
\[
\frac{\partial g_t}{\partial a_t}
=\underbrace{\frac{1}{z_t}\frac{\partial Y_t}{\partial a_t}}_{\text{(i) productivity}}
-\underbrace{\frac{1}{z_t}\frac{\partial C_t}{\partial a_t}}_{\text{(ii) consumption crowding-out}}
-\underbrace{\frac{\delta_k+g_t}{z_t}\frac{\partial k_t}{\partial a_t}}_{\text{(iii) physical-capital dilution}}
-\underbrace{\frac{\delta_a+g_t}{z_t}}_{\text{(iv) AI-capital dilution}}.
\]
The marginal product of AI capital is strictly positive and, under monopolistic
competition, equals its rental scaled by the inverse markup,
$\partial Y_t/\partial a_t=R_t^{a}/\theta_t>0$, where $\theta_t=(\xi-1)/\xi$,
for every $\rho\in(0,1)$. By Proposition~\ref{prop5} (Edgeworth complementarity
$\rho<1-\alpha$ together with an interior investment share), the equilibrium
responses satisfy $\partial C_t/\partial a_t>0$ and $\partial k_t/\partial a_t>0$.
Terms (i)--(iv) thus have signs $(+),(-),(-),(-)$, and $\partial g_t/\partial a_t$
is of indeterminate sign in general. It is strictly positive if and only if the
productivity gain exceeds the combined crowding-out and dilution effects,
\[
\frac{\partial Y_t}{\partial a_t}-\frac{\partial C_t}{\partial a_t}
>(\delta_k+g_t)\,\frac{\partial k_t}{\partial a_t}+(\delta_a+g_t),
\]
or, equivalently,
$\dfrac{\mathrm{d}\ln z_{t+1}}{\mathrm{d}a_t}>\dfrac{\mathrm{d}\ln z_t}{\mathrm{d}a_t}$.
\end{proposition}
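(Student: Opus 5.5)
The approach is direct differentiation of the law of motion of the aggregate stock $z_t=k_t+a_t$. Everything dated $t$ other than $a_t$ (that is, $k_t$, $Y_t$, $C_t$) is treated as a $C^{1}$ equilibrium function of $a_t$, as in Proposition~\ref{prop4} and Proposition~\ref{prop5}.

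\textbf{Step 1: express $z_{t+1}$ in period-$t$ objects.} Summing the two accumulation equations and using goods-market clearing $i_t^{k}+i_t^{a}=Y_t-C_t$ gives
\[
z_{t+1}=(1-\delta_k)k_t+(1-\delta_a)a_t+Y_t-C_t .
\]
Differentiating,
\[
\frac{\mathrm{d}z_{t+1}}{\mathrm{d}a_t}=(1-\delta_k)\frac{\partial k_t}{\partial a_t}+(1-\delta_a)+\frac{\partial Y_t}{\partial a_t}-\frac{\partial C_t}{\partial a_t},
\qquad
\frac{\mathrm{d}z_t}{\mathrm{d}a_t}=1+\frac{\partial k_t}{\partial a_t}.
\]

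\textbf{Step 2: differentiate the growth rate.} By the quotient rule applied to $g_t=z_{t+1}/z_t-1$, and using $z_{t+1}/z_t=1+g_t$,
\[
\frac{\partial g_t}{\partial a_t}=\frac{1}{z_t}\Big[\frac{\mathrm{d}z_{t+1}}{\mathrm{d}a_t}-(1+g_t)\frac{\mathrm{d}z_t}{\mathrm{d}a_t}\Big].
\]
Substituting Step~1 and collecting terms, the coefficient on $\partial k_t/\partial a_t$ becomes $(1-\delta_k)-(1+g_t)=-(\delta_k+g_t)$. The constant becomes $(1-\delta_a)-(1+g_t)=-(\delta_a+g_t)$. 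This yields exactly the four-term decomposition (i)--(iv).

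The equivalent growth criterion follows from the same bracket. We have $\partial g_t/\partial a_t>0$ if and only if $\mathrm{d}z_{t+1}/\mathrm{d}a_t>(z_{t+1}/z_t)\,\mathrm{d}z_t/\mathrm{d}a_t$. Dividing by $z_{t+1}>0$, this is $\mathrm{d}\ln z_{t+1}/\mathrm{d}a_t>\mathrm{d}\ln z_t/\mathrm{d}a_t$. Rearranging the bracket gives the displayed inequality between the productivity gain net of crowding-out and the two dilution terms.

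\textbf{Step 3: the marginal product and the signs.} For $\partial Y_t/\partial a_t$, I would invoke the intermediate-firm first-order condition $R_t^{a}=\theta_t\,\partial y_t/\partial a_t$ under symmetry, with $\theta_t=(\xi-1)/\xi$. This gives $\partial Y_t/\partial a_t=R_t^{a}/\theta_t$, which is positive for every $\rho\in(0,1)$ because $\phi_t,a_t,K_t,B_t>0$. For the sign of (ii), I would cite Proposition~\ref{prop5} under $\rho<1-\alpha$ and an interior investment share.

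The main obstacle is $\partial k_t/\partial a_t>0$, which Proposition~\ref{prop5} does not literally state. I would derive it from $\partial^{2}F/\partial a\,\partial k>0$ (the Remark after Proposition~\ref{prop3}): a larger AI stock raises $R^{k}_t$, and through the no-arbitrage condition and the $q^{k}$ Euler equation this raises physical investment, so the equilibrium physical-capital response to $a_t$ is positive. If a fully rigorous derivation proves delicate, I would state it as a maintained equilibrium response in the same spirit as the interior-investment proviso.

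Once these signs are in place, terms (i)--(iv) carry signs $(+),(-),(-),(-)$, so $\partial g_t/\partial a_t$ is indeterminate in general. Its sign is then characterized by the criterion already obtained in Step~2.
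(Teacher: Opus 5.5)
Your Steps 1--2 and the marginal-product part of Step 3 are the paper's proof. You sum the accumulation laws under goods-market clearing, apply the quotient rule with $z_{t+1}/z_t=1+g_t$, collect $(1-\delta_x)-(1+g_t)=-(\delta_x+g_t)$, and read $\partial Y_t/\partial a_t=R_t^{a}/\theta_t$ off the firm's first-order condition. Dividing by $z_{t+1}$ instead of $1+g_t$ for the logarithmic criterion makes no difference. The decomposition and the if-and-only-if criterion are algebraic identities, and what you wrote fully establishes them.

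The gap is your proposed derivation of $\partial k_t/\partial a_t>0$. You are right that Proposition~\ref{prop5} does not deliver it, since its proof holds $K_t$ fixed; the paper's proof cites that proposition anyway. Your repair does not work either. $k_t$ is predetermined at date $t$, so a higher $R_t^{k}$ acting through no-arbitrage and the $q^{k}$ Euler equation raises $i_t^{k}$, and hence $k_{t+1}$, not $k_t$.

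The model's own restrictions point the other way. Proposition~\ref{prop3} gives $k'(a)<0$ at constant output. Holding $s_{t-1}^{w}$ and asset prices fixed, the clearing condition $s_{t-1}^{w}=q_{t-1}^{k}k_t+q_{t-1}^{a}a_t$ gives $\partial k_t/\partial a_t=-q_{t-1}^{a}/q_{t-1}^{k}<0$. That is exactly the crowding-out of physical capital shown in the paper's AI-shock impulse responses.

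Your fallback, imposing the sign as a maintained assumption, is therefore the only defensible route. You should also say explicitly that only the sign pattern of terms (i)--(iv) depends on it, not the decomposition or the criterion. Finally, that sign pattern also requires $\delta_k+g_t>0$ and $\delta_a+g_t>0$. The paper's proof maintains these explicitly; you omit them.
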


\begin{proof}
Throughout, the equilibrium objects $K_t,H_t,k_t,a_t,C_t,Y_t$ are treated as
$C^{1}$ functions of $a_t$ on an open neighborhood of the interior equilibrium,
with $z_t=k_t+a_t>0$, so that all partial derivatives below exist and the quotient
rule applies.

\smallskip
\textbf{Step 1 (marginal product and rental).}
With $\Phi_t\coloneqq H_t^{\rho}+(\phi a_t)^{\rho}>0$ and
$Y_t=K_t^{\alpha}\Phi_t^{(1-\alpha)/\rho}$,
\[
\frac{\partial Y_t}{\partial a_t}
=(1-\alpha)\,\phi^{\rho}K_t^{\alpha}\,\Phi_t^{\frac{1-\alpha-\rho}{\rho}}a_t^{\rho-1}>0
\qquad\text{for all }\rho\in(0,1).
\]
Under monopolistic competition the firm equates the rental to the marginal
revenue product, $R_t^{a}=\theta_t\,\partial Y_t/\partial a_t$ with
$\theta_t=(\xi-1)/\xi\in(0,1)$, so $\partial Y_t/\partial a_t=R_t^{a}/\theta_t>0$.

\smallskip
\textbf{Step 2 (variation of next-period capital).}
Summing the two laws of motion,
$z_{t+1}=(1-\delta_k)k_t+(1-\delta_a)a_t+(Y_t-C_t)$, and differentiating with
respect to $a_t$ gives, by linearity and using $\partial a_t/\partial a_t=1$,
\begin{equation}
\frac{\partial z_{t+1}}{\partial a_t}
=\frac{\partial Y_t}{\partial a_t}-\frac{\partial C_t}{\partial a_t}
+(1-\delta_k)\frac{\partial k_t}{\partial a_t}
+(1-\delta_a).
\label{eq:dz}
\end{equation}
By Proposition~\ref{prop5}, $\partial C_t/\partial a_t>0$ and
$\partial k_t/\partial a_t>0$.

\smallskip
\textbf{Step 3 (growth decomposition).}
By the quotient rule applied to $g_t=z_{t+1}/z_t-1$,
\begin{equation}
\frac{\partial g_t}{\partial a_t}
=\frac{1}{z_t}\frac{\partial z_{t+1}}{\partial a_t}
-\frac{z_{t+1}}{z_t^{2}}\frac{\partial z_t}{\partial a_t},
\qquad
\frac{\partial z_t}{\partial a_t}=\frac{\partial k_t}{\partial a_t}+1.
\label{eq:dg}
\end{equation}
Substituting \eqref{eq:dz} into \eqref{eq:dg} and grouping, with
$z_{t+1}=(1+g_t)z_t$,
\[
\frac{1-\delta_x}{z_t}-\frac{z_{t+1}}{z_t^{2}}
=\frac{(1-\delta_x)z_t-z_{t+1}}{z_t^{2}}
=-\frac{\delta_x+g_t}{z_t},
\qquad x\in\{k,a\},
\]
yields the decomposition
\begin{equation}
\frac{\partial g_t}{\partial a_t}
=\frac{1}{z_t}\frac{\partial Y_t}{\partial a_t}
-\frac{1}{z_t}\frac{\partial C_t}{\partial a_t}
-\frac{\delta_k+g_t}{z_t}\frac{\partial k_t}{\partial a_t}
-\frac{\delta_a+g_t}{z_t}.
\label{eq:decomp}
\end{equation}
Under the sign restrictions of Step~2 and $\delta_x+g_t>0$, the four terms of
\eqref{eq:decomp} have signs $(+),(-),(-),(-)$, so $\partial g_t/\partial a_t$ is
not sign-definite. Multiplying \eqref{eq:decomp} by $z_t>0$ gives
\[
\frac{\partial g_t}{\partial a_t}>0
\;\Longleftrightarrow\;
\frac{\partial Y_t}{\partial a_t}-\frac{\partial C_t}{\partial a_t}
>(\delta_k+g_t)\frac{\partial k_t}{\partial a_t}+(\delta_a+g_t).
\]

\smallskip
\textbf{Step 4 (logarithmic form).}
Dividing \eqref{eq:dg} by $z_{t+1}/z_t=1+g_t>0$ preserves the sign and gives
\[
\frac{1}{1+g_t}\frac{\partial g_t}{\partial a_t}
=\frac{1}{z_{t+1}}\frac{\partial z_{t+1}}{\partial a_t}
-\frac{1}{z_t}\frac{\partial z_t}{\partial a_t}
=\frac{\mathrm{d}\ln z_{t+1}}{\mathrm{d}a_t}-\frac{\mathrm{d}\ln z_t}{\mathrm{d}a_t},
\]
so $\partial g_t/\partial a_t>0\iff
\mathrm{d}\ln z_{t+1}/\mathrm{d}a_t>\mathrm{d}\ln z_t/\mathrm{d}a_t$, which
completes the proof.
\end{proof}

\subsection{Proof of Proposition~\ref{prop_robpref}}
\label{app:robpref}
\begin{proof}
At an interior equilibrium the capital market clears,
\[
s_t^{w}(R_t;\Theta)=q_t^{k}k_{t+1}(R_t;\phi_t)+q_t^{a}a_{t+1}(R_t;\phi_t)
\equiv\mathcal{D}_t(R_t;\phi_t),
\]
where $\mathcal{D}_t$ is aggregate capital demand and $\Theta$ collects the
preference parameters. By the firms' first-order conditions and the strict
concavity of the production nest, capital demand is strictly decreasing in the
return, $\partial\mathcal{D}_t/\partial R_t<0$, while desired saving has a
nonnegative slope, $\partial s_t^{w}/\partial R_t\ge0$ (for logarithmic preferences
$s_t^{w}$ is independent of $R_t$). Implicit differentiation of the clearing
condition gives
\[
\frac{\partial R_t}{\partial\phi_t}
=\frac{\partial\mathcal{D}_t/\partial\phi_t}
{\partial s_t^{w}/\partial R_t-\partial\mathcal{D}_t/\partial R_t},
\qquad
\frac{\partial R_t}{\partial\gamma_{t+1}}
=\frac{-\,\partial s_t^{w}/\partial\gamma_{t+1}}
{\partial s_t^{w}/\partial R_t-\partial\mathcal{D}_t/\partial R_t},
\]
with strictly positive denominator. A positive AI shock raises the marginal
products of both stocks (Proposition~\ref{prop5} and the Remark following
Proposition~\ref{prop3}), so $\partial\mathcal{D}_t/\partial\phi_t>0$ and
$\partial R_t/\partial\phi_t>0$. A positive longevity shock raises desired saving,
$\partial s_t^{w}/\partial\gamma_{t+1}>0$ (Proposition~\ref{prop1}), so the second
numerator is negative and $\partial R_t/\partial\gamma_{t+1}<0$. The parameters
$(\theta,\eta,\chi)$ enter only $\partial s_t^{w}/\partial R_t$ and the level of
$s_t^{w}$, hence the magnitude of the responses but not their sign.
\end{proof}

\subsection{Derivation of the consumption-equivalent measure
(Definition~\ref{def:welfare})}
\label{app:cev}
\begin{proof}
Consider the cohort that is young at date $t$, whose equilibrium welfare is
\[
W_t=\log c_t^{w}+\log n_t+\beta\,\gamma_{t+1}\,\log c_{t+1}^{r}.
\]
The consumption-equivalent variation $\omega_t$ is defined as the permanent
proportional change in lifetime consumption that, applied uniformly to the two
consumption goods entering felicity, $c_t^{w}$ and $c_{t+1}^{r}$, reproduces a
given welfare change $\mathrm{d}W_t$. Fertility $n_t$ is a time-allocation choice
rather than a consumption good, and is therefore held fixed in the experiment.
Replacing $c_t^{w}$ by $(1+\omega_t)\,c_t^{w}$ and $c_{t+1}^{r}$ by
$(1+\omega_t)\,c_{t+1}^{r}$ gives the scaled welfare level
\[
W_t(\omega_t)
=\log\!\big[(1+\omega_t)c_t^{w}\big]+\log n_t
+\beta\gamma_{t+1}\,\log\!\big[(1+\omega_t)c_{t+1}^{r}\big].
\]
Applying $\log(ab)=\log a+\log b$ to the first and third terms and collecting the
$\log(1+\omega_t)$ contributions,
\begin{align*}
W_t(\omega_t)
&=\underbrace{\big[\log c_t^{w}+\log n_t+\beta\gamma_{t+1}\log c_{t+1}^{r}\big]}_{=\,W_t}
+\log(1+\omega_t)+\beta\gamma_{t+1}\log(1+\omega_t)\\
&=W_t+\big(1+\beta\gamma_{t+1}\big)\log(1+\omega_t),
\end{align*}
since the scalar $\log(1+\omega_t)$ factors out of both the worker term and the
survival-weighted retiree term, while the fertility term $\log n_t$ is unaffected.
By definition $\omega_t$ equates the welfare gain from this scaling to the
perturbation $\mathrm{d}W_t$:
\[
W_t(\omega_t)-W_t=\big(1+\beta\gamma_{t+1}\big)\log(1+\omega_t)=\mathrm{d}W_t .
\]
Solving for $\omega_t$,
\[
\log(1+\omega_t)=\frac{\mathrm{d}W_t}{1+\beta\gamma_{t+1}}
\;\Longrightarrow\;
1+\omega_t=\exp\!\left(\frac{\mathrm{d}W_t}{1+\beta\gamma_{t+1}}\right)
\;\Longrightarrow\;
\omega_t=\exp\!\left(\frac{\mathrm{d}W_t}{1+\beta\gamma_{t+1}}\right)-1 ,
\]
which is the expression stated in Definition~\ref{def:welfare}. Two observations
clarify its content. First, the coefficient $1+\beta\gamma_{t+1}$ is exactly the
sum of the weights attached to the log-consumption terms in $W_t$, namely unity on
the worker term and $\beta\gamma_{t+1}$ on the retiree term; a permanent uniform
rise in both consumptions therefore raises welfare in proportion to this total
weight. Second, a first-order expansion around $\omega_t=0$ gives
$\omega_t\approx \mathrm{d}W_t/(1+\beta\gamma_{t+1})$, so $\omega_t$ inherits the
sign of $\mathrm{d}W_t$ and the denominator merely converts the utils into
consumption-equivalent units.
\end{proof}

\subsection{Envelope representation of marginal welfare}
\label{app:envelope}
\begin{lemma}[Envelope representation of marginal welfare]
\label{lem:envelope}
Let $\lambda_t>0$ be the multiplier on the household's consolidated budget
constraint $c_t^{w}+(\gamma_{t+1}/R_{t+1})\,c_{t+1}^{r}=w_t(1-\kappa n_t)+d_t$,
and write $h_t=1-\kappa n_t$. Evaluating $W_t$ at the optimal policy, the
partial effects of the household's environment on welfare are
\begin{align}
\frac{\partial W_t}{\partial w_t}        &=\lambda_t h_t,
   \label{eq:env_w}\\[2pt]
\frac{\partial W_t}{\partial d_t}        &=\lambda_t,
   \label{eq:env_d}\\[2pt]
\frac{\partial W_t}{\partial R_{t+1}}    &=\lambda_t\,\frac{\gamma_{t+1}\,c_{t+1}^{r}}{R_{t+1}^{2}}>0,
   \label{eq:env_R}\\[2pt]
\frac{\partial W_t}{\partial \gamma_{t+1}}
   &=\underbrace{\beta\,\log c_{t+1}^{r}}_{\text{value of longer life}}
    -\underbrace{\lambda_t\,\frac{c_{t+1}^{r}}{R_{t+1}}}_{\text{cost of financing retirement}} .
   \label{eq:env_gamma}
\end{align}
\end{lemma}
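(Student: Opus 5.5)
The plan is to apply the envelope theorem to the household's consolidated problem, treating $\theta\equiv(w_t,d_t,R_{t+1},\gamma_{t+1})$ as parameters the household takes as given. First I would write the Lagrangian
\[
\mathcal{L}(c^{w},n,c^{r},\lambda;\theta)=\log c^{w}+\log n+\beta\gamma_{t+1}\log c^{r}
+\lambda\Big[w_t(1-\kappa n)+d_t-c^{w}-\tfrac{\gamma_{t+1}}{R_{t+1}}c^{r}\Big].
\]
Its stationarity conditions are exactly the system $F(\Gamma_t)=\mathbf 0$ of the household block, specialised to logarithmic felicity. I would then invoke Steps~1--2 of the proof of Lemma~\ref{prop7}. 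These give existence, uniqueness and interiority of the optimum, with $\lambda_t>0$ and a binding budget. They also show that the bordered Hessian is nonsingular, so by the implicit function theorem the optimal policy $x_t(\theta)=(c_t^{w},n_t,c_{t+1}^{r},\lambda_t)(\theta)$ is $C^1$ in $\theta$. Those steps are stated for $\kappa$ as the parameter, but the same bordered Hessian governs every element of $\theta$, so the argument carries over unchanged.

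Next I would differentiate the value $W_t(\theta)=W(c_t^{w}(\theta),n_t(\theta),c_{t+1}^{r}(\theta);\gamma_{t+1})$ using the chain rule. Because the budget binds identically in $\theta$, I may add $\lambda_t$ times the constraint to $W_t$ without changing its value. The indirect terms then carry coefficients $\partial\mathcal L/\partial c^{w}$, $\partial\mathcal L/\partial n$, $\partial\mathcal L/\partial c^{r}$ and the constraint itself, and all of these vanish at the optimum. This leaves $\partial W_t/\partial\theta_i=\partial\mathcal L/\partial\theta_i$ evaluated at $x_t(\theta)$.

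The four formulas then follow by direct differentiation of $\mathcal L$:
\begin{itemize}
\item $\partial_{w}\mathcal L=\lambda_t(1-\kappa n_t)=\lambda_t h_t$;
\item $\partial_{d}\mathcal L=\lambda_t$;
\item $\partial_{R}\mathcal L=\lambda_t\gamma_{t+1}c_{t+1}^{r}/R_{t+1}^{2}$, which is positive since $\lambda_t,\gamma_{t+1},c_{t+1}^{r}>0$;
\item $\partial_{\gamma}\mathcal L=\beta\log c_{t+1}^{r}-\lambda_t c_{t+1}^{r}/R_{t+1}$.
\end{itemize}

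The step needing most care is the longevity derivative, since $\gamma_{t+1}$ enters both the objective, through the survival weight, and the constraint, through the actuarially fair price $\gamma_{t+1}/R_{t+1}$. I must make sure the annuity factor has already been consolidated, i.e. that I use the fair-return form $c_{t+1}^{r}=(R_{t+1}/\gamma_{t+1})s_t^{w}$ derived from the bequest recombination. Otherwise extra terms in $\partial\varphi_{t+1}/\partial\gamma_{t+1}$ and in $b_{t+1}$ would appear.

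I would also note explicitly that these are partial effects, holding prices fixed, which is exactly what the envelope argument delivers. The total derivatives in Proposition~\ref{prop_welfdecomp} then follow by chaining them with $\mathrm d w_t/\mathrm d\xi$, $\mathrm d d_t/\mathrm d\xi$, $\mathrm d R_{t+1}/\mathrm d\xi$ and $\mathrm d\gamma_{t+1}/\mathrm d\xi$.
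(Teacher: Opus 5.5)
Your proposal is correct and follows essentially the same route as the paper: form the Lagrangian on the consolidated (actuarially fair) budget constraint, apply the envelope theorem so that $\partial W_t/\partial\theta=\partial\mathcal{L}/\partial\theta$ at the optimum, and differentiate $\mathcal{L}$ directly in $w_t$, $d_t$, $R_{t+1}$, and $\gamma_{t+1}$. The only difference is that you spell out what the paper takes for granted when it invokes the envelope theorem: the $C^1$ dependence of the optimal policy on the parameters (via the bordered-Hessian argument of Lemma~\ref{prop7}) and the reason the indirect terms vanish.
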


\begin{proof}
The household chooses $(c_t^{w},n_t,c_{t+1}^{r})\gg0$ to maximize
$U=\log c_t^{w}+\log n_t+\beta\gamma_{t+1}\log c_{t+1}^{r}$ subject to
$M_t-c_t^{w}-(\gamma_{t+1}/R_{t+1})c_{t+1}^{r}\ge0$, with
$M_t=w_t(1-\kappa n_t)+d_t$. With multiplier $\lambda_t$ and Lagrangian
$\mathcal{L}=U+\lambda_t[\,M_t-c_t^{w}-(\gamma_{t+1}/R_{t+1})c_{t+1}^{r}\,]$, the
objective is strictly concave and the constraint linear, so the interior
first-order conditions
\begin{align*}
\frac{1}{c_t^{w}} &=\lambda_t, \\
\frac{1}{n_t} &=\lambda_t\kappa w_t, \\
\frac{\beta\gamma_{t+1}}{c_{t+1}^{r}} &=\lambda_t\frac{\gamma_{t+1}}{R_{t+1}},
\end{align*}
characterize the unique optimum, with $\lambda_t=1/c_t^{w}>0$. Let
$W_t=\mathcal{L}^{\star}$ denote the optimized value. The envelope theorem yields,
for each $\theta\in\{w_t,d_t,R_{t+1},\gamma_{t+1}\}$,
$\partial W_t/\partial\theta=\partial\mathcal{L}/\partial\theta$ at the optimum:
\begin{align*}
\frac{\partial W_t}{\partial w_t}
   &=\lambda_t\frac{\partial M_t}{\partial w_t}=\lambda_t(1-\kappa n_t)=\lambda_t h_t, \\
\frac{\partial W_t}{\partial d_t} &=\lambda_t, \\
\frac{\partial W_t}{\partial R_{t+1}}
   &=-\lambda_t c_{t+1}^{r}\,\frac{\partial}{\partial R_{t+1}}\!\Big(\frac{\gamma_{t+1}}{R_{t+1}}\Big)
   =\lambda_t\frac{\gamma_{t+1}c_{t+1}^{r}}{R_{t+1}^{2}}>0, \\
\frac{\partial W_t}{\partial\gamma_{t+1}}
   &=\beta\log c_{t+1}^{r}-\lambda_t\frac{c_{t+1}^{r}}{R_{t+1}} .
\end{align*}
Positivity of $\partial W_t/\partial R_{t+1}$ follows from
$\lambda_t,\gamma_{t+1},c_{t+1}^{r}>0$.
\end{proof}

\subsection{Proof of Proposition~\ref{prop_welfdecomp}}
\label{app:welfdecomp}
\begin{proof}
Write the optimized value as
$W_t=\mathcal{V}(w_t,d_t,R_{t+1},\gamma_{t+1})$ (Lemma~\ref{lem:envelope}). For a
disturbance $\xi$ the chain rule gives
\[
\frac{\mathrm{d}W_t}{\mathrm{d}\xi}
=\frac{\partial W_t}{\partial w_t}\frac{\mathrm{d}w_t}{\mathrm{d}\xi}
+\frac{\partial W_t}{\partial d_t}\frac{\mathrm{d}d_t}{\mathrm{d}\xi}
+\frac{\partial W_t}{\partial R_{t+1}}\frac{\mathrm{d}R_{t+1}}{\mathrm{d}\xi}
+\frac{\partial W_t}{\partial\gamma_{t+1}}\frac{\mathrm{d}\gamma_{t+1}}{\mathrm{d}\xi}.
\]
The optimal policies $(c_t^{w},n_t,c_{t+1}^{r})$ also vary with $\xi$, but their
contribution is
$\nabla_{(c^{w},n,c^{r})}\mathcal{L}\cdot(\mathrm{d}c_t^{w},\mathrm{d}n_t,\mathrm{d}c_{t+1}^{r})/\mathrm{d}\xi=0$
because the first-order conditions hold. Substituting the partials of
Lemma~\ref{lem:envelope} delivers the stated decomposition.
\end{proof}

\subsection{Proof of Proposition~\ref{prop_welfsign}}
\label{app:welfsign}
\begin{proof}
\emph{(a)} For $\xi=\phi_t$, $\mathrm{d}\gamma_{t+1}/\mathrm{d}\phi_t=0$, so the
longevity channel vanishes. By Propositions~\ref{prop4} and~\ref{prop5}
(complementarity) $\mathrm{d}w_t/\mathrm{d}\phi_t>0$ and
$\mathrm{d}d_t/\mathrm{d}\phi_t>0$, and by Proposition~\ref{prop_robpref}
$\mathrm{d}R_{t+1}/\mathrm{d}\phi_t>0$. The coefficients
$\lambda_t h_t,\ \lambda_t,\ \lambda_t\gamma_{t+1}c_{t+1}^{r}/R_{t+1}^{2}$ are all
strictly positive, so every surviving term in Proposition~\ref{prop_welfdecomp} is
positive and $\mathrm{d}W_t/\mathrm{d}\phi_t>0$.
\emph{(b)} For the longevity shock $\mathrm{d}\gamma_{t+1}=1$, and the longevity
channel contributes $\beta\log c_{t+1}^{r}-\lambda_t c_{t+1}^{r}/R_{t+1}$. Capital
deepening raises the wage, $\mathrm{d}w_t/\mathrm{d}\gamma_{t+1}>0$, while
$\mathrm{d}R_{t+1}/\mathrm{d}\gamma_{t+1}<0$ (Proposition~\ref{prop_robpref}).
Placing the positive terms (value of a longer life and the wage gain) and the
negative terms (financing cost, return compression, net of any dividend change) on
opposite sides of Proposition~\ref{prop_welfdecomp} gives the stated
necessary-and-sufficient condition; the sign is not determined by theory alone.
\end{proof}

\subsection{Proof of Proposition~\ref{prop_dyneff}}
\label{app:dyneff}
\begin{proof}
On a balanced path aggregate capital and output grow at the gross rate of the
working-age population, $n_t$; the modified golden-rule (Cass--Diamond) comparison
is therefore between $R_t$ and $n_t$, and the equilibrium is dynamically efficient
iff $R_t\ge n_t$ \citep{diamond1965national}. Linearity of the margin gives
$\mathrm{d}(R_t-n_t)/\mathrm{d}\xi=\mathrm{d}R_t/\mathrm{d}\xi-\mathrm{d}n_t/\mathrm{d}\xi$.
For $\xi=\phi_t$, $\mathrm{d}R_t/\mathrm{d}\phi_t>0$
(Proposition~\ref{prop_robpref}) and $\mathrm{d}n_t/\mathrm{d}\phi_t\ge0$
(Proposition~\ref{prop4}); the difference is positive iff
$\mathrm{d}R_t/\mathrm{d}\phi_t>\mathrm{d}n_t/\mathrm{d}\phi_t$. For
$\xi=\gamma_{t+1}$, $\mathrm{d}R_t/\mathrm{d}\gamma_{t+1}<0$
(Proposition~\ref{prop_robpref}) and $\mathrm{d}n_t/\mathrm{d}\gamma_{t+1}<0$
(Proposition~\ref{proplonge}); the difference of two negative numbers is of
ambiguous sign. In both cases the saving-supply force lowers $R_t$ toward $n_t$,
which establishes the claim.
\end{proof}

\section{Additional robustness exercises}
\label{app:robust}

This appendix reports two further sensitivity exercises complementing the
$\sigma$ analysis of Section~\ref{sec:robustness}: variation of the
physical-capital share $\alpha$ (Appendix~\ref{app:robust_alpha}) and of
the persistence of the AI shock $\rho_{\phi}$
(Appendix~\ref{app:robust_rhophi}). Both figures follow the conventions of
Figure~\ref{fig:robust_sigma}.

\subsection{The physical-capital share}
\label{app:robust_alpha}

The capital share $\alpha$ jointly governs the steady-state factor income
distribution and, together with $\sigma$, the location of the
complementarity threshold $\sigma=1/\alpha$. We sweep
$\alpha \in \{0.25,\, 0.30,\, 0.33,\, 0.36,\, 0.40\}$, covering the
conventional range of estimates for advanced economies.

\begin{figure}[!htbp]
\centering
\includegraphics[width=\textwidth,height=0.85\textheight,keepaspectratio]%
                {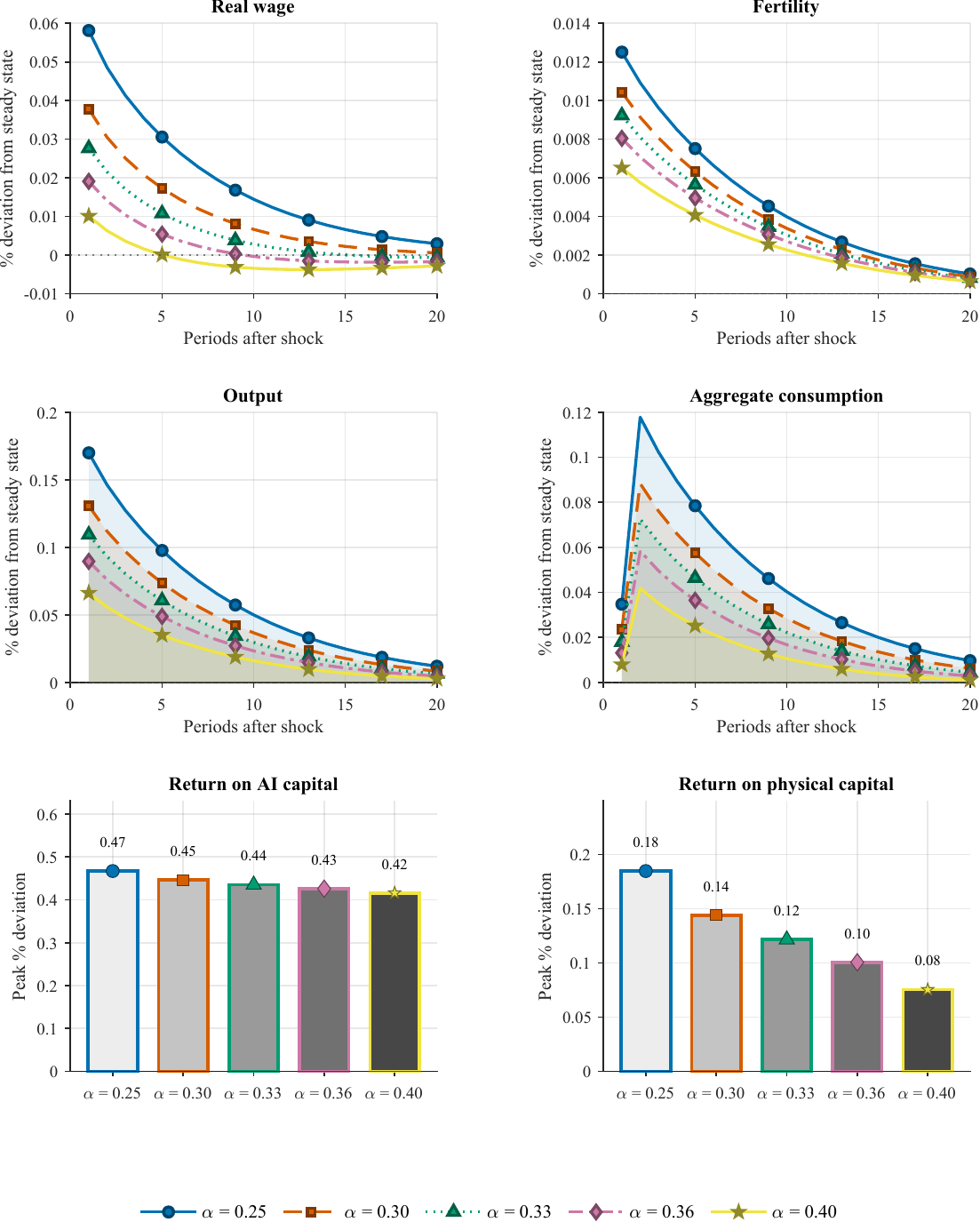}
\caption{Impulse responses to a positive AI shock for alternative values
of the physical-capital share $\alpha$.}
\label{fig:robust_alpha}
\end{figure}

The wage and fertility responses (Figure~\ref{fig:robust_alpha}) remain
positive at impact across the entire grid: with $\sigma=2$ held at its
baseline and the threshold $1/\alpha\in[2.5,\,4.0]$ exceeding 2 everywhere
on the grid, the economy remains in the complementarity regime. A lower
$\alpha$, corresponding to a larger labor share, amplifies the wage and
fertility responses, because the AI-induced productivity gain passes
through more strongly to labor when labor weighs more heavily in
production; a higher $\alpha$ dampens them, and at the upper end of the
grid the wage response turns marginally negative at longer horizons.
Output is likewise decreasing in $\alpha$ at impact, reflecting the
smaller labor base through which the AI-induced productivity gain is
amplified; the rental of AI capital is mildly decreasing in $\alpha$
because physical capital captures a larger share of that gain. None of
these comparative-static patterns alters the qualitative conclusions: the
results are insensitive to plausible variation in the capital share.

\subsection{The persistence of the AI shock}
\label{app:robust_rhophi}

The persistence parameter $\rho_{\phi}$ controls the half-life of the
productivity impulse. We sweep
$\rho_{\phi} \in \{0.50,\, 0.70,\, 0.85,\, 0.95,\, 0.99\}$, spanning
short-lived to highly persistent disturbances.

\begin{figure}[!htbp]
\centering
\includegraphics[width=\textwidth,height=0.85\textheight,keepaspectratio]%
                {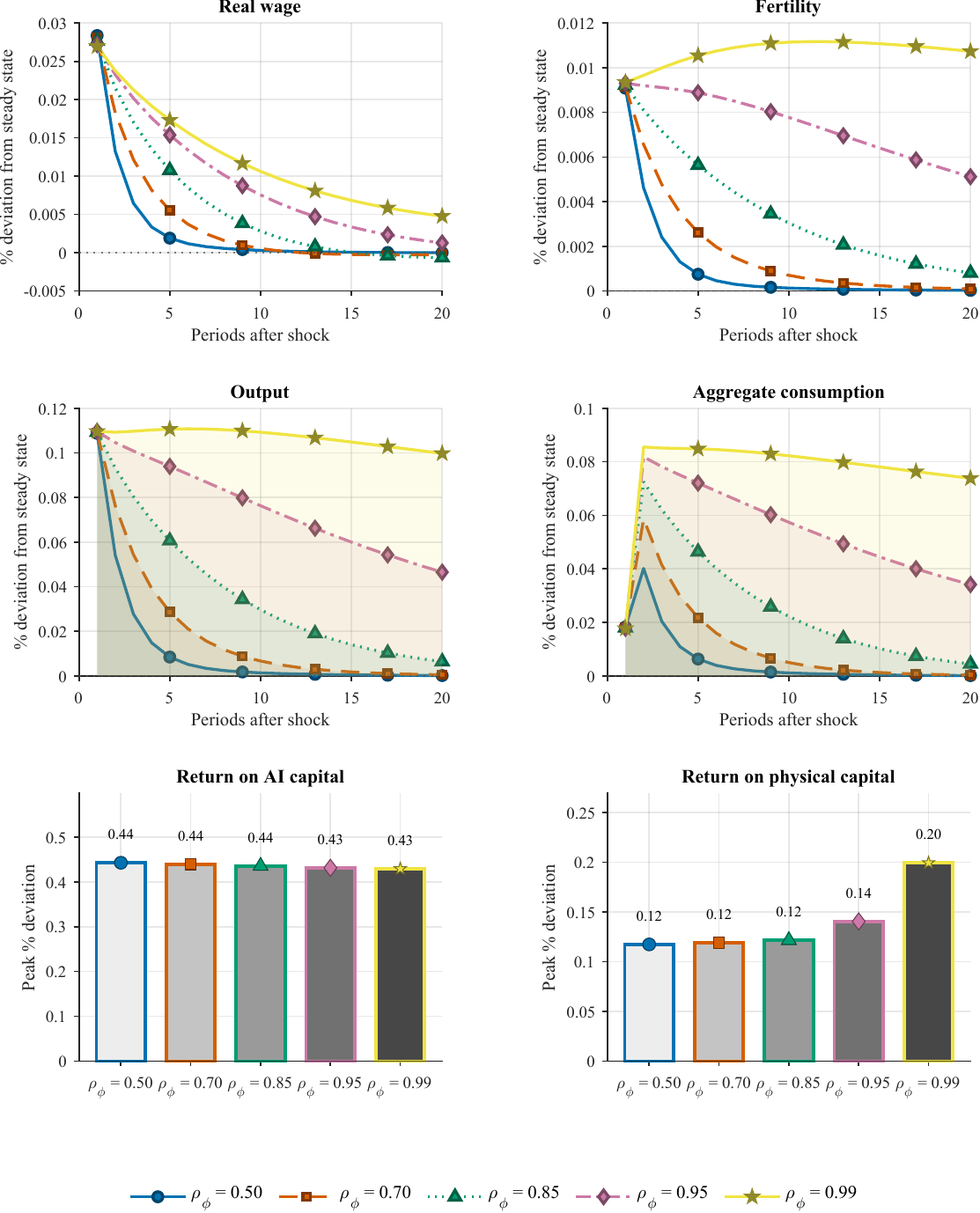}
\caption{Impulse responses to a positive AI shock for alternative values
of the shock persistence $\rho_{\phi}$.}
\label{fig:robust_rhophi}
\end{figure}

Persistence governs magnitudes and the speed of decay, not signs
(Figure~\ref{fig:robust_rhophi}). More persistent shocks generate larger
and longer-lived responses of wages, fertility, output, and consumption,
because a more durable productivity gain elicits a larger permanent-income
response and a stronger investment reaction. The impact jump in the AI
rental is essentially invariant to $\rho_{\phi}$, since it is governed by
the size of the shock rather than its persistence, whereas the peak
response of the physical-capital rental rises with $\rho_{\phi}$ as the
more durable impulse sustains a larger investment boom. As
$\rho_{\phi}\to 1$, the responses approach those of a permanent
technology shift, with wages and fertility settling at elevated levels
for an extended horizon. The qualitative profile, hump-shaped output and
wage paths, a sharp impact response of $R_t^{a}$, and a mild positive
fertility response, is robust across the entire grid.

\section{Tables}
\label{app:tables}
This appendix collects the second-order statistics. 
\begin{table}[H]
\centering
\small
\setlength{\tabcolsep}{5pt}
\renewcommand{\arraystretch}{1.0}
\begin{minipage}[t]{0.56\textwidth}
\centering
\captionof{table}{Selected theoretical moments: mean, standard deviation,
first-order autocorrelation $\rho(1)$, and contemporaneous correlation with
output.}
\label{tab:moments}
\begin{tabular}{lrrrr}
\toprule
Variable & Mean & Std.\ dev. & $\rho(1)$ & $\mathrm{corr}(\cdot,y)$ \\
\midrule
$y$ & 0.4313 & 0.0045 & 0.96 & 1.00 \\
$c$ & 0.2705 & 0.0028 & 0.91 & 0.94 \\
$c^r$ & 0.1385 & 0.0029 & 0.93 & 0.96 \\
$c^w$ & 0.1319 & 0.0004 & 0.65 & -0.32 \\
$s^w$ & 0.1178 & 0.0024 & 0.93 & 0.97 \\
$k$ & 0.0837 & 0.0014 & 0.98 & 0.97 \\
$a$ & 0.0341 & 0.0007 & 0.98 & 0.98 \\
$i^k$ & 0.0904 & 0.0006 & 0.99 & 0.93 \\
$i^a$ & 0.0382 & 0.0005 & 0.99 & 0.92 \\
$R$ & 2.4498 & 0.0151 & 0.88 & -0.78 \\
$q^k$ & 1.0000 & 0.0038 & 0.73 & 0.50 \\
$q^a$ & 1.0000 & 0.0067 & 0.80 & 0.63 \\
$\psi$ & 0.4650 & 0.0141 & 0.93 & 0.95 \\
$h$ & 0.6103 & 0.0024 & 0.90 & 0.92 \\
$n$ & 2.0000 & 0.0125 & 0.90 & -0.92 \\
$w$ & 0.3386 & 0.0016 & 0.98 & 0.98 \\
$R^k$ & 1.5298 & 0.0104 & 0.97 & -0.79 \\
$R^a$ & 1.5698 & 0.0133 & 0.85 & -0.27 \\
\bottomrule
\end{tabular}
\end{minipage}\hfill
\begin{minipage}[t]{0.40\textwidth}
\centering
\captionof{table}{Unconditional variance decomposition (percent of variance
attributable to each shock).}
\label{tab:vardec}
\begin{tabular}{lrr}
\toprule
Variable & $\varepsilon_\phi$ (AI) & $\varepsilon_\gamma$ \\
\midrule
$y$ & 4.21 & 95.79 \\
$c$ & 1.88 & 98.12 \\
$c^r$ & 1.21 & 98.79 \\
$c^w$ & 4.28 & 95.72 \\
$s^w$ & 0.10 & 99.90 \\
$k$ & 0.32 & 99.68 \\
$a$ & 2.06 & 97.94 \\
$i^k$ & 1.06 & 98.94 \\
$i^a$ & 7.53 & 92.47 \\
$R$ & 9.31 & 90.69 \\
$q^k$ & 0.84 & 99.16 \\
$q^a$ & 18.49 & 81.51 \\
$\psi$ & 0.00 & 100.00 \\
$h$ & 0.10 & 99.90 \\
$n$ & 0.10 & 99.90 \\
$w$ & 0.85 & 99.15 \\
$R^k$ & 18.76 & 81.24 \\
$R^a$ & 75.48 & 24.52 \\
\bottomrule
\end{tabular}
\end{minipage}
\end{table}

\begin{landscape}
\begin{table}[!ht]
\centering
\caption{Contemporaneous cross-correlations (lower triangle; the matrix is symmetric, so the upper triangle is omitted).}
\label{tab:corr}
\resizebox{\linewidth}{!}{%
\begin{tabular}{lrrrrrrrrrrrrrrrrrr}
\toprule
 & $y$ & $c$ & $c^r$ & $c^w$ & $s^w$ & $k$ & $a$ & $i^k$ & $i^a$ & $R$ & $q^k$ & $q^a$ & $\psi$ & $h$ & $n$ & $w$ & $R^k$ & $R^a$ \\
\midrule
$y$ & 1.00 &  &  &  &  &  &  &  &  &  &  &  &  &  &  &  &  &  \\
$c$ & 0.94 & 1.00 &  &  &  &  &  &  &  &  &  &  &  &  &  &  &  &  \\
$c^r$ & 0.96 & 0.99 & 1.00 &  &  &  &  &  &  &  &  &  &  &  &  &  &  &  \\
$c^w$ & -0.32 & -0.10 & -0.24 & 1.00 &  &  &  &  &  &  &  &  &  &  &  &  &  &  \\
$s^w$ & 0.97 & 0.87 & 0.93 & -0.53 & 1.00 &  &  &  &  &  &  &  &  &  &  &  &  &  \\
$k$ & 0.97 & 0.94 & 0.96 & -0.34 & 0.97 & 1.00 &  &  &  &  &  &  &  &  &  &  &  &  \\
$a$ & 0.98 & 0.95 & 0.96 & -0.21 & 0.93 & 0.97 & 1.00 &  &  &  &  &  &  &  &  &  &  &  \\
$i^k$ & 0.93 & 0.93 & 0.93 & -0.16 & 0.90 & 0.97 & 0.96 & 1.00 &  &  &  &  &  &  &  &  &  &  \\
$i^a$ & 0.92 & 0.91 & 0.89 & -0.01 & 0.82 & 0.88 & 0.97 & 0.92 & 1.00 &  &  &  &  &  &  &  &  &  \\
$R$ & -0.78 & -0.68 & -0.77 & 0.74 & -0.92 & -0.86 & -0.75 & -0.76 & -0.56 & 1.00 &  &  &  &  &  &  &  &  \\
$q^k$ & 0.50 & 0.30 & 0.43 & -0.97 & 0.68 & 0.50 & 0.38 & 0.31 & 0.18 & -0.82 & 1.00 &  &  &  &  &  &  &  \\
$q^a$ & 0.63 & 0.44 & 0.53 & -0.74 & 0.70 & 0.54 & 0.45 & 0.36 & 0.30 & -0.67 & 0.84 & 1.00 &  &  &  &  &  &  \\
$\psi$ & 0.95 & 0.86 & 0.92 & -0.58 & 1.00 & 0.96 & 0.90 & 0.87 & 0.77 & -0.94 & 0.72 & 0.72 & 1.00 &  &  &  &  &  \\
$h$ & 0.92 & 0.81 & 0.88 & -0.65 & 0.99 & 0.93 & 0.87 & 0.84 & 0.73 & -0.95 & 0.77 & 0.74 & 0.99 & 1.00 &  &  &  &  \\
$n$ & -0.92 & -0.81 & -0.88 & 0.65 & -0.99 & -0.93 & -0.87 & -0.84 & -0.73 & 0.95 & -0.77 & -0.74 & -0.99 & -1.00 & 1.00 &  &  &  \\
$w$ & 0.98 & 0.97 & 0.97 & -0.20 & 0.93 & 0.98 & 0.99 & 0.98 & 0.94 & -0.76 & 0.38 & 0.49 & 0.91 & 0.88 & -0.88 & 1.00 &  &  \\
$R^k$ & -0.79 & -0.81 & -0.84 & 0.32 & -0.84 & -0.92 & -0.83 & -0.92 & -0.70 & 0.87 & -0.43 & -0.32 & -0.85 & -0.83 & 0.83 & -0.86 & 1.00 &  \\
$R^a$ & -0.27 & -0.34 & -0.36 & 0.19 & -0.38 & -0.48 & -0.41 & -0.53 & -0.32 & 0.57 & -0.16 & 0.23 & -0.39 & -0.40 & 0.40 & -0.39 & 0.75 & 1.00 \\
\bottomrule
\end{tabular}}
\end{table}
\end{landscape}
\end{document}